\DeclareMathAlphabet{\mathbbold}{U}{bbold}{m}{n}
\newcommand*{\one}{1}
\newcommand{\quotes}[1]{``#1''}
\newcommand\independent{\perp\kern-6pt\perp}
\newcolumntype{C}{>{\centering\arraybackslash}X}
\newcommand{\cmark}{\ding{51}}
\newcommand{\xmark}{\ding{55}}
\DeclareMathOperator{\Var}{Var}
\DeclareMathOperator{\Bias}{Bias}
\let\E\relax
\DeclareMathOperator{\E}{E}
\let\R\relax
\DeclareMathOperator{\R}{R}
\let\O\relax
\DeclareMathOperator{\O}{O}
\let\P\relax
\DeclareMathOperator{\P}{P}
\let\PP\relax
\DeclareMathOperator{\PP}{P}
\let\GG\relax
\DeclareMathOperator{\GG}{G}
\renewcommand{\algocf@captiontext}[2]{#1\algocf@typo. \AlCapFnt{}#2} 
\def\@algocf@capt@plain{top}
\renewcommand{\algocf@makecaption}[2]{%
  \addtolength{\hsize}{\algomargin}%
  \sbox\@tempboxa{\algocf@captiontext{#1}{#2}}%
  \ifdim\wd\@tempboxa >\hsize
    \hskip .5\algomargin%
    \parbox[t]{\hsize}{\algocf@captiontext{#1}{#2}}
  \else%
    \global\@minipagefalse%
    \hbox to\hsize{\box\@tempboxa}
  \fi%
  \addtolength{\hsize}{-\algomargin}%
}
\begin{document}

\jname{Biometrika}
\jyear{2024}
\jvol{103}
\jnum{1}
\cyear{2024}
\accessdate{Advance Access publication on 31 July 2023}

\received{2 January 2017}
\revised{1 August 2023}

\markboth{Y. Zhang \and Q. Zhao}{Biometrika style}

\title{An average-case sensitivity analysis for unmeasured confounding}

\author{Yao Zhang}
\affil{Department of Statistics and Data Science,  National University of Singapore,\\ 
Block S16, Level 7, 6 Science Drive 2, Singapore
 \email{yaozhang@nus.edu.sg}}

\author{\and     Qingyuan Zhao }
\affil{Statistical Laboratory, Centre for Mathematical Sciences, University of Cambridge,\\ Wilberforce Road, Cambridge CB3 0WB, U.K. \email{qyzhao@statslab.cam.ac.uk}}

\maketitle

\begin{abstract}
Sensitivity analysis for the unconfoundedness assumption is crucial in
observational studies. For this purpose, the marginal sensitivity
model gained popularity recently due to good interpretability and
mathematical properties. However, most existing models only consider a
worst-case parameter that bounds the logit difference between the
observed and full data propensity scores, which may not fully capture
the extent of unmeasured confounding. We propose a new sensitivity model that is parameterized by the second moment of the propensity score
ratio, requiring only the average strength of unmeasured confounding
to be bounded. By characterizing the associated sensitivity analysis
as an optimization problem, we derive sharp closed-form bounds of the
average potential outcomes under our model.  We propose efficient
one-step estimators for these bounds based on the corresponding
efficient influence functions. Additionally, we apply multiplier
bootstrap to construct simultaneous confidence bands to cover the
sensitivity curve that consists of bounds at different values of the sensitivity
parameters. Through a real-data study, we illustrate how this
average-case sensitivity analysis can provide tighter bounds and
facilitate calibration of the results using observed covariates. 
\end{abstract}

\begin{keywords}
Sensitivity analysis; Unmeasured confounding; Causal inference;
Stochastic optimization.
\end{keywords}

\section{Introduction}


Unconfoundedness (also referred to as ignorability) is a critical assumption for causal inference from observational studies
\citep{rosenbaum1983central, rosenbaum1984reducing}. In a canonical
setting with two treatment levels (let $0$ denote control and $1$
denote treated), unconfoundedness is satisfied if a study measures a
rich set of covariates $X$ such that the (real-valued) potential
outcomes under different levels of the treatment, $Y(0)$ and $Y(1)$,
are independent of the treatment $Z$ given $X$:
\begin{assumption}[Unconfoundedness]
\label{assume:xyzu}
    $Y(1), Y(0)\independent Z\mid X.$
\end{assumption}
This assumption essentially means that the treatment in an observational study can be regarded as randomized for the purpose of
identifying and estimating the average treatment effect (ATE) $\E\{Y(1) -
Y(0)\}$ and other causal quantities. To see this, unconfoundedness
means the ``observed data'' and ``full data'' propensity scores,
defined respectively as
\[
  e(x) = \PP\{Z=1\mid X=x\} \quad  \text{and} \quad  e(x,y) =
  \PP\{Z=1\mid X=x,Y(1)=y\},
\]
are always equal: $e(x) = e(x,y)$ for all $x$ and $y$. With this, we
can identify the expectation
of $Y(1)$ given $X$ by inverse-probability weighting (IPW) \citep{horvitz1952generalization}:
\begin{equation}\label{equ:first_formula}
\E\left\{ Y(1) \mid X\right\}  =\E\left\{  \frac{Z Y(1) }{e(X,Y(1))} \
  \bigg| \ X \right\}
= \E\left\{ \frac{ZY}{e(X)} \ \bigg| \ X \right\},
\end{equation}
where the first equality is a change of measure (which requires
$e(X,Y(1)) > 0$) and the second equality uses Assumption
\ref{assume:xyzu}. By taking the averaging of
\eqref{equ:first_formula} over $X$, we can then identify
$\E\{(Y(1))\}$ and similarly $\E\{Y(0)\}$.

To make an observational study credible, it is thus crucial to
reason about the unconfoundedness assumption using practical
context. Although unconfoundedness is untestable using just observational
data, we can use sensitivity analysis to assess whether the
conclusions of an observational study would change significantly under
certain violations of unconfoundedness. Sensitivity analysis for
observational studies can be traced back to
\citet{cornfield1959smoking}; the methodology and results in that
article played an instrumental role in the debate about smoking as a
major cause of lung cancer. Since then, a variety of sensitivity
analysis models and methods have been proposed
\citep{rosenbaum1983assessing,rosenbaum1987sensitivity,robins2000sensitivity,paul2002observation,imbens2003sensitivity,vanderweele2017sensitivity,cinelli2020making,bonvini2021sensitivity}. See
\Cref{sec:related-works} for a brief review of this literature.

This article is concerned with the marginal sensitivity models. Such
models are called \emph{marginal} by \citet{zhao2019sa}, because they
compare the full data propensity score
$e(x, y)$ with its marginal counterpart $e(x)$. Several recent
articles \citep{zhao2019sa,dorn2021sharp,dorn2021doubly} considered the following marginal sensitivity model \citep{tan2019marginal}
indexed by a worst-case parameter $\Gamma \geq 1$ chosen by the user:
\begin{equation}\label{equ:lambda_e}
     \Gamma^{-1}\leq\frac{e(X)/\{1-e(X)\}}{e(X,Y(1))/\{1-e(X,Y(1))\}} \leq \Gamma.
\end{equation}
When studying the population sensitivity analysis problem (basically before \Cref{sect:stat_inference}), we will treat $e(x)$ as known. Putting equation \eqref{equ:lambda_e} in a different way, it bounds the $L^{\infty}$-norm (essential supremum) of the logit difference between $e(x)$ and $e(x,y)$ by $\log(\Gamma)$. This model is similar but
different from another popular sensitivity model proposed by
\citet{rosenbaum1987sensitivity} that bounds the $L^{\infty}$-norm of
the logit difference between $e(x,y_1)$ and $e(x,y_2)$, which is
particularly convenient for matching methods.

Let $h(x,y):=e(x)/e(x,y)$ denote the ratio of the propensity
scores. Equation \eqref{equ:lambda_e} can be alternatively viewed as a
constraint on $e(x,y)$ or $h(x,y)$, as $e(x)$ can be estimated from the
data.
Besides the constraint above, by definition the propensity score ratio $h$ is bounded from below,
\begin{equation}\label{equ:add_1}
h(X,Y(1))\geq e(X),
\end{equation}
and needs to
satisfy an additional ``marginalization constraint'',
 \begin{equation}\label{equ:mean_1}
\E\{h(X,Y)\mid X=x,Z=1\} = 1,
 \end{equation}
under the usual consistency assumption (also called SUTVA), i.e.,
$Y = Y(1)$ almost surely given $Z=1$. 
The worst-case marginal sensitivity model can thus be defined as
\begin{equation}\label{equ:h_infinity}
\mathcal H_{\text{wst}}(\Gamma) = \big\{ h(x,y):  h \text{ satisfies }
\eqref{equ:lambda_e}, \eqref{equ:add_1},
\eqref{equ:mean_1} \text{ almost surely}     \big\}.
\end{equation}

Define the first moment of $Y$ conditional on $X$ and $Z=1$:
\[
 \mu_{1,h}(X) :=  \E \{h(X,Y)Y\mid X, Z=1 \}.
\]
\citet{dorn2021sharp} derived
sharp bounds for $\mathbb E\{Y(1)\}$ by solving the optimization problem
\begin{equation}\label{equ:msm_opt}
\text{minimize or maximize} \quad \E \left\{  \mu_{1,h}(X)  \right\}
\ \text{ subject to } \ h\in \mathcal H_{\text{wst}}(\Gamma),
\end{equation}

The objective function in \eqref{equ:msm_opt} is a reformulation of
the middle quantity in \eqref{equ:first_formula}. Like Rosenbaum's
sensitivity model, this model is sometimes criticized for being too
pessimistic because the parameter $\Gamma$ can be determined by a single
value of $X$. See \Cref{example:one} for an illustration.

In this article, we introduce an alternative sensitivity model that
instead restricts the difference between the full and observed
data propensity scores in an average sense. This model is motivated by
the mean one constraint of in \eqref{equ:mean_1} and
the observation that $h(X,Y(1)) = 1$ almost surely when
unconfoundedness (Assumption \ref{assume:xyzu}) holds.
When unconfoundedness is violated, the variance of $h(X,Y(1))$ is larger than 1.
Define the second moment of $h(X,Y(1))$ conditional on $X, Z=1 $:
\[
 \nu_{1,h}(X) :=  \E \{h^2(X,Y)\mid X, Z=1 \}.
\]
We can restrict the degree of confounding by bounding the expectation of this second moment:
\begin{equation}\label{equ:add_2}
\E \left\{  \nu_{1,h}(X)   \right\} \leq \Sigma.
\end{equation}
The associated average-case sensitivity model is then given by
\[
\mathcal H_{\text{avg}}(\Sigma) = \big\{ h(x,y):  h \text{ satisfies }  \eqref{equ:add_1},\eqref{equ:mean_1}, \eqref{equ:add_2}    \text{ almost surely}     \big\}.
\]
The parameter $\Sigma \geq 1$ plays a similar role to $\Gamma$ in the
worst-case model and is also chosen by the user. Sensitivity analysis
under this model then amounts to solving
\begin{equation}\label{equ:marginal_opt}
\text{minimize or maximize} \quad \mathbb E\{ \mu_{1,h}(X) \}
\ \text{ subject to } \ h\in \mathcal H_{\text{avg}}(\Sigma).
\end{equation}
Compared to the worst-case model, the average-case model depends less on extreme but rare confounding. We next illustrate this via a
numerical example {\color{black}in which an observed and an unobserved confounder interact,
creating a nonlinear effect on the treatment assignment mechanism.
}




\begin{example}
    \label{example:one}
Let $X\sim \text{Uniform}(0,1)$ and $U\sim \text{N}(0,1)$ be independent covariates,
and let the binary treatment $Z$ be drawn from a Bernoulli distribution with probability
\begin{equation}\label{equ:pzxu}
  \P(Z = 1 \mid X, U) =
  b\left(\frac{e^{c X U}}{1+e^{p X U}}\right),
\end{equation}
where $b(\cdot) = \min (\max(\cdot,0.05),0.95)$ clips the probability
at $0.05$ and $0.95$.
{\color{black}
The treatment assignment mechanism in \eqref{equ:pzxu} reflects a continuous
``gated confounding'' structure. As $X$ (e.g., the amount of time, access, or
flexibility an individual has to participate in a job training program)
increases, the unmeasured confounder $U$ (e.g., motivation or engagement)
has a stronger influence on treatment assignment through the term $c X U$.
When $X$ is small, treatment uptake is largely unrelated to $U$, so
$\P(Z=1\mid X,U)$ is close to $1/2$. The parameter $c$ controls this dependence and thus determines the overall
strength of unobserved confounding in the study.
}


The potential and observed outcomes are generated as follows:
\[
  Y(1)  = X- p + (U+U^3),\ Y(0) = Y(1)-5\ \text{ and }\ Y=
  ZY(1) + (1-Z)Y(0).
\]
{\color{black}
In this outcome model, the term $(U+U^3)$ introduces heterogeneity through the
unobserved confounder $U$, while the treatment effect is constant at $5$ for all
units.
}

\begin{figure}[t]
    \centering
\begin{subfigure}{.32\textwidth}
\includegraphics[width=\linewidth]{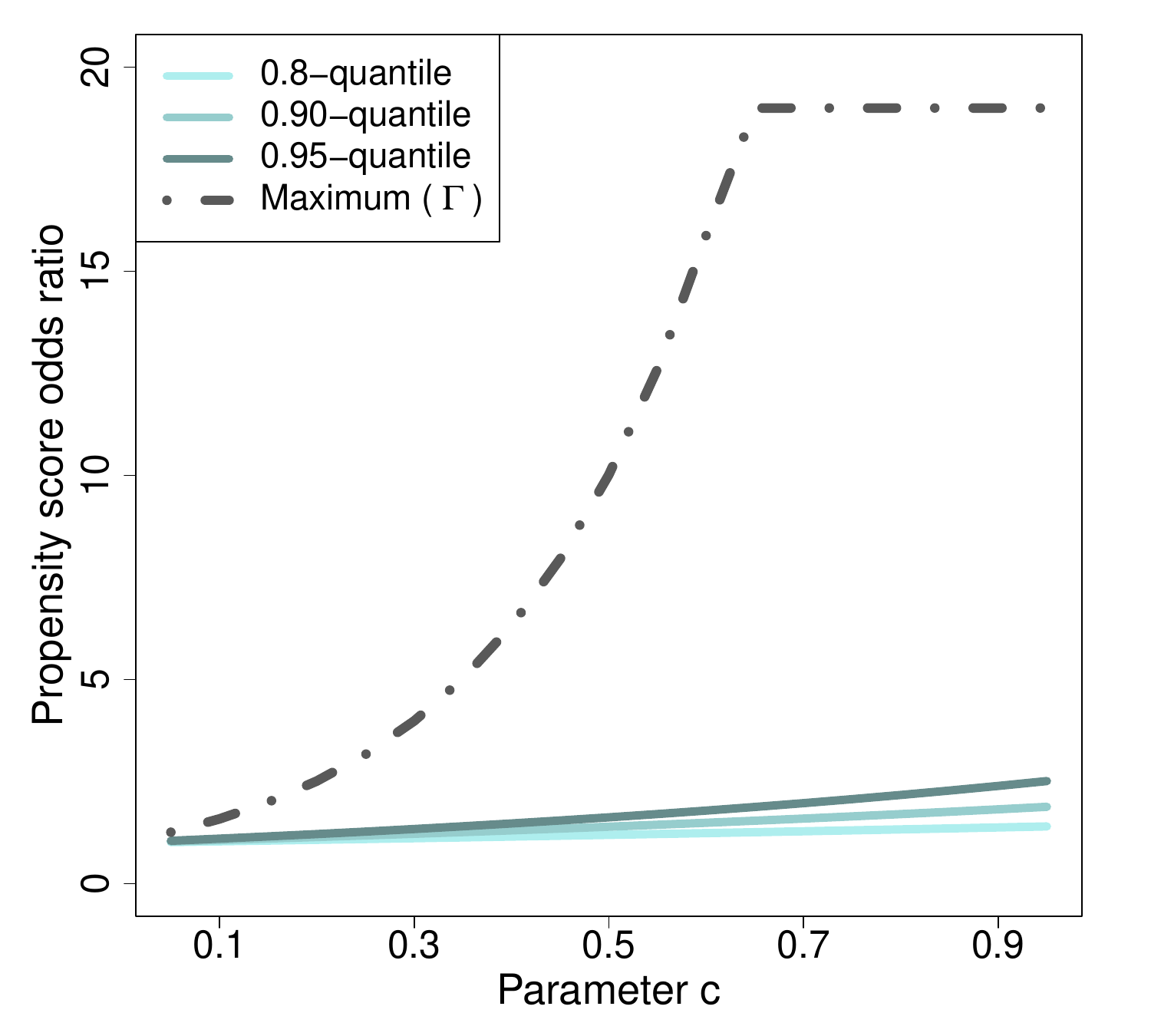}
    \caption{Worst-case PS odd ratios.}
    \label{subfig:toy_exp_infinity}
\end{subfigure}
\begin{subfigure}{.32\textwidth}
\includegraphics[width=\linewidth]{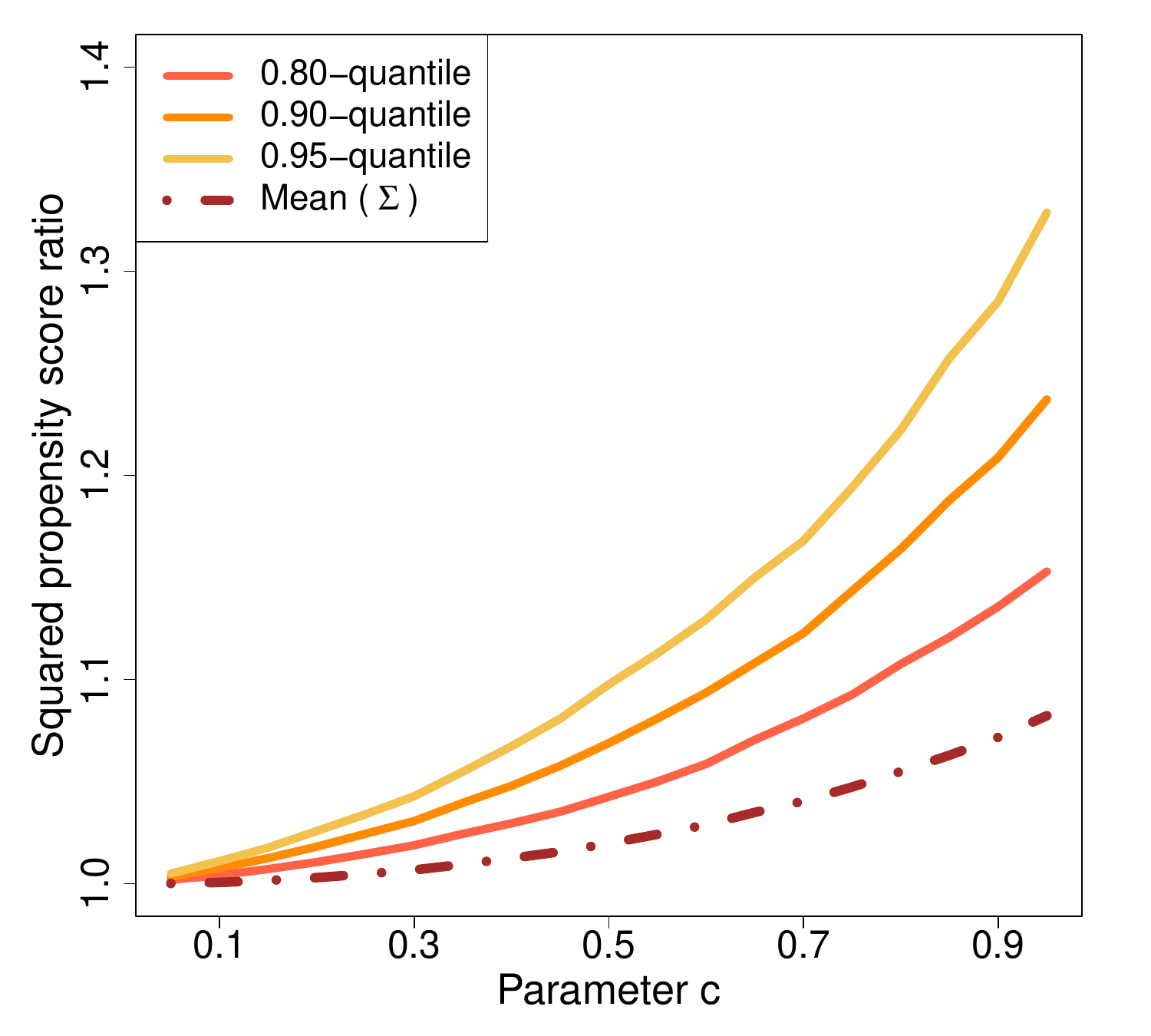}
    \caption{Average-case squared PS ratios.}
    \label{subfig:toy_exp_2}
\end{subfigure}
\begin{subfigure}{.32\textwidth}
\includegraphics[width=\linewidth]{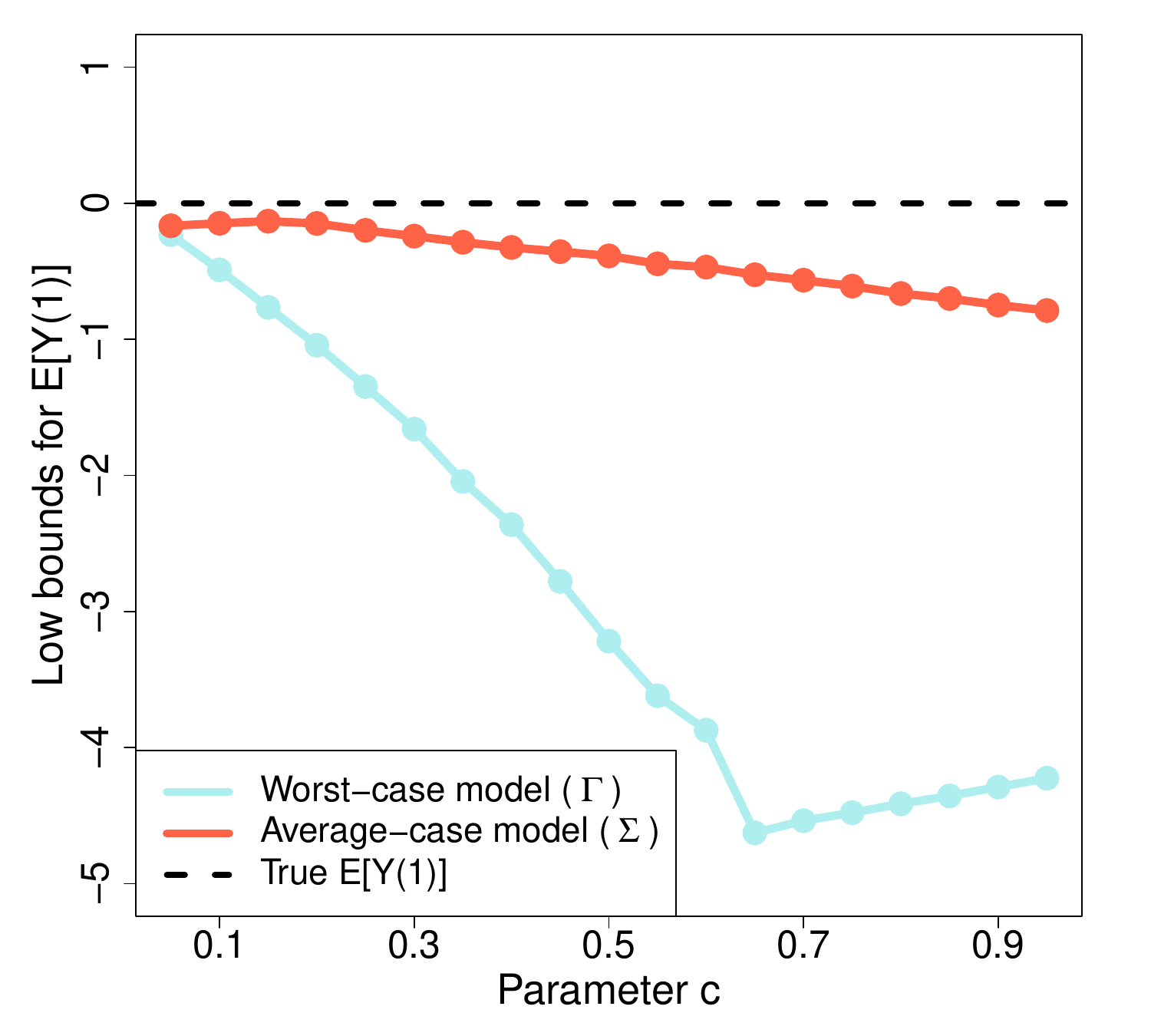}
    \caption{Lower bounds for $\E[Y(1)]$.}
    \label{subfig:toy_exp_bounds}
\end{subfigure}
\caption{
{\color{black}Numerical results for \Cref{example:one}. In panels (a) and (b), the maximum, quantiles, and mean of the (squared) propensity score (PS) ratios increase with the parameter $c$, which controls the strength of the unmeasured confounder $U$. In panel (c), the 
average-case model $\mathcal H_{\text{avg}}(\Sigma)$ yields a much tighter lower bound for $\E[Y(1)]$ than the worst-case model $\mathcal H_{\text{wst}}(\Gamma)$.}}
    \centering
\label{fig:toy_example}
\vspace{-10pt}
\end{figure}

{\color{black}
\Cref{subfig:toy_exp_infinity} shows the quantiles and the maximum of the propensity score odds ratio at varying levels of $c$. The quantiles represent the strength of $U$ for most units, while the maximum corresponds to the sensitivity parameter $\Gamma$ used in the worst-case model $\mathcal H_{\text{wst}}(\Gamma)$. The large gap between them highlights that $\Gamma$ is a pessimistic measure of the strength of $U$ for most units.

In contrast, the mean squared propensity score ratio $\Sigma$ in the average-case model $\mathcal H_{\text{avg}}(\Sigma)$ has a much smaller gap from the quantiles, as shown in \Cref{subfig:toy_exp_2}. This suggests that $\Sigma$ provides a more meaningful summary of unmeasured confounding for most units than the maximum $\Gamma$.

In \Cref{subfig:toy_exp_bounds}, we verify this by solving the optimization problems \eqref{equ:msm_opt} and \eqref{equ:marginal_opt} 
(with $\Gamma$ and $\Sigma$ set to the values in \Cref{subfig:toy_exp_infinity,subfig:toy_exp_2})
to derive lower bounds for $\E[Y(1)]$. By definition, $\E[Y(1)]=0$ for any value of $c\in(0,1)$. The average-case lower bound is much closer to $0$ than the worst-case lower bound. The worst-case lower bound increases near the end because $\Gamma$ remains at $0.95/0.05=19$ after $c\approx 0.6$, as shown in \Cref{subfig:toy_exp_infinity}.
}

\end{example}

{\color{black}
As alluded to in this example, this article generalizes the popular worst-case sensitivity model to a new average-case model, which allows for more optimistic sensitivity analysis.
To this end, we analytically solve the optimization problem \eqref{equ:marginal_opt} involved in the average-case model. Beyond simple differences in model formulation, we show these solutions and the resulting bounds provide a deeper and unifying view of the worst- and average-case models.

For estimation and inference, we derive the efficient influence functions (EIFs) for the bounds and sensitivity values in the worst-case and average-case sensitivity models. These functions can be used to construct efficient estimators of the average treatment effect. Moreover, they enable the use of the multiplier bootstrap \citep{belloni2018uniformly,kennedy2019nonparametric} to construct simultaneous confidence bands for sensitivity curves (i.e., sequences of bounds on the average treatment effect).
We use numerical simulations to study the finite-sample performance of our EIF-based estimators and confidence bands. We compare the average-case model with the worst-case model using a real-world dataset. The bounds from the average-case model admit a more optimistic interpretation when the sensitivity parameter is calibrated using observed covariates.
}

In this article, we let $p_{Y\mid X,Z=z}(\cdot) :=
p_{Y\mid X,Z}(\cdot\mid X,z)$ denote the density of $Y$ conditional on $X$ and $Z = z\in \{0,1\}$.
We use $a\lesssim b$ denote $a\leq  C b$ for some constant $C>0$. We
denote $a\wedge b := \min(a,b)$ and $a\vee b := \max(a,b).$ For a nuisance function $\eta(\cdot)$ and its estimator $\hat{\eta}(\cdot)$, we denote the root mean squared error by 
$\| \hat\eta - \eta \| := (\E\{[\hat\eta(O) - \eta(O)]^2 \mid \hat\eta \})^{1/2}$ 
and the supremum norm error by 
$\| \hat\eta - \eta \|_{\infty} := \sup_{o}|\hat\eta(o) - \eta(o)|$.

\section{Related work}
\label{sec:related-works}

Many sensitivity analysis methods have been developed since
\citet{cornfield1959smoking}. One common approach is to augment the
statistical model with more sensitivity parameters in relation to some unmeasured variable $U$
\citep{rosenbaum1983assessing,imbens2003sensitivity,vanderweele2011bias,vanderweele2017sensitivity,cinelli2020making}.
Although such models are easy to interpret, the sensitivity parameters
are often identifiable (sometimes just partially or weakly) from the observed data and cannot be arbitrarily chosen by the user \citep{scharfstein1999adjusting,gustafson2018sensitivity}. Other authors have also sought to relax the unconfoundedness assumption through
specifying a contrast $\delta$ between the
counterfactual distribution of $Y(z)$ given $X,Z=1-z$ and the factual
distribution of $Y(z)$ given
$X,Z=z$
\citep{robins2000sensitivity,birmingham2003pattern,vansteelandt2006ignorance,scharfstein2021semiparametric},
or between the distributions of $Z$ given $X,Y(z)$ and the distribution of $Z$ given $X$
\citep{scharfstein1999adjusting,gilbert2003sensitivity,gilbert2013sensitivity,franks2019flexible,masten2024assessing}.
The causal effect can then be identified for any fixed value of
$\delta$, but choosing an appropriate $\delta$ remains challenging.

In the pair-matched setting, the methods developed by
\citet{rosenbaum1987sensitivity,paul2002observation} provide simple
and interpretable sensitivity analysis. \citet{yadlowsky2022bounds}
considered Rosenbaum's model in the i.i.d.\ setup, and developed a method based on empirical loss minimization to derive bounds on the (conditional) average treatment effect.
Some recent articles attempted to relax the worst-case bounds in
Rosenbaum's model
\citep{hasegawa2017sensitivity,fogarty2019extended}. \citet{bonvini2021sensitivity} proposed an alternative sensitivity model parameterized by the proportion of confounded observations in the study.

The marginal sensitivity model $\mathcal H_{\text{wst}}(\Gamma)$ is first considered by \citet{tan2019marginal} and has gained popularity partly due to its close connection to distributionally robust optimization \citep{rockafellar2000optimization}. \citet{zhao2019sa} solved the
empirical version of the problem \eqref{equ:msm_opt} by
linear fractional programming, but neglecting the marginalization
constraint \eqref{equ:mean_1} and proposed a percentile bootstrap
method to construct confidence intervals of the
bounds. \citet{dorn2021sharp} derived closed-form solution to
\eqref{equ:msm_opt} and \citet{dorn2021doubly} derived
EIFs of the optimal values. 
{\color{black}
\citet{chernozhukov2022long} proposed a sensitivity model that restricts the strength of unmeasured confounding through coefficients of determination in outcome regression and treatment assignment models. These coefficients are different from the measures of unmeasured confounding used in marginal sensitivity models, including our average-case model.
} 


Our article is most closely related to a new strand of literature that
considers generalizations to the worst-case marginal sensitivity
model. \citet{jin2022sensitivity} proposed to use $f$-divergence
to measure the difference between the distributions of $Y(z)$ given $X,Z=1-z$ and $Y(z)$ given $X,Z=z$, but they still consider a worst-case bound over $X$. They also did not consider
the boundedness constraint in \eqref{equ:add_1}.
\citet{ishikawa2023convex} further parameterized their sensitivity
model by the expected $f$-divergence and solved an optimization problem relaxed by Jensen's inequality.
In another related work,
\citet{huang2022variance} considered the confounding bias in
estimating the average treatment effect on the treated (ATT) under a sensitivity model constrained by 
\[
\Var\{ \omega (X)\mid Z=0\}/\Var\{  \omega (X,U) \mid Z=0\} \leq 1/(1- R^2),
\]
where $  \omega (X) =e(X)/[1-e(X)] $ and $  \omega (X,U) = e(X,U)/[1-e(X,U)]$.
The sensitivity parameter $R^2$ in their model can be understood as the proportion of variation in $W(X,U)$ that is not explained in $W(X)$. \citet{huang2022variance} derived an upper bound on the
confounding bias in this model, but the bound is not sharp.
Thus, all existing methods for generalized marginal sensitivity models relax some constraints to make the optimization problem tractable, resulting in non-sharp bounds. In contrast, our average-case sensitivity analysis problem in \eqref{equ:marginal_opt} takes the form of a quadratic program (over the function $h$) and admits closed-form solutions, as will be shown in \Cref{sect:section_2}. These solutions enable us to derive efficient estimators for our bounds in
\Cref{sect:stat_inference}. When closed-form solutions are unavailable, constructing efficient estimators based on numerical derivatives can introduce approximation errors; We refer readers to
\citet{jordan2022data} and the references therein (Section 3) for further discussion on this issue.

\section{Population solution to the average-case sensitivity
  model}\label{sect:section_2}

\subsection{Lagrangian formulation}\label{sect:second_formulation}

The worst-case optimization problem in \eqref{equ:msm_opt} is tractable, as it can be solved separately for each value of $X$. In comparison, our  optimization problem in \eqref{equ:marginal_opt} is more difficult to solve because the constraint on confounding strength in \eqref{equ:add_2} is marginalized over $X$, while the other
constraints need to hold for all values of $X$; see \Cref{sect:connections} for further discussion.

To address this, we consider two reformulations of
\eqref{equ:marginal_opt}. We will focus on the minimization problem below; the
maximization problem can be addressed similarly, as the objective function in \eqref{equ:marginal_opt} is a linear functional of $h$. Our results require the following assumption.

\begin{assumption}\label{assumption:first}
        The outcome $Y$ is a continuous, real-valued random variable
        with finite variance and a positive probability density function conditional on $X$ and $Z$ almost surely. It satisfies the
          consistency assumption: $Y =
          Y(z)$ if $Z=z$ for any treatment value $z\in \{0,1\}.$
          The propensity score $e(X) = \P(Z = 1 \mid X)$ satisfies the
          strong positivity/overlap assumption: $b \leq e(X)\leq 1-b$ for some constant
          $b\in (0,1/2)$.
\end{assumption}

Our first reformulation considers minimizing the Lagrange function
corresponding to \eqref{equ:marginal_opt}:
\begin{equation}\label{equ:double_ipw}
\begin{split}
\text{minimize}  & \quad \        \frac{1}{2 }\E\left\{ \nu_{1,h}(X)   \right\}+\lambda \E \left \{ \mu_{1,h}(X)  
                   \right \}  \\
 \text{subject to}  & \quad \ \E\left \{ h(X,Y) \mid X, Z = 1\right \} = 1, \\
 & \quad \ h(X,Y)\geq e(X).
\end{split}
\end{equation}
This resembles the classical solution to the portfolio problem in
finance \citep{markowitz7portfolio}, where $\E \left \{ \mu_{1,h}(X)  \right\}$ corresponds to the ``return'' of $h$ and $\E\left\{ \nu_{1,h}(X)   \right\}$ corresponds to the ``risk'' of
                 $h$. 
The next result provides the solution to the Lagrangian problem.

\begin{proposition}\label{prop:double_ipw}
Let Assumption \ref{assumption:first} be given. For a given value $\lambda > 0$, the
optimization problem \eqref{equ:double_ipw} is solved by
\begin{equation}\label{equ:solution_double_1_1}
 h_*(X,Y)  = e(X) + \lambda g(X,Y),~\text{where}~g(X,Y) =
 (\xi_X - Y ) \one_{ \{Y\leq \xi_X       \} },
\end{equation}
and $\xi_X$  is the unique root of the following strictly increasing
function:
\begin{equation}\label{equ:unique}
f_{\lambda,X}(\xi):=  \E \left\{ (\xi - Y ) \one_{ \{Y\leq \xi        \}
  } \mid X, Z = 1 \right\} - \{1-e(X)\}/\lambda.
\end{equation}
\end{proposition}
For the solution $h_*$ in \eqref{equ:solution_double_1_1}, define
the population-level ``sensitivity value'' and optimal value as
\begin{equation}\label{equ:psi_12}
\psi_1(\lambda) := \E\left\{ \nu_{h^*,1}(X) \right\} \ \text{ and }\ \psi_2(\lambda) := \E \left\{ \mu_{h^*,1}(X) \right\}.
\end{equation}
The ``sensitivity curve'' for $\E\{Y(1)\} $ is then
defined as $(\psi_1(\lambda ),\psi_2(\lambda ))$ for a range
of $\lambda$.

The next result confirms that the solution path of
\eqref{equ:double_ipw} with varying $\lambda$ recovers that of
\eqref{equ:marginal_opt} with varying average-case sensitivity
parameter $\Sigma$.

\begin{proposition}\label{prop:same_curve}
For any $\lambda > 0$, the optimal value of
\eqref{equ:marginal_opt} with $\Sigma = \psi_{1}(\lambda)$ is given
by $ \psi_{2}(\lambda).$
\end{proposition}


\subsection{Sensitivity value formulation}\label{sect:svf}

We next consider a different reformulation of
\eqref{equ:marginal_opt}:
\begin{equation}\label{equ:new_ipw}
\begin{split}
\text{minimize} & \quad \       \frac{1}{2} \nu_{1,h}(X) 
                   \\
 \text{subject to} & \quad \  \mu_{1,h}(X) 
                     \leq  \E(Y \mid X, Z = 1) - \theta,  \\
 & \quad \ \E\left\{h(X,Y) \mid X, Z = 1\right\} = 1, \\
 & \quad \ h(X,Y)\geq e(X).
\end{split}
\end{equation}
Heuristically, the parameter $\theta > 0$ bounds the
confounding bias and is chosen by the user. We call
\eqref{equ:new_ipw} the ``sensitivity value formulation'' because it
finds the minimum value of the average-case sensitivity parameter for which the confounding bias of the naive estimator $\E(Y \mid X, Z = 1)$
in estimating $\E\{Y(1) \mid X\}$ is uniformly bounded by $\theta$
across all values of $X$. Note that if $\theta\leq 0$, the problem in \eqref{equ:new_ipw} is solved
trivially by $h_*(X,Y)=1$.

\begin{proposition}\label{prop:solutions_new_1}
Let Assumption \ref{assumption:first} be given. For a given value
$\theta > 0$, the optimization problem \eqref{equ:new_ipw} is solved by
\begin{equation}\label{equ:solution_new_1_1}
 h_*(X,Y)  = e(X) + \lambda_X g(X, Y),
\end{equation}
where $g(X,Y) = (\xi_X - Y ) \one_{ \{Y\leq \xi_X       \} }$,
$\lambda_X = \{1-e(X)\}/\E\{g(X,Y) \mid X, Z = 1\}$, and
$\xi_X$ is the unique root of the following strictly increasing
function:
 \begin{equation}\label{equ:ratio}
f_{\theta,X}( \xi):= \frac{ \E \{(\xi -  Y) Y \cdot \one_{\{Y\leq \xi\}} \mid
  X, Z = 1 \} }{ \E \{ (\xi -  Y) \cdot \one_{\{Y\leq \xi\}} \mid X, Z
  = 1\} } - \E(Y \mid X, Z = 1) + \frac{\theta}{1-e(X)}.
\end{equation}
\end{proposition}

For the solution  $h_{*}$ in
\eqref{equ:solution_new_1_1}, we define the sensitivity value as
\begin{equation}\label{equ:psi_0}
\psi_3(\theta) :=\E[ \nu_{1,h_*}^2(X) ],
\end{equation}
which is the minimum value of the average-case sensitivity parameter, provided that
the constraints in \eqref{equ:new_ipw} hold for all values of $X$. 

As we vary $\theta$, the solution path of
\eqref{equ:new_ipw} will generally be different from that of
\eqref{equ:marginal_opt}. Nevertheless, the sensitivity value
reformulation can be useful if the observational study only involves a
unmeasured confounder $U$ that is independent of $X$ and has an
additive effect on $Y(1)$, or equivalently, if $Y(1) - \E\{Y(1) \mid
X\} \independent X$. In this case, the confounding bias $\E(Y \mid X,
Z = 1) - \E\{Y(1) \mid X\} = \theta$ does not depend on $X$, and
\eqref{equ:new_ipw} and \eqref{equ:marginal_opt} will have the same
solution path. This proves the following proposition.

\begin{proposition}\label{prop:indep}
Suppose $Y(1) - \E\{Y(1) \mid X\} \independent X$.
For any $\theta > 0$, the optimal value of
\eqref{equ:marginal_opt} with the average-case sensitivity parameter
$\Sigma = \psi_{3}(\theta)$ is $\E\left\{ \E\left[Y \mid X,Z=1 \right]\right\}
- \theta$.
\end{proposition}

\subsection{Worst-case vs.\ average-case sensitivity analysis}\label{sect:connections}

We next compare our solution to the reformulations of the
average-case problem in \eqref{equ:marginal_opt} with that to the
worst-case problem in \eqref{prop:quantile_balancing}. \citet{dorn2021sharp} obtained
closed-form solution to \eqref{prop:quantile_balancing} by rewriting
the constraint in \eqref{equ:lambda_e} as
\begin{equation}\label{equ:lambda_h}
 W_{-}(X)  \leq h(X,Y) \leq  W_{+}(X),
\end{equation}
where $W_{-}(X) := (1-1/\Gamma)e(x)+1/\Gamma$ and $W_{+}(X) := (1-\Gamma)e(X)+\Gamma.$
With this, \eqref{equ:msm_opt} becomes a linear program that can be
solved using the Neyman-Pearson Lemma \citep{neyman1933ix} after
the transformation $h(X,Y)$ to $\{h(X,Y) - W_-(X) \}/ \{W_+(x) -
W_-(X) \}$. We restate their result here and give an alternative proof
in the Supplementary Materials for completeness using the
Karush-Kuhn-Tucker condition for optimiality.


\begin{proposition}\label{prop:quantile_balancing}
  The maximization problem in \eqref{equ:msm_opt} is solved by
\begin{equation}\label{equ:h_0}
 h_*(X,Y) =  \begin{cases}
                W_{-}(X),        \text{ if } Y< Q(X),
                   \\
                W_{+}(X),               \text{ if } Y> Q(X),
 \end{cases}
\end{equation}
where $Q(X)$ is the $\Gamma/(1+\Gamma)$-quantile of $Y$ given $X$ and $Z = 1$,
\[
  Q(X) = \inf\left\{y\in \mathcal{Y}: \P(Y \leq y \mid X, Z = 1) \geq
    \Gamma/(1+\Gamma)\right\}.
\]
The minimization problem is solved by \eqref{equ:h_0} after swapping
$W_{-}$ and $W_{+}$ and using the $1/(1+\Gamma)$-quantile
of $Y$ given $X$ and $Z = 1$ as $Q(X)$.
\end{proposition}

\begin{figure}[t]
\centering
\begin{subfigure}[t]{.325\textwidth}
    \centering
    \includegraphics[width=.95\linewidth]{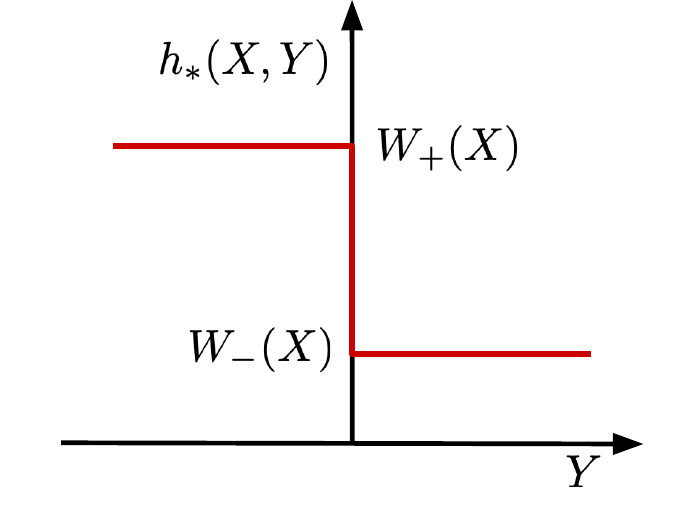}
    \caption{Solution to \eqref{equ:msm_opt}.}
    \label{subfig:sols_1_1}
  \end{subfigure}
  \begin{subfigure}[t]{.325\textwidth}
    \centering
    \includegraphics[width=.95\linewidth]{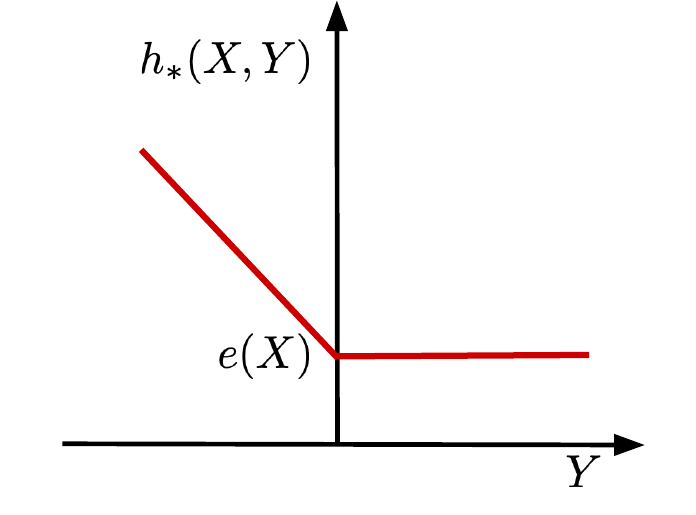}
   \caption{Solution to \eqref{equ:double_ipw} or \eqref{equ:new_ipw}.}
    \label{subfig:sols_1_2}
\end{subfigure}
\begin{subfigure}[t]{.325\textwidth}
    \centering
    \includegraphics[width=.95\linewidth]{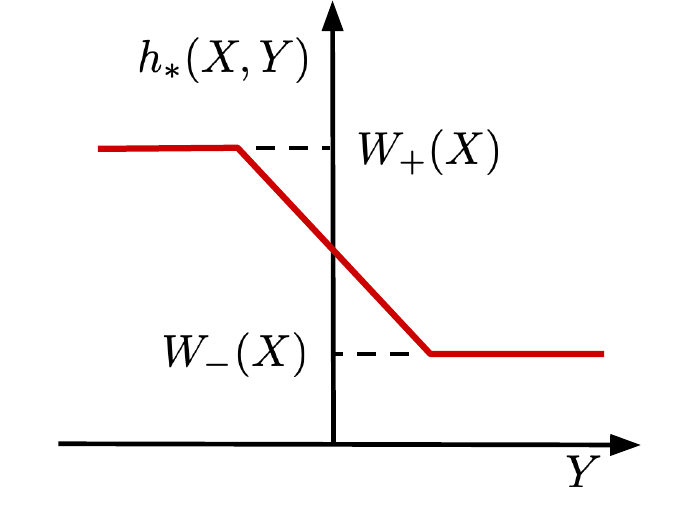}
    \caption{Solution to \eqref{equ:new_ipw} with
        \eqref{equ:lambda_h}.}
    \label{subfig:sols_1_3}
    \end{subfigure}
\caption{Schematic comparison of solutions to different marginal
  sensitivity analysis (minimization) problems. The thresholds
  $W_+(X)$ and $W_-(X)$ are given below \eqref{equ:lambda_h}.
  }
\label{fig:sols_}
\end{figure}


\Cref{fig:sols_} gives a schematic comparison of the
solutions. The optimal solution to the worst-case problem, as given by
\eqref{equ:h_0}, is a piece-wise constant function of $Y$
(\Cref{subfig:sols_1_1}). In
comparison, the optimal solution to the average-case problem, as given by \eqref{equ:solution_double_1_1} or \eqref{equ:solution_new_1_1}, is a
piece-wise linear function of $Y$
(\Cref{subfig:sols_1_2}). Moreover, although we do not pursue this generalization here, we prove in Appendix \ref{appendix:solutions_new_1} that if both average-case and worst-case constraints are included in optimization, the solution is still piecewise constant (\Cref{subfig:sols_1_3}). The last solution is reminiscent of the winsorizing technique
\citep{dixon1960simplified} and the derivative of Huber's loss
\citep{huber1964}.

As mentioned in \Cref{sec:related-works}, existing works in the sensitivity analysis literature sometimes drop or relax certain
constraints in the model, leading to non-sharp bounds.
Since our bounds are based on closed-form solutions, it is not
difficult to verify sharpness of the corresponding bounds following
the idea in Proposition 5 of \citet{dorn2021sharp} and Chapter 5 of
\citet{manski2003partial}.

Before stating the sharpness result formally, let us first explain how
our methods can be applied to get bounds for $\E\{Y(0)\}$ and the ATE
$\E\{Y(1) - Y(0)\}$. To bound $\E\{Y(0)\}$, we use a slightly modified
average-case marginal sensitivity model in which the function
$h'(X,Y(0)) = (1 - e(X)) / (1 - e(X,Y(0)))$ for $e(X,Y(0)) = \P(Z = 1 \mid X,
Y(0))$ (note the abuse of notation) is required to satisfy the
marginalization constraint \eqref{equ:mean_1} and counterparts of \eqref{equ:add_1} and
\eqref{equ:add_2}.
The bounds on
$\E\{Y(0)\}$ can be obtained in a same way as in
\Cref{sect:second_formulation,sect:svf}.
Subsequently, a lower bound for the ATE can be obtained by
subtracting the upper bound of $\E\{Y(0)\}$ from the lower bound
of $\E\{Y(1)\}$; an upper bound for the ATE can be obtained
similarly.

The next result confirms this for our Lagrangian formulation indeed
obtains sharp lower and upper bounds for $\E\{Y(1)\}$ and
$\E\{Y(0)\}$, by showing that any value between these bounds can be
attained. When $Z=0$, $h(X,Y(0))$ refers to the ratio $[1-e(X)]/[1-\mathbb{P}\{Z=1  \mid X,Y(0)\}]$.
A similar result for the sensitivity value formulation can be found in Appendix \ref{appendix:sharpcate}.

\begin{proposition}\label{prop:sharp}
Under Assumption \ref{assumption:first}, for $z\in \{0,1\}$ and any distribution $\PP$ with $h(X,Y(z))$ satisfying the constraints in \eqref{equ:double_ipw},
there exists a distribution $\tilde \PP$ of $(X,Z,Y,Y(z))$ satisfying that
\begin{enumerate}[label=(\arabic*)]
                \item its marginal distribution of
               $(X,Y,Z)$ matches the
                   data distribution under $\PP $.
                \item its propensity score ratios
                  satisfy all constraints in the program
                  \eqref{equ:double_ipw}.
                \item the expectation of $Y(z)$ under $\tilde{\PP}$ is
                  equal to that under $\PP$.
        \end{enumerate}
\end{proposition}

\Cref{prop:sharp} 
confirms that by including all essential constraints in optimization, the average-case sensitivity model leads to sharp bounds for $\E[Y(0)]$ and $\E[Y(1)]$, respectively.
In the proof, we show that the lower bound of $\mathbb E[Y(1)]$ is obtained by lower bounding the true counterfactual outcome by the truncated factual outcome as follows:
\[
\mathbb E[Y(1)\mid X,Z=0] \geq \E[g_*(X,Y)Y\mid X,Z=1],
\]
where $g_*(X,Y) = g(X,Y)/\E[g(X,Y)\mid X,Z=1]$
for the function $g$ defined in  \eqref{equ:solution_double_1_1}. 
Under the worst-case sensitivity model, the sharp bound of $\E[Y(1)]$ is  obtained similarly:
\[
\mathbb E[Y(1)\mid X,Z=0] \geq    \E\big[\Gamma^{\text{sign} \{Q(X)-Y\}   } Y \mid X,Z=1 \big].
\]
\citet[Theorem 2]{dorn2021sharp} showed that the bounds for
$\E[Y(0)]$ and $\E[Y(1)]$ are simultaneously attainable under the worst-case model, thereby yielding sharp bounds for the ATE $ = \E[Y(1) - Y(0)]$. However, this may not hold for the average-case model considered here. To see the difference, note that the weighting function in the worst-case model,
$\Gamma^{\text{sign} \{Q(X)-Y\}   } $, takes on only two extreme values, either $\Gamma$ or $1/\Gamma$. In contrast, the weight function $g_*(X,Y)$ in the average-case model varies with $Y$ continuously, making it more difficult to establish sharpness for the ATE bound.  A similar limitation was noted by \citet[Section 3.4]{yadlowsky2022bounds} in their method based on Rosenbaum’s sensitivity model.

\section{Estimation of bounds and sensitivity
  values}\label{sect:stat_inference}

\subsection{Setup \& Notation}\label{sect:notation_if}

We next introduce our one-step estimators for the population-level bounds and
sensitivity values introduced in the previous section. More specifically, we
will consider estimating $\psi_1(\lambda)$ and $\psi_2(\lambda)$, as
defined in \eqref{equ:psi_12} for the Lagrangian formulation of the
average-case problem, and $\psi_3(\theta)$, as defined in
\eqref{equ:psi_0} for the sensitivity value formulation.
We also include analogous results for the worst-case sensitivity model in \Cref{sect:additional}. After describing the estimators, we show how to use them in multiplier bootstrap to construct confidence bands for sensitivity curves.

In what follows, we consider an observational study with
$n$ units, where the observed data  $O_{[n]} := (O_1, \dotsc,O_n)$ are assumed to be i.i.d. draws from the same distribution $\PP$ as the random variable $O = (X,Y,Z)\in \O$ which satisfies Assumption \ref{assumption:first}.

Using these observations, we can directly estimate our target parameters, e.g. $\psi_1(\lambda)$, as follows: first estimate the function $h_*^2$ in \eqref{equ:psi_12} nonparametrically and then compute a sample-average estimator for its expectation conditional on $Z=1$.
We call this the direct plug-in estimator of $\psi_1(\lambda)$.
Unfortunately, it is known from nonparametric regression
\citep{wasserman2006all} that the estimator of $h_*^2$ would converge at a rate slower than $n^{-1/2}.$  Consequently, the sample-average estimator is unable to achieve the central limit theorem (CLT) for asymptotic inference because its bias goes to 0 slower than $n^{-1/2}.$ One-step estimation is a well-known technique \citep{bickel1993efficient,kennedy2022semiparametric} to address this problem by debiasing estimators and relaxing the required rate conditions for inference.
We refer to \citet{kennedy2022semiparametric} for a brief introduction to one-step estimation.
Here, we directly introduce one-step estimation with some new notation.

In this article, our one-step estimator is written as a sample average of the uncentered efficient influence function (EIF) $\phi_*$ of a target parameter $\psi_* = \psi_*(\beta)$, where $\beta$ is a pre-specified sensitivity parameter.
The EIF $\phi_*(O) = \phi_*(O;\beta,\eta)$ is a real-valued function of $O$, where $\eta$ denote the nuisance parameters involved, such as $h_*^2$ mentioned earlier. We would keep the parameters $\beta$ and/or $\eta$ implicit in our notation when they are irrelevant for our discussion.

Suppose we divide the $n$ observations into $K$ disjoint folds, with every fold consisting of observations from $m=n/k$ units.
Let $\PP_{n}^{(k)}$ denote the empirical measure of the data in the $k$-th fold, and $\hat \eta_{-k}$ denote the estimator of $\eta$ fitted to the data in the other folds, $[K]\setminus \{k\}.$
Let $\hat \eta \equiv\eta_{-K}$ and
$\hat \phi_* (O_i) \equiv \phi_* (O_i;\beta,\hat \eta)$.
The one-step estimator of $\psi_*$ is given by
\begin{equation}\label{equ:one_estimator}
\hat \psi_* := \PP_{n}^{(K)} \hat \phi_*  = m^{-1}\sum_{i=n-m+1}^{n} \hat \phi_* (O_i).
\end{equation}
To remedy the efficiency loss due to sample-splitting,  we can use the popular $K$-fold cross-fitting strategy in \citet{schick1986asymptotically,chernozhukov2018double}. The cross-fitted estimator  of $\hat \psi_*$ is obtained by averaging $K$ cross-fitted one-step estimators:
\begin{equation}\label{equ:cf_estimator}
\hat \psi_{*,\text{cf}} :=  \frac{1}{K}\sum_{k=1}^{K}\hat \psi^{(k)} = \frac{1}{K}\sum_{k=1}^{K} \PP_n^{(k)} \hat \phi_*^{(k)}  = \PP_n \hat \phi_{*,\text{cf}}.
\end{equation}
This cross-fitted estimator can be written as an average of  $\hat \phi_*(O_1; \hat \eta_{-k_1} ),\dotsc, \hat \phi_*(O_n; \hat \eta_{-k_n} ),$ where $k_i$ is the fold that contains $i$-th observation $O_i$, i.e.,  the fold that is not used by $\hat \eta_{-k_i}.$ This average is denoted by $\PP_n \hat \phi_{*,\text{cf}}$ above.
Cross-fitting allows us to make use of all observations in computing a single average.
Under some achievable rate conditions to be introduced below,
$\hat \psi_{*,\text{cf}} $ can be a root-$n$ consistent and asymptotically normal (CAN) estimator,
\begin{equation}\label{equ:can}
\sqrt{n}\big(\hat \psi_{*,\text{cf}} - \psi_*\big) \xrightarrow[]{d} \mathcal{N}\left(0,\sigma^2:=\Var\left[\phi_*(O)\right]\right).
\end{equation}
The uncentered EIF $\phi_*$ has the lowest variance in the CLT, which implies that
$\hat \psi_{*,\text{cf}}$ is an efficient estimator of $\psi$ \citep{bickel1993efficient,tsiatis2006semiparametric}.

\subsection{One-step estimation: average-case model }\label{sect:estimation_l_2}

The construction of one-step and cross-fitted estimators in \eqref{equ:one_estimator} and \eqref{equ:cf_estimator} is the same across all parameters. Thus, our introduction below will
focus on the EIFs and the nuisance parameters. We present the EIFs for
the bounds and sensitivity values under the average-case sensitivity model.

\begin{theorem}\label{prop:eif_small}
The uncentered EIFs of $\psi_1$ and $\psi_2$ in \eqref{equ:psi_12}  are given by
\begin{align*}
\phi_1(O) = &\          \frac{Z}{e(X)}\left\{    2[1-e(X)]\Pi_h(X,Y)+
              h_*^2  (X,Y)- \E\left[h_*^2(X,Y) \mid X, Z = 1\right]
              \right\}          \\[2pt]
& + 2[1-e(X)][Z-e(X)+ \Pi_e(X,Y)]  +\E[h_*^2(X,Y) \mid X, Z = 1] \  \text{ and}    \\[5pt]
 \vspace{5pt}
\phi_2(O) = &\   \frac{Z}{ e(X) }\big\{    \Pi_h (X,Y)  \mathbb
              E  \big[Y \one_{\{Y\leq \xi_X \}} \mid X,
              Z = 1  \big]    - \mathbb E \big[ h_*(X,Y) Y \mid X, Z =
              1 \big]         \\[2pt]
&\  +  h_*(X,Y)Y   \big\}  + \Pi_e (X,Y)  \mathbb E  \big[Y
  \one_{\{Y\leq \xi_X \}} \mid X, Z = 1  \big]  \\[3pt]
&\ +[Z-e(X)]\mathbb E [Y \mid X, Z = 1] +  \mathbb E  \big[ h_*(X,Y) Y
  \mid X, Z = 1\big],
\end{align*}
where $\Pi_e(X,Y)$ and $\Pi_h(X,Y)$ are two mean-zero random variables
defined as 
\[
\Pi_e(X,Y)       = \frac{        e(X)-Z }{ \PP (
    Y\leq  \xi_X \mid X, Z = 1) }
\ \text{ and }\
\Pi_h(X,Y)       = \frac{  1 -  h_*(X,Y)                }{ \mathbb
  \PP ( Y\leq  \xi_X \mid X, Z = 1) }.
\]
\end{theorem}

\begin{theorem}\label{prop:eif_large}
The uncentered EIF of $\psi_3$ in \eqref{equ:psi_0} is given by
\begin{align*}
\phi_3(O) =  & -  \frac{Z}{e(X)}           \Big\{      2\lambda_X
               [1-e(X)] \left(Y - \E(Y \mid X, Z = 1)        \right)  +      \lambda_X^2\big( g^2(X,Y) \\
&+  \E[g^2(X,Y) \mid X, Z = 1] \big)\Big\} + 2\lambda_X[Z-e(X)]
  \E\left[(Y-\xi_X) \one_{\{Y > \xi_X \}} \mid X, Z = 1\right]   \\
&                  + \E[h_*^2(X,Y) \mid X, Z = 1].
\end{align*}
\end{theorem}
To use the EIFs above, we estimate the expectations of $Y, \one_{\{Y\leq \xi_X \}},$
$Y\one_{\{Y\leq \xi_X \}}$ and $Y^2 \one_{\{Y\leq \xi_X \}}$
conditional on $X$ and $Z=1$, using a conditional density estimator $\hat p_{Y\mid X,Z=1}$.
For example, we estimate the second moment term by $\int y^2 \one_{\{y \leq
 \hat  \xi_X\}} \hat p_{Y\mid X,Z=1}(y)  dy.$ In the literature, there are many advanced methods
\citep{chernozhukov2013inference,belloni2019conditional}
and models \citep{meinshausen2006quantile,friedman2020contrast} to estimate conditional densities. For simplicity, here we consider the Nadaraya-Watson (NW) kernel estimator,
\begin{equation}\label{equ:kde}
\hat p_{Y\mid X,Z=1}(y )    = \frac{\sum_{i=1}^{n-m}Z_i K_{1}(X-X_i)K_{2}(y -Y_i)  }{\sum_{j=1}^{n-m}Z_j K_{1}(X-X_j)   },
\end{equation}
where $K_{1}$ and
$K_{2}$ are two continuous and
nonnegative kernel functions.
Alternatively, we can use an additive regression model: given $Z_i=1,$
$Y_i = f(X_i)+\epsilon_i\sigma^2(X_i),$
where $\epsilon_i\sim \mathcal N(0,1).$ We can first fit a model $\hat f$ to estimate $f$.
Based on a mean estimate $\hat f$ and a variance estimate using the residuals, we can express the (truncated) first and second moments of the Gaussian outcome $Y$ given $X,Z=1$ in closed form. We can also combine both techniques by modelling the density of the residual $Y_i - \hat{f}(X_i)$ using the NW estimator; see Section 4 of \citet{hansen2004nonparametric} and Section 3 of \citet{fan1998efficient} for this two-step modelling approach.

To estimate the root $\xi_X$, we first estimate $f_{\lambda,X}( \xi )$ in \eqref{equ:unique} or $f_{\theta,X }(\xi)$ in \eqref{equ:ratio}
using $\hat e$ and the moments estimated by $\hat p_{Y\mid X,Z=1}$.
The estimator is an increasing function of $\xi $ when  $\hat p_{Y\mid X,Z=1}>0.$ We
can find $\hat \xi_{X}$ via the bisection method; the root-finding error decays exponentially fast with the number of iterations. The results below show that one-step estimation can improve the accuracy
of the direct plug-in estimator that converges at a rate slower than $n^{-1/2}.$

\begin{assumption}\label{assumption:nuisance_error}
With probability 1, $\hat e\in (0,1)$ and $\hat p_{Y\mid X,Z=1}:\mathcal{X}\times \mathcal{Y}\rightarrow(0,\infty)$ is a bounded and continuous function. The nuisance estimator  $\hat \eta = (\hat e, \hat p_{Y\mid X,Z=1}  )$ satisfies that
\[ \ \  \| \hat e - e\|_{\infty} = o_{\PP}(n^{-1/4})  \ \text{ and } \
 \| \hat p_{Y\mid X,Z=1} - p_{Y\mid X,Z=1} \|_{\infty} =o_{\PP}(n^{-1/4}).
\]
\end{assumption}
\begin{proposition}\label{thm:double_bias}
Under Assumptions \ref{assumption:first} and \ref{assumption:nuisance_error},  $\Bias(\hat \phi_j \mid \hat \eta) =o_{\PP}(n^{-1/2}) $ for $j=1,2,3.$
\end{proposition}
The bias function in \Cref{thm:double_bias} is given by $\E[ \psi(O_i);\hat \eta) - \psi(O_i;\eta) \mid \hat \eta ]$.
where $O_i$ is an observation from the $K$-th fold and $\hat \eta$ is the nuisance estimator fitted to other folds, as mentioned above \eqref{equ:one_estimator}. When the rate condition above is satisfied, the cross-fitted estimator $\hat \psi_{*,\text{cf}}$ in \eqref{equ:cf_estimator} can obtain the CLT in \eqref{equ:can}, allowing us to define a valid (1-$\alpha$)-confidence interval (CI) for $\psi_{*}(\beta)$ as
\begin{equation}\label{equ:ci_pointwise}
\hat C_{*,\text{cf}}(\beta) = \big[\hat \psi_{*,\text{cf}}(\beta)\pm z_{\frac{\alpha}{2}}\hat \sigma_{*,\text{cf}}(\beta)/\sqrt n \big],
\end{equation}
where $\hat \sigma_{*,\text{cf}}^2(\beta)$ is the cross-fitted estimator of the variance of the EIF $\phi(O;\beta).$

\subsection{One-step estimation: worst-case model}\label{sect:additional}

We next describe our one-step estimation result for
the upper bound for $\E[Y(1)]$ derived under the worst-case sensitivity model in \eqref{equ:h_0}.
This bound can be written as  $\psi= \psi_+  +  \psi_-$ with
\begin{equation}\label{equ:psi}
\psi_+  := \E\big[W_{+}(X) \mu_{+}(X)\big] \quad \text{and} \quad   \psi_-  :=    \E\big[ W_-(X)        \mu_{-}(X) \big],
\end{equation}
For a fixed value of $X$, the quantities $\mu_{+}(X)$ and $\mu_{-}(X) $ are expected shortfalls, defined as
\[
  \mu_{+}(X) = \E\big[Y \one_{\{Y> Q(X)\} } \mid X, Z = 1 \big]
\quad \text{and} \quad \mu_{-}(X) = \E\big[Y \one_{\{Y<
  Q(X)\}} \mid X, Z = 1 \big],
\]
where $Q(X)$ is defined in \Cref{prop:quantile_balancing}.
Denote $\eta  = (e,  Q,\mu_+, \mu_-)$ and their estimators $\hat \eta = (\hat e,\hat Q,\hat{\hat{\mu}}_+,\hat{\hat{\mu}}_- )$ fitted to the first $K-1$ folds of the observations; we estimate $W_{+}$ and $W_{-}$ following their definitions below \eqref{equ:lambda_h}.
We use the following EIFs for estimating $\psi= \psi_+  +  \psi_-$.
\begin{theorem}\label{thm:if_psi_plus}
The uncentered EIF of $\psi$ is given by $\phi(O) = \phi_+(O) +\phi_-(O)$ with
\begin{align*}
\phi_+(O) = &\ \frac{Z W_+(X)}{e(X)}\left[\left(1-\alpha_* - \one_{\{Y>Q(X)\}} \right)Q(X) + Y  \one_{\{Y>Q(X)\}}- \mu_+(X) \right]  \\
&\ + \left[(1-\Gamma)Z+\Gamma \right]\mu_+(X)\quad \text{and}\\
\phi_-(O)  =  &\  \frac{Z W_{-}(X)}{e(X)}\left[ \left(\alpha_* - \one_{\{Y<Q(X)\}} \right)Q(X) + Y  \one_{\{Y<Q(X)\}} - \mu_-(X) \right] \\
&\ + \left[(1-\Gamma^{-1})Z+\Gamma^{-1} \right]\mu_-(X).
\end{align*}
\end{theorem}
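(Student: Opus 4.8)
The plan is to compute the canonical gradient directly from a one-dimensional parametric submodel, exploiting that the observed-data model in \Cref{assumption:first} is otherwise nonparametric, so the tangent space is all of $L^2_0(\mathbb P)$ and any valid gradient I produce is automatically the unique EIF. Let $\{p_t\}$ be a regular submodel through $p_0=p$ with score $s(O)$, $\mathbb E[s]=0$; the defining property is $\frac{d}{dt}\psi(p_t)\big|_{t=0}=\mathbb E[\phi(O)s(O)]$ with $\mathbb E[\phi]=\psi$. Since $\psi=\psi_+ +\psi_-$, I would treat $\psi_+=\mathbb E[W_+(X)\mu_+(X)]$ in full and obtain $\phi_-$ by the symmetric argument (swap $W_+\leftrightarrow W_-$, $\mathbf 1_{\{Y>Q\}}\leftrightarrow \mathbf 1_{\{Y<Q\}}$, and $1-\alpha_*\leftrightarrow\alpha_*$).

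For $\psi_+$ I would separate the three channels through which $p_t$ enters: (i) the marginal law of $X$ in the outer expectation, (ii) the propensity $e(X)$ sitting inside $W_+(X)=(1-\Gamma)e(X)+\Gamma$, and (iii) the conditional truncated mean $\mu_+(X)=\mathbb E_{Y\mid X,1}[Y\mathbf 1_{\{Y>Q(X)\}}]$. Channel (i) contributes $W_+(X)\mu_+(X)$ to the uncentered EIF. For channel (ii), differentiating $e_t(X)=\mathbb E_t[Z\mid X]$ gives $\dot e(X)=\mathbb E[(Z-e(X))s\mid X]$, so its contribution carries the influence piece $(1-\Gamma)\mu_+(X)(Z-e(X))$; adding this to channel (i) collapses to the clean term $[(1-\Gamma)Z+\Gamma]\mu_+(X)$, the second summand of the claimed $\phi_+$ in \Cref{thm:if_psi_plus}.

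The crux is channel (iii), since $\mu_+$ depends on the implicitly defined quantile $Q(X)$ through $F_{Y\mid X,1}(Q(X))=\alpha_*$. Here I would apply the chain rule: a direct effect $\frac{d}{dt}\mathbb E_{t,Y\mid X,1}[Y\mathbf 1_{\{Y>Q\}}]$ at frozen $Q$, plus an indirect effect $\partial_q\mathbb E_{Y\mid X,1}[Y\mathbf 1_{\{Y>q\}}]\big|_{q=Q}\cdot\dot Q(X)$. Differentiating the quantile constraint and invoking the implicit function theorem gives $\dot Q(X)=-\dot F_{Y\mid X,1}(Q)/p_{Y\mid X,1}(Q)$, while $\partial_q\mathbb E_{Y\mid X,1}[Y\mathbf 1_{\{Y>q\}}]=-q\,p_{Y\mid X,1}(q)$. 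The decisive observation, which makes the EIF free of the conditional density, is that the factor $p_{Y\mid X,1}(Q)$ cancels between these two expressions (an envelope-type cancellation); what survives is $Q(X)\dot F_{Y\mid X,1}(Q)=Q(X)\mathbb E[(\mathbf 1_{\{Y\le Q\}}-\alpha_*)s\mid X,1]$. Combining with the direct effect yields $\dot\mu_+(X)=\mathbb E[\,((Y-Q)_+ + (1-\alpha_*)Q - \mu_+)\,s\mid X,1]$, and re-weighting the conditional-on-$(X,Z=1)$ expectation to the full sample via $\mathbb E[f(X)\mathbb E[gs\mid X,1]]=\mathbb E[f(X)\tfrac{Z}{e(X)}gs]$ produces the first summand $\frac{ZW_+(X)}{e(X)}[(1-\alpha_*-\mathbf 1_{\{Y>Q\}})Q+Y\mathbf 1_{\{Y>Q\}}-\mu_+]$ after rewriting $(Y-Q)_+ + (1-\alpha_*)Q=(1-\alpha_*-\mathbf 1_{\{Y>Q\}})Q+Y\mathbf 1_{\{Y>Q\}}$.

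I expect the quantile step to be the main obstacle, both because it requires justifying that $t\mapsto Q_t(X)$ is differentiable (leaning on the positive, continuous conditional density in \Cref{assumption:first} and on interchanging differentiation with integration, e.g.\ by dominated convergence) and because the density cancellation must be verified carefully rather than assumed. The remaining work is routine: summing the three channels, checking $\mathbb E[\phi_+]=\psi_+$ (the bracket has conditional mean zero, so only $\mathbb E[((1-\Gamma)e+\Gamma)\mu_+]=\mathbb E[W_+\mu_+]$ remains), repeating the argument verbatim for $\phi_-$, and noting that in the nonparametric model this gradient is the efficient one.
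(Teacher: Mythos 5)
Your proposal is correct and follows essentially the same route as the paper: the three-channel decomposition (marginal of $X$, propensity inside $W_+$, truncated conditional mean) mirrors the paper's product-rule expansion of $\mathbb{E}[p_X W_+ \mu_+]$, and your quantile step—implicit differentiation of $F_{Y\mid X,1}(Q)=\alpha_*$ with the envelope-type cancellation of $p_{Y\mid X,1}(Q)$—is exactly the content of the paper's Lemma~\ref{lemma:eif_2}. The only cosmetic difference is that the paper derives the candidate expression by IF calculus and then separately verifies it against the definition \eqref{equ:if_definition}, whereas you obtain it in one pass from the submodel derivative, which suffices in the nonparametric model.
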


\citet{dorn2021doubly}  proposed a \quotes{doubly-valid/doubly-sharp} (DVDS) estimator of their sharp bound under a distributional shift
formulation of the worst-case sensitivity model $\mathcal H_{\text{wst}}$ in \eqref{equ:h_infinity}. Assuming a weighted linear outcome quantile model, \citet{tan2022model} developed relaxed population bounds under $\mathcal H_{\text{wst}}$ along with their doubly robust estimators.
In contrast, our EIFs above are derived under the original formulation of $\mathcal H_{\text{wst}}$, involving slightly different nuisance parameters and requiring distinct proofs for the theoretical result below.

\begin{assumption}\label{assumption:consistency}
With probability 1,  $(Q,\mu_{+},\mu_{-},\hat Q,\hat{\hat{\mu}}_{+},\hat{\hat{\mu}}_{-})$ are bounded and $\hat e\in (0,1)$. Furthermore, the errors
$\|Q -\hat Q \|$, $\| \hat{\hat{\mu}}_{+} -\mu_{+} \|$, $\|\hat{\hat{\mu}}_{-} -\mu_{-} \|$ and $\|\hat e -\hat e \|$
are $o_{\PP}(n^{-1/4})$.
\end{assumption}

\begin{proposition}\label{thm:bias}
Under Assumptions \ref{assumption:first} and \ref{assumption:consistency}, $ \Bias(\hat \phi \mid \hat \eta)  = o_{\PP}(n^{-1/2})$.
\end{proposition}
Similar to \Cref{thm:double_bias}, \Cref{thm:bias} implies that when the rate condition above is satisfied,
we can define a valid (1-$\alpha$)-confidence interval (CI) for $\psi$ as in \eqref{equ:ci_pointwise}.

\subsection{Simultaneous confidence bands}\label{sect:uni}

In sensitivity analysis, it is often desirable to report multiple bounds under
different levels of unmeasured confounding.
\citet{belloni2018uniformly, bonvini2021sensitivity} proposed an inference procedure based on multiplier bootstrap (MB) \citep{gine1984some,vaart1996weak} to construct simultaneous confidence bands for sensitivity curves, e.g., sequences of bounds. Here we apply MB to construct confidence bands for the curves,
\begin{equation}\label{equ:curves}
    \begin{split}
        \Psi(\mathcal{D}) &:= \left\{(\Gamma, \psi(\Gamma))      : \Gamma \in \mathcal{D}\subset [1,\infty)\right\}, \\
    \Psi_{12}(\mathcal{D}_{12}) &:=\{(\psi_{1}(\lambda),\psi_2(\lambda)):\lambda\in \mathcal{D}_1\subset [0,\infty)\}, \\
        \Psi_3(\mathcal{D}_{3}) &:=\{(\psi_{3}(\theta),\E\left\{ \E\left[Y \mid X,Z=1 \right]\right\}-\theta) :\theta\in \mathcal{D}_3\subset [0,\infty)\},
    \end{split}
\end{equation}
for some given ranges $\mathcal{D}$'s.
\citet[Theorems 3 and 4]{kennedy2019nonparametric}
proved the validity of MB for influence function-based estimators. We prove similar theoretical results in  \Cref{sect:theory.mb}, verifying the regularity conditions required to apply MB in our setting.

Here we describe the procedure of MB for constructing a confidence band for the sensitivity curve $ \Psi_*(\mathcal{D}_*) := \left\{(\beta, \psi_*(\beta)) : \beta \in \mathcal{D}_*\right\}$.
This notation follows from the one in \Cref{sect:notation_if}.

We first note that taking a union of the CIs
$\hat C_{*,\text{cf}}(\beta)$ in \eqref{equ:ci_pointwise}
for all $\beta\in \mathcal{D}_*$
may not yield a valid confidence band for $ \Psi_*(\mathcal{D}_*)$ because these CIs lack uniform validity.
To address this, MB generalizes the Gaussian approximation based on
the CLT in \eqref{equ:can} as follows. It first approximates the distribution of the supremum of Gaussian process $|\mathbb {G}_*(\beta)| := \left|\phi_*(O;\beta ) - \psi_*(\beta )\right|/ \sigma_*(\beta )$ for all $\beta\in \mathcal D_*$. Then it increases the $z$-score in the pointwise CIs to a critical value $\hat q_{0,\alpha}$ that upper bounds the supremum with probability $1-\alpha$. Formally, $\hat q_{*,\alpha}$ is defined as the $(1-\alpha)$-quantile of the supremum of the multiplier bootstrap process as follows:
\[
\P\bigg\{\sup_{\beta\in \mathcal{D}_*} \left|\sqrt{n}\PP_n \left[ A \left(\hat \phi_{*,\text{cf}}(\beta) -  \hat  \psi_{*,\text{cf}}(\beta)          \right)/\hat \sigma_{*,\text{cf}}(\beta) \right]                   \right|         \geq \hat q_{*,\alpha} \        \Big| \         O_{[n]} \bigg\} = \alpha,
\]
where the average under $\PP_n$ is defined in the same way as the one in \eqref{equ:cf_estimator}, and $A_{[n]}$ are i.i.d Rademacher variables drawn independently of $O_{[n]}$. This equation means that
 after scaling by $\sqrt{n}$, the supremum of the empirical average of $A$ multiplied by the normalized influence function is larger than $\hat q_{0,\alpha} $ with probability $\alpha$.
Taking the union of the CIs in \eqref{equ:ci_pointwise} using this new critical value $\hat q_{0,\alpha}$ leads to an asymptotically valid confidence band for $\Psi_*(\mathcal{D}_*).$ When the sensitivity curves involve two unknown parameters for the same $\beta$, e.g., $\Psi(\mathcal{D}_{12})$, we combine the lower confidence band
$\big(\hat \psi_{1,\text{cf}}(\lambda )- \hat q_{1,\alpha}\hat \sigma_{1,\text{cf}}(\lambda)/\sqrt n,
\hat \psi_{2,\text{cf}}(\lambda )- \hat q_{2,\alpha}\hat \sigma_{2,\text{cf}}(\lambda)/\sqrt n \big)$ for all $\lambda$
to construct a valid lower confidence band for $\Psi_{12}(\mathcal{D}_{12}).$ We can compute the band for $\Psi_{3}(\mathcal{D}_{3})$ similarly.

\section{Simulation study}\label{sect:simulation}

In this section, we examine the finite-sample performance
of our proposed estimators and confidence bands on i.i.d. data ({$n=300,400,500$}) simulated as follows:
{ \begin{align*}
&  U_i \sim \mathcal{N}(0,1), \ X_{i,j}\sim \mathcal{N}_{[-1,1]}(0,1),\ j =1,\dots,10, \\
& e(X_i) = 1/[1+\exp(-X_{i,1} -X_{i,1}^2 ) ],\  Z_i \sim  \text{Bern}(e(X_i)),     \\
&  Y_i(0) = X_{i,1} + U_i + \one_{\{X_{i,1}>0\}}U_i,\ Y_i(1) = Y_i(0) + 0.5,\ Y_i = Z_i Y_i(1) + (1-Z_i)Y_i(0),
\end{align*}}
where $\mathcal{N}_{[-1,1]}(0,1)$ is the standard normal distribution
truncated to $[-1,1].$ 

The true propensity score $e(X_i)$ has a quadratic term, and the outcomes are simulated with heteroscedastic noise. We estimate the propensity score using a logistic regression model without the quadratic term. We estimate the other nuisance parameter $Q$, $\mu_{+}, \mu_-$ and $p_{Y\mid X,Z=1}$ using an additive regression model with Gaussian error, as described below \eqref{equ:kde}.
These estimators are misspecified slightly, which allows us to investigate the advantages of our proposed one-step estimators compared to the direct plug-in estimators without using EIFs. We use 10-fold cross-fitting in both cases.  We consider estimating all the parameters $\psi$ and $\psi_j$ for $j\in [3].$ We also include the one-step estimators using (part of) the true nuisance parameters, which serve as ``oracle'' estimators with better efficiency; the true nuisance parameters are denoted in the second row of \Cref{tab:rmse}. We assess the estimators by their root-mean-squared errors over 500 runs, e.g., the RMSE of the estimator $\hat \psi_{1,\text{cf}}(\lambda )$ is given by
$\text{RMSE}\big\{
\hat \psi_{1,\text{cf}}(\lambda ) \big\} = \big\{\frac{1}{500}\sum_{j=1}^{500} \big(\hat\psi_{1,\text{cf}}^{(j)}(\lambda ) - \psi_1(\lambda ) \big) \big\}^{1/2}.$
All the results are reported in \Cref{tab:rmse} below. Comparing the second and third columns, we can see that one-step estimators converge faster than direct plug-in estimators.
In the other columns, the RMSEs generally drop as we plug in the true nuisance parameter. These results show that the EIFs we derive can improve efficiency, confirming our theoretical results in  Propositions \ref{thm:double_bias} and \ref{thm:bias}.

\begin{table}[t]
    \caption{RMSEs of direct and one-step estimators over 500 simulations.}
    \label{tab:rmse}
\begin{center}
\resizebox{0.58\columnwidth}{!}{
    \begin{tabular}{ |c|c|c|c|c| }
\hline
\rule{0pt}{10pt}  Methods   & Direct  &   \multicolumn{3}{c|}{One-step (Ours)} \\              [1pt]
\hline
\multirow{1}{*}{Nuisance}  & \multirow{1}{*}{ $\hat  e, \hat p_{Y\mid X,Z=1}$}
      & \multirow{1}{*}{$\hat  e, \hat p_{Y\mid X,Z=1}$}     & \multirow{1}{*}{$\hat  e, p_{Y\mid X,Z=1}$ }        & \multirow{1}{*}{$e, p_{Y\mid X,Z=1}$ } \\
\hline
\rule{0pt}{10pt} $n$   &
    \multicolumn{4}{c|}{$\Gamma = 5,\ \psi(\Gamma)= 1.224$} \\     [1.0pt]
\hline
300 &  0.506 &  0.204 &   0.154  &  0.142 \\
400 &  0.491 & 0.185 &   0.126  & 0.122 \\
500 &  0.504 &  0.183  &   0.113 &  0.109 \\
\hline
\rule{0pt}{10pt}      $n$  &
    \multicolumn{4}{c|}{$\lambda  = 1,\ \psi_1(\lambda)=1.509$}\\     [1pt]
\hline
300   & 0.376 & 0.150   & 0.114 & 0.103  \\
400   & 0.385 &  0.142 & 0.100 & 0.093 \\
500 & 0.396 & 0.146  &   0.086 &  0.079 \\
\hline
\rule{0pt}{10pt}     $n$   &
    \multicolumn{4}{c|}{$\lambda  = 1,\ \psi_2(\lambda)=-0.334$}\\     [1pt]
\hline
300   & 0.735 & 0.347  & 0.157 & 0.149  \\
400   & 0.761 &  0.356 & 0.133 & 0.136 \\
500 &  0.769 & 0.351 &   0.121 &  0.120 \\
\hline
\rule{0pt}{10pt}     $n$   &
    \multicolumn{4}{c|}{$\theta  = 0.5,\ \psi_3(\theta)=1.179$}\\     [1pt]
\hline
300   & 0.127 & 0.107   & 0.045 & 0.035 \\
400   & 0.128 & 0.105 & 0.040 & 0.030 \\
500 &  0.129 &  0.104   &  0.034 &  0.027 \\
\hline
    \end{tabular}
 }
    \end{center}
\end{table}

\begin{table}[t]
    \caption{Simultaneous coverage rates with and without multiplier
      bootstrap (MB), where $\hat{q}_{\alpha},\hat{q}_{1,\alpha},\hat{q}_{2,\alpha}$ and $\hat{q}_{3,\alpha}$ are the critical values chosen by MB with 2500 bootstrap samples. The results are averaged over 500 simulations.}
    \label{tab:uniform_cover}
\begin{center}
\resizebox{0.63\columnwidth}{!}{%
    \begin{tabular}{ |c|c|c|c|c|c|c| }
\hline
Domain  &
    \multicolumn{3}{c|}{$\{ \psi(\Gamma) :\Gamma\in \mathcal{D}\}$} &     \multicolumn{3}{c|}{$ \{\psi_1(\lambda): \lambda \in \mathcal{D}_{12} \}$}                                                                                     \\ [1pt]
    \hline
MB    &  \xmark & \multicolumn{2}{c|}{\cmark }  &    \xmark & \multicolumn{2}{c|}{\cmark } \\
\hline
$n $     &  Coverage &  $\hat q_{\alpha}$  &    Coverage & Coverage &  $\hat q_{1,\alpha}$  &    Coverage \\
\hline
300 &  0.884 & 2.454 & 0.964 & 0.968  &2.425 & 0.996   \\
400  & 0.908 & 2.478 & 0.980 & 0.974  &2.447 & 1.000   \\
500  & 0.885 & 2.496 & 0.976 & 0.976  &2.466 & 1.000  \\
\hline
Domain  &
      \multicolumn{3}{c|}{$ \{\psi_2(\lambda): \lambda \in \mathcal{D}_{12} \}$}                       &  \multicolumn{3}{c|}{$\{ \psi(\theta ) :\theta \in \mathcal{D}_3\}$}                                                         \\ [1pt]
    \hline
MB    &  \xmark & \multicolumn{2}{c|}{\cmark }   &    \xmark & \multicolumn{2}{c|}{\cmark } \\
\hline
$n $     &  Coverage &  $\hat q_{2,\alpha}$  &    Coverage & Coverage &  $\hat q_{3,\alpha}$  &    Coverage  \\
\hline
300 &  0.934 & 2.449 & 0.982 & 0.746 & 2.393 &  0.908 \\
400  & 0.878 & 2.468 & 0.978 & 0.780  & 2.454 & 0.928 \\
500  & 0.914 & 2.488 & 0.980 & 0.818  & 2.475 & 0.936 \\
\hline
    \end{tabular}
 }
 \end{center}
\end{table}

Next, we demonstrate the uniform validity of our confidence bands in comparison to point-wise confidence intervals.  We remove the quadratic term and the heteroscedastic noise in the simulation. Then our models are specified correctly.
We consider covering the sensitivity curves $  \Psi(\mathcal{D}),\Psi_{12}(\mathcal{D}_{12})$ and $   \Psi_3(\mathcal{D}_{3})$ in \eqref{equ:curves} for the parameter ranges $\mathcal{D} = \{2,3,\dotsc,10,11\}, \mathcal{D}_{12} = \{0.2, 0.4, \dotsc, 1.8, 2.0\}$
 and  $\mathcal{D}_3= \{0.03, 1.0, \dotsc, 0.27, 0.3\}.$
\Cref{tab:uniform_cover} shows that as the sample size increases, most of the confidence bands using MB can achieve approximately 95\% coverage for the parameters, while the point-wise CIs fail to do so.

\section{Real data study}\label{sect:real_data}

We next compare the sensitivity models on an observational study \citep{zhao2018cross} for estimating the ATE of fish consumption on the blood mercury level.
The outcome variable \quotes{blood mercury} is obtained from the individuals who answered questionnaires about seafood consumption in the National Health and Nutrition Examination Survey (NHANES) 2013-2014. The binary \quotes{treatment} variable indicates if an individual has consumed more than 12 servings of fish or shellfish in the previous month. The study has 234 treated individuals (high consumption) and 873 controls (low consumption), and 8 covariates: gender, age, income, whether income is missing, race, education, ever smoked, and the number of cigarettes smoked last month.

We use the same nuisance estimators as in the last section.  We one-step estimate all the sensitivity curves with 9-fold cross-fitting.
The introduction of our method above focuses on deriving the lower bound for $\mathbb{E}[Y(1)]$.
Fixing the sensitivity parameter at the same value, we can derive the upper bound of $\E[Y(0)]$ as follows.
We first change $Z=1$ to $Z=0$ and $Y$ to $-Y$ in the dataset. Then we apply our proposed estimator for $\E[Y(1)]$ and multiply the estimator by $-1$ at the end. In the average-case analysis, we sum up the treated and control sensitivity values, which measures the total deviation from unconfoundedness.
Connecting the ATE bounds across different values of the sensitivity parameter forms the sensitivity curve for the ATE. We construct lower confidence bands for these sensitivity curves by applying multiplier bootstrap (MB) to the efficient influence function (EIF) of the ATE lower bounds, which is given by the EIF of the treated outcome's lower bound minus the EIF of the control outcome's upper bound.

\begin{figure}[t]
    \centering
\begin{subfigure}[b]{.3\textwidth}
\centering
    \includegraphics[width=\linewidth]{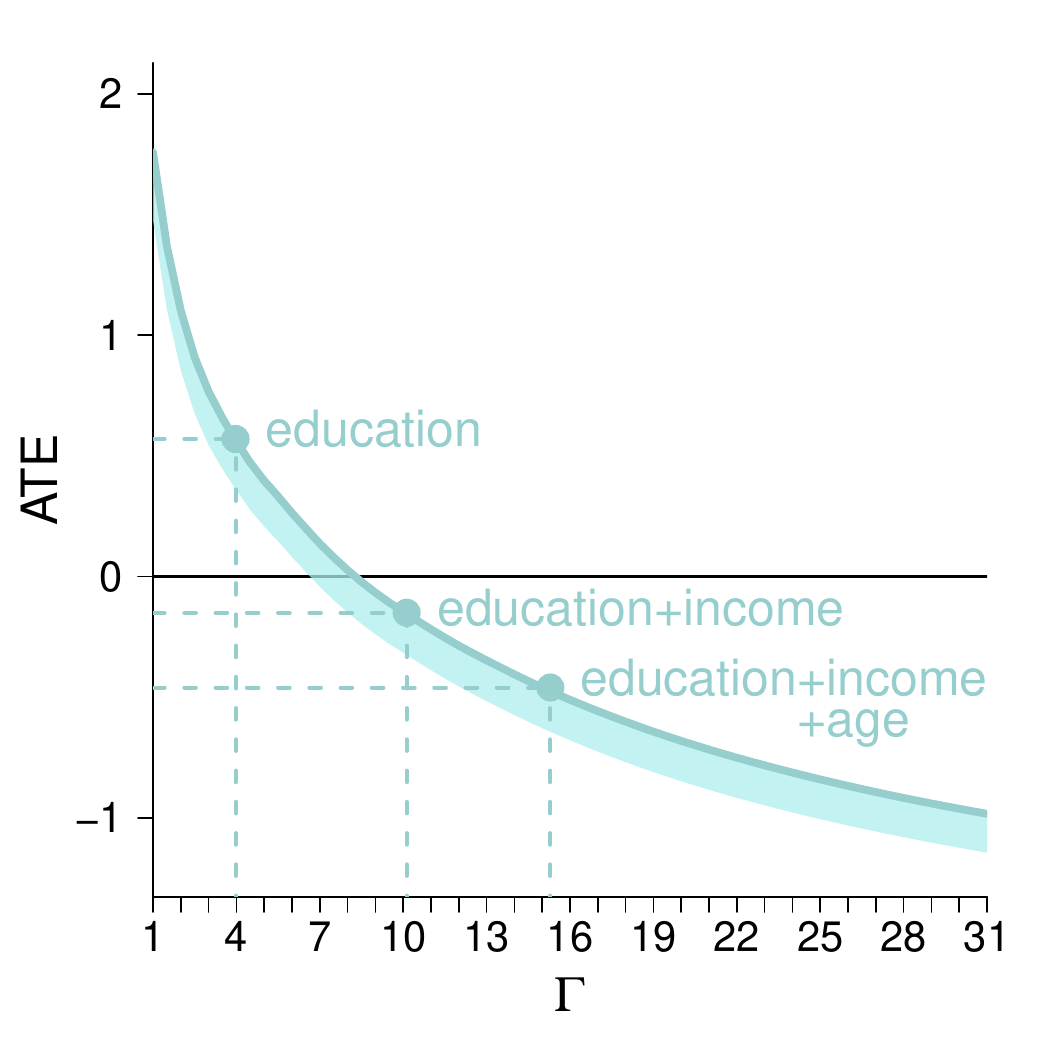}
        \caption{Worst-case model}
    \label{fig:real_curves_a}
\end{subfigure}
 \hfill
\begin{subfigure}[b]{.3\textwidth}
\centering
    \includegraphics[width=\linewidth]{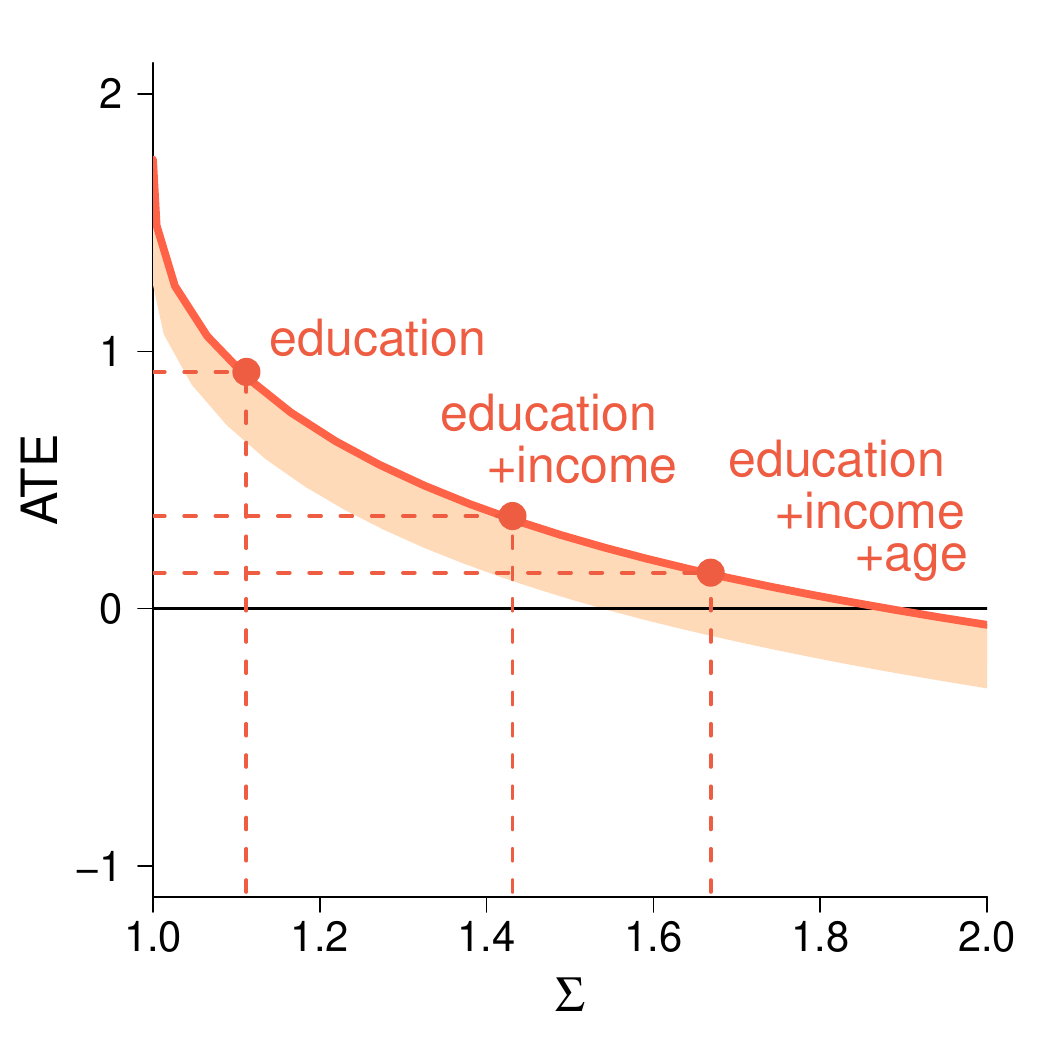}
    \caption{Average-case model}
    \label{fig:real_curves_b}
\end{subfigure}
 \hfill
\begin{subfigure}[b]{.3\textwidth}
\centering
    \includegraphics[width=\linewidth]{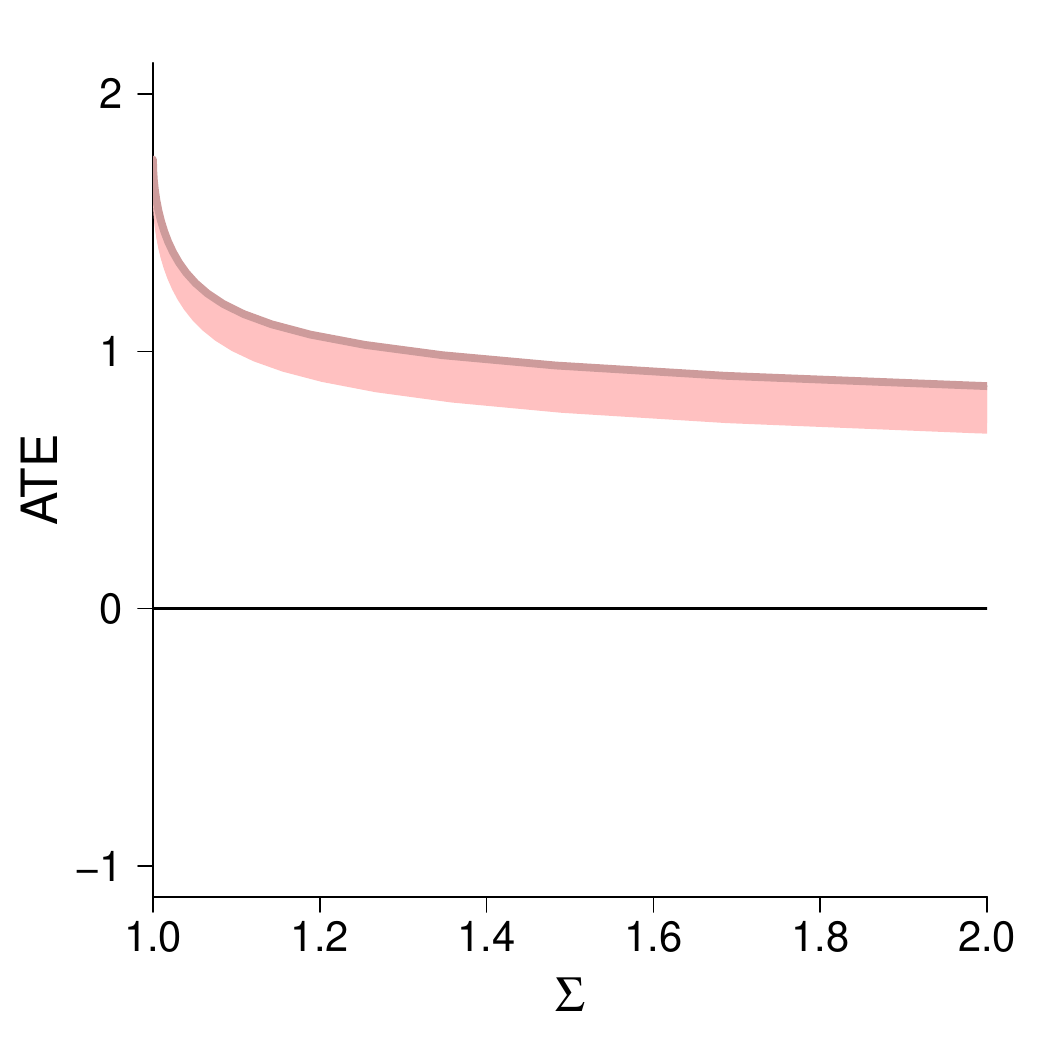}
    \caption{Average-case model (Indep. $U$)}
   \label{fig:real_curves_c}
\end{subfigure}
\caption{ Sensitivity curves of the ATE of fish consumption on blood
  mercury in the real-data study. The shaded
  regions are 90\% lower confidence bands from multiplier
  bootstrap. The average-case model is more optimistic than the worst-case model in calibration analysis based on observed covariates. It generates even tighter bounds in panel (c) by further assuming that the unmeasured confounder satisfies the
    independence in \Cref{prop:indep}. }
\label{fig:real_curves}
\end{figure}

\Cref{fig:real_curves} illustrates the sensitivity curves obtained under the worst-case sensitivity model and the average-case sensitivity model. The sensitivity value is defined as the value of the sensitivity parameter at which the corresponding sensitivity curve intersects the line $\text{ATE} = 0$.
To help interpret sensitivity values, we follow the approach proposed by \citet{imbens2003sensitivity} and \citet{hsu2013calibrating}, which suggests comparing the values with the strength of the observed covariates by pretending the covariates are unmeasured. This calibration strategy has been widely adopted in applied research.
Specifically, we assess the strength of the unmeasured confounder $U$ given the full covariates $X$, and compare it to the strength of the covariate $X_{j}$ given the other covariates $X_{-j}$.
To do so, we fit logistic regression models to estimate the propensity scores $\mathbb P\{Z_i=1\mid X_i\}$ and $\mathbb P\{Z_i=1\mid X_{i,-j}\}$ for every individual $i$, respectively.
In the worst-case sensitivity analysis, we compute the most extreme odds ratio of these propensity score estimates across all individuals in the study. We emphasize that this is only an approximation, since the worst-case model requires the constraint to hold for all possible covariate values, including those unobserved.
For the average-case analysis, we compute the empirical second moment of the ratio between these two propensity score estimates.
We note that a more formal calibration method is described by \citet{cinelli2020making}, which involves comparing the strength of $U$ given $X_{-j}$ with that of $X_j$ given $X_{-j}$. We leave the implementation of this approach as future work.

Panel (a) of \Cref{fig:real_curves} shows that the ATE remains positive if the worst-case model $\mathcal{H}_\infty(\Gamma)$ in \eqref{equ:h_infinity} holds for $\Gamma \leq 8$. However, omitting observed covariates such as \textit{education}, \textit{income} (and \textit{age}) yields an extreme odds ratio exceeding 8. This suggests that the study’s conclusion could be overturned if an unmeasured confounder were as influential as these covariates.
In contrast, panel (b) presents the sensitivity curve under our average-case sensitivity model, based on the Lagrangian formulation in \Cref{sect:second_formulation}.
When the same covariates,  \textit{education},   \textit{income} and \textit{age}, are omitted, the empirical average of the propensity score ratios remains below the sensitivity value $\Sigma = 1.9$.
This indicates that the average-case model is more optimistic than the worst-case model in sensitivity value calibration.
Panel (c) shows the sensitivity curve derived using the sensitivity value formulation in \Cref{sect:svf}, which is even more optimistic due to the additional independence assumption in \Cref{prop:indep}.
Together, the results in \Cref{fig:real_curves} suggest that the conclusion \quotes{fish consumption causally increases blood mercury levels} is more robust under the average-case sensitivity model than under the worst-case model.

\section{Discussion}\label{sect:conc}


Under the unconfoundedness assumption, the IPW estimator and its
variants have been applied in a variety of observational studies.
Recently, marginal sensitivity models has extended to deal with sequential unconfounding
\citep{bonvini2022sensitivity,tan2023sensitivity}.
It would be interesting to extend the average-case model to observational studies over time and space \citep{brumback2004sensitivity,lee2021network}.
The instability issue of IPW estimators and
the associated sensitivity analysis are often addressed by trimming
the propensity scores, which changes the estimand. It would be also
useful to consider relaxations of the optimization problems considered
here and develop simpler surrogates. Such relaxations may allow us to
include more meaningful constraints in optimization, leading to tighter bounds for partial identification.

The term \quotes{sensitivity value} was first introduced in \citet{zhao2018sensitivity} within the framework of \citet{paul2002observation}'s sensitivity model for pair-matched observational studies. The sensitivity values proposed in this article are applicable to a broader range of statistical problems. For example, they can be used in constructing confidence intervals \citep{owen2001empirical,duchi2021statistics} and evaluating stability \citep{gupta2021s}. Building on the connection,
our results on semiparametric inference may also have broader applications beyond causal inference.

\bibliographystyle{biometrika}
\bibliography{references}

\newpage

	\appendix
\appendixpage

\renewcommand{\theequation}{\arabic{equation}}
\setcounter{equation}{0}
\counterwithout{equation}{section}

Below we let $\E_{Y\mid X,Z=z}(\cdot) :=
\E(\cdot\mid X,Z=z)$ and $\P_{Y\mid X,Z=z}(\cdot) := \P(\cdot\mid
X,Z=z)$ to simplify notation.

\section{Extension to \citet{huang2022variance} }

As discussed in the Related Work section,
\citet{huang2022variance} proposed a sensitivity model that restricts the strength of unmeasured confounding through a constraint on the variance of the propensity score odds ratio. Their paper focuses on estimating the average treatment effect on the treated (ATT). Here we consider estimating the expected outcome $\mathbb E\{Y(1)\}$ for the ATE, and applying our optimization results to bounds derived under similar odds ratio--based models. In contrast, we solve the corresponding optimization problem analytically, without dropping any necessary constraints.

To begin with, we denote the propensity score odds ratio by
\[
\omega(x) =\frac{e(x)}{1-e(x)},  \qquad \omega(x,y) = \frac{e(x,y)}{1-e(x,y)},
\]
where $e(x) = \P\{Z=1\mid X=x\} $ and $e(x,y) = \P\{Z=1\mid X=x,Y(1)=y\}$.
The constraint in the sensitivity model of \citet{huang2022variance} can be defined as
\begin{equation}\label{equ:hp}
				\frac{\text{Var}(\omega(X) \mid Z=1)}{  \text{Var}(\omega(X,Y) \mid Z=1 )} \leq \rho, 								\tag{HP}
\end{equation}
where the sensitivity parameter $\rho\in [0,1]$. Observe that
\[
\omega(x,y) = \frac{1}{e^{-1}(x,y)-1} = \frac{1}{h(x,y)/e(x)-1},
\]
which is a nonlinear function of $h(x,y)$. Thus, the variance of $\omega(x,y)$ cannot be simplified using the marginalization constraint in \eqref{equ:mean_1}: $\E_{Y\mid X,Z=1}\{h(X,Y)\}=1$. Our results cannot be directly applied to solve this nonlinear optimization problem.
In response, we modify the variance ratio constraint above and define slightly different models that allow the optimization problem to be solved analytically.

\subsection{Odds ratio model I}
We first reformulate our sensitivity model via the second moment (i.e., variance) of the odds ratio instead of the probability ratio
$h(x,y) = e(x)/e(x,y)$.
We first express the odds ratio using $h(x,y)$:
\[
\frac{\omega(x)}{\omega(x,y)} = \frac{e(x)[1-e(x,y)] }{[1-e(x)]e(x,y)} = 1 + \frac{e(x) - e(x,y) }{[1-e(x)]e(x,y)}
=1+  \frac{h(x,y)-1}{1-e(x)}.
\]
Then by the marginalization constraint in \eqref{equ:mean_1}, we have
\[
\text{Var}_{Y\mid X,Z=1}(\omega(X)/\omega(X,Y)) = \frac{\E_{Y\mid X,Z=1}[h^2(X,Y)] }{[1-e(X)]^2}
\equiv \frac{\nu_{h,1}(X)}{[1-e(X)]^2}.
\]
Suppose that we solve the problem in \eqref{equ:marginal_opt} subject to $\mathbb E\{ \nu_{h,1}(X)/ [1-e(X)]^2 \} \leq \tilde \Sigma$ instead of the constraint in \eqref{equ:add_2}. As discussed in \Cref{sect:second_formulation}, optimizing the marginal and conditional Lagrangian functions of this problem would have the same solutions since all the constraints are conditional on $X$. Consider
\begin{align*}
	\text{minimize}  & \quad \frac{1}{2}\nu_{1,h}(X)/[1-e(X)]^2 +\lambda \mu_{1,h}(X)  \\
	\text{subject to}  & \quad \E_{Y\mid X,Z=1}\left \{ h(X,Y)\right \} = 1, \\
	& \quad h(X,Y)\geq e(X).
\end{align*}
The solution resembles the one in \Cref{prop:double_ipw} with $\lambda$ changed to $\lambda[1-e(X)]^2$:
\[
h_*(X,Y)  = e(X) + \lambda [1-e(X)]^2g(X,Y), \ \text{ where }\ g(X,Y) =
(\xi_X - Y ) \one_{ \{Y\leq \xi_X\} },
\]
and $\xi_X$ is the unique root of the following strictly increasing function:
\[
f_{\lambda,X}(\xi):=  \E_{Y\mid X,Z=1} \left\{ (\xi - Y ) \one_{ \{Y\leq \xi\}
} \right\} - \{1-e(X)\}^{-1}/\lambda.
\]
Then, as explained in \Cref{prop:same_curve}, solving the marginal optimization problem with
$ \tilde \Sigma = \E\{\nu_{1,h_*}(X)/[1-e(X)]\}$ would have the same optimal value $\E\{\mu_{1,h_*}(X)\}$.

\subsection{Odds ratio model II}

Observe that the variance $\tilde \nu_{h,1}(X) =\omega^2(X)\,\text{Var}_{Y\mid X,Z=1}(\omega^{-1}(X,Y))$
depends on the variance of $\omega^{-1}(X,Y)$, while the variance ratio constraint in \eqref{equ:hp} is defined using the variance of $\omega(X,Y)$. The latter leads to the nonlinear optimization problem of $h(x,y)$ mentioned above. To address this, we define a similar model using the variance of the inverse odds ratio.
By the marginalization constraint in \eqref{equ:mean_1}, 
\[
\E_{Y\mid X,Z=1}\{	\omega^{-1}(X,Y)	\} = \E_{Y\mid X,Z=1}\left\{1/e(X,Y) - 1\right\}  = 	e^{-1}(X)	-	1 = 	\omega^{-1}(X),
\]
and
\begin{align*}
	\Var_{Y\mid X,Z=1}(\omega^{-1}(X,Y) ) & = \Var_{Y\mid X,Z=1}\left(1/e(X,Y) \right)	\\
	&\ =
	\E_{Y\mid X,Z=1}\{e^{-2}(X,Y) \} - \E_{Y\mid X,Z=1}^2\{e^{-1}(X,Y) \} \\
	&\ = e^{-2}(X)\E_{Y\mid X,Z=1}\{h^{2}(X,Y) \}  - e^{-2}(X)	\\
	&\ = e^{-2}(X) \left[   \nu_{1,h}(X)-1         \right].
\end{align*}
Then by the law of total variance,
\begin{align*}
	\Var(\omega^{-1}(X,Y) \mid Z=1 ) & = \E \{	\Var_{Y\mid X,Z=1}(\omega^{-1}(X,Y) ) 	\mid Z=1\} + \Var(	\omega^{-1}(X)\mid Z=1)	\\
	& = \P^{-1}\{Z=1\} \E \{e^{-1}(X)[\nu_{h,1}(X)-1]\}+ \Var(	\omega^{-1}(X)\mid Z=1).
\end{align*}
We change the variance ratio constraint in \eqref{equ:hp} to
\begin{equation}\label{equ:A2}
	\frac{	\Var(\omega^{-1}(X,Y) \mid Z=1 ) 			}{	\Var(\omega^{-1}(X) \mid Z=1 ) 		} \leq \rho. \tag{A2}
\end{equation}
We note that the inverse odds ratio carries the same interpretation as the odds ratio, but in the opposite direction.
This constraint is equivalent to
\begin{align*}
	& \E \{e^{-1}(X)[\nu_{h,1}(X)-1]\}\leq (\rho-1)\,\P\{Z=1\}\,\Var(\omega^{-1}(X) \mid Z=1 )	\\
	\Leftrightarrow\ \ &
	\E \{e^{-1}(X)\nu_{h,1}(X)\}\leq (\rho-1)\,\P\{Z=1\}\,\Var(\omega^{-1}(X) \mid Z=1 ) - \E \{e^{-1}(X)\} \ := b(\rho).
\end{align*}
The right-hand side of this constraint (i.e., $b(\rho)$) can be estimated from the observed data. As discussed in the last subsection,
we can solve the problem in \eqref{equ:marginal_opt} subject to the constraint in \eqref{equ:A2}  via
\begin{align*}
	\text{minimize}  & \quad \frac{1}{2} e^{-1}(X) \nu_{1,h}(X)+\lambda \mu_{1,h}(X)  \\
	\text{subject to}  & \quad \E_{Y\mid X,Z=1}\left \{ h(X,Y)\right \} = 1, \\
	& \quad h(X,Y)\geq e(X).
\end{align*}
The solution resembles the one in \Cref{prop:double_ipw} with $\lambda$ changed to $\lambda e(X)$:
\[
h_*(X,Y)  = e(X) + \lambda e(X)g(X,Y), \ \text{ where }\ g(X,Y) =
(\xi_X - Y ) \one_{ \{Y\leq \xi_X\} },
\]
and $\xi_X$ is the unique root of the following strictly increasing function:
\[
f_{\lambda,X}(\xi):=  \E_{Y\mid X,Z=1} \left\{ (\xi - Y ) \one_{ \{Y\leq \xi\}
}  \right\} -  [e^{-1}(X)-1]/\lambda.
\]
Let $b(\rho^*) = \E\{e^{-1}(X)\nu_{1,h_*}(X)\}$. By the definition $b(\rho^*)$ above, this equality is achieved when
\[
\rho^* = 1+ \frac{ \E\{e^{-1}(X)\nu_{1,h_*}(X)\} +  \E \{e^{-1}(X)\} }{ \P\{Z=1\}\,\Var(\omega^{-1}(X) \mid Z=1 ) }.
\]
As explained in \Cref{prop:same_curve}, solving the marginal optimization problem subject to the constraint in \eqref{equ:A2} with $\rho = \rho^*$
would have the same optimal value $\E\{\mu_{1,h_*}(X)\}$.


\section{Proof of Proposition \ref{prop:double_ipw}}\label{appendix:double_ipw}
\begin{proof}
 The objective function of \eqref{equ:double_ipw} is marginal over $X$,
while the constraints are conditional on $X$. It makes no difference to solve this problem conditionally. The conditional Lagrangian is given by
\begin{align*}
\mathcal{L} =   &\ \mathbb E_{Y\mid X,Z=1}\big\{ h^2(X,Y)/2 + \lambda  h(X,Y) Y \big\}  + \lambda_{X,2}\left[ 1 - \mathbb E_{Y\mid X,Z=1}\left\{ h(X,Y) \right\} \right]  \\
&\ - \mathbb E_{Y\mid X,Z=1}\left\{ \lambda_{Y,3} [h(X,Y)-e(X)] \right\}.
\end{align*}
Setting the functional derivative of $\mathcal{L}$ w.r.t. $h$ to $0$, we obtain the Euler-Lagrangian equation, i.e., the stationarity condition in the KKT conditions,
\[
h(X,Y) + \lambda Y - \lambda_{X,2}  - \lambda_{Y,3}  = 0.
\]
By complementary slackness, $\lambda_{Y,3} [ h(X,Y)-e(X)]  =0$.
By dual feasibility, $\lambda_{Y,3}  \geq0$. If $\lambda_{Y,3} =0$,
\[
h(X,Y) = \lambda_{X,2} - \lambda Y = e(X) +  \lambda(\xi_X - Y),
\]
where $\xi_X:=[\lambda_{X,2}  - e(X)]/ \lambda$. By primal feasibility and $\lambda>0$, $h(X,Y)\geq e(X) \Leftrightarrow Y\leq \xi_X.$
If $\lambda_{Y,3}>0$, $h(X,Y)= e(X)$.  By the stationarity condition and $\lambda>0$,
that is when $Y>\xi_X.$ Now we have proven the solution in
\eqref{equ:solution_double_1_1}. Finally, $f_{\lambda,X}(\xi)$ in  \eqref{equ:unique} is attained  by substituting \eqref{equ:solution_double_1_1} into the first constraint in \eqref{equ:double_ipw}. Under Assumption \ref{assumption:first}, $\E_{Y\mid X,Z=1} \big\{( \xi -Y) \one_{ \{Y\leq \xi   \} }\big\}$ is a strictly increasing and positive function of $\xi$. Then, $ f_{\lambda,X}(\xi)$  has a unique root $\xi_X$ for any $\lambda>0$.
\end{proof}

\section{Proof of Proposition \ref{prop:same_curve}}\label{appendix:prop:same_curve}

\begin{proof}
Denote the objective function of  \eqref{equ:double_ipw} by
\[
D(h) := \E \left[ \nu_{1,h}(X)  \right]/2+\lambda \E \left[  \mu_{1,h}(X) \right].
\] 
For the solution $h_*(X,Y)$ defined in \eqref{equ:solution_double_1_1}, we obtain the optimal value
$D(h_*) = \psi_{1}(\lambda)/2 + \lambda \psi_{2}(\lambda).$ It is easy to see that $h_*$ is a feasible point of \eqref{equ:marginal_opt}. 
If there is another solution $h_{**}$ of \eqref{equ:marginal_opt}, it must satisfy
\[
  \E\left\{\nu_{1,h_{**}}(X) \right\}\leq \psi_{1}(\lambda) \  \text{ and } \  \E \left\{  \nu_{1,h_{**}}(X) \right\}\leq \psi_2(\lambda ),
\]
which implies that $D(h_{**})\leq  D(h_*).$ Since $h_{**}$ is also a feasible point of \eqref{equ:double_ipw}, it holds that $D(h_{**})\geq D(h_{*}).$ Given that $D(h_{**})= D(h_{*})$, equivalently,
\[
\E \left \{\nu_{1,h_{**}}(X) \right\}/2 +\lambda \E \left\{ \mu_{1,h_{**}}(X)  \right\} = \psi_{1}(\lambda)/2 +\lambda \psi_2(\lambda ).
\]
Thus, the penultimate equation can only hold with equalities.
\end{proof}

\section{Proof of Proposition \ref{prop:solutions_new_1}}\label{appendix:solutions_new_1}

For $\theta>0,$  the solution of the program \eqref{equ:new_ipw} with an additional constrain in \eqref{equ:lambda_h}
is given by
\begin{equation}\label{equ:h_star}
\begin{split}
 h_*(X,Y)  = \begin{cases}
 W_{+}(X),        \quad \quad \quad \text{if } Y<  \xi_X - W_{+}(X)/\lambda_{X},\\
\lambda_{X} (\xi_X -  Y),     \ \   \text{ if }\xi_X - W_{+}(X)/\lambda_{X}  \leq  Y \leq  \xi_X - W_{-}(X)/\lambda_{X},    \\
  W_{-}(X), \quad       \quad \quad     \text{if } Y> \xi_X - W_{-}(X)/\lambda_{X},
 \end{cases}
\end{split}
\end{equation}
where $\xi_X = \lambda_{X,2}/\lambda_X$ is defined in \eqref{equ:mid_step} below. When $\Gamma=\infty,$
$W_-(X) = e(X)$ and $W_+(X) = \infty.$ By re-defining $\xi_X$ as $\xi_X =  \lambda_{X,2}/\lambda_X - e(X)/\lambda_X$, \eqref{equ:h_star} reduces to the solution \eqref{equ:solution_new_1_1} in the main manuscript.

\begin{proof} The Lagrangian function of \eqref{equ:new_ipw} further subject to $h\in \mathcal{H}_{\infty}(\Gamma)$ is given by
\begin{align*}
\mathcal{L} =   &\ \frac{1}{2} \mathbb E_{Y\mid X,Z=1}\big[ h^2(X,Y)\big] + \lambda_{X}\left[   \mathbb E_{Y\mid X,Z=1}\left\{ h(X,Y) Y\right\} - \mathbb E_{Y\mid X,Z=1}\left\{ Y\right\}  + \theta  \right]   \\
& + \mathbb E_{Y\mid X,Z=1}\big\{ \lambda_{Y,3}\left[W_-(X) - h(X,Y) \right] + \lambda_{Y,4}\left[h(X,Y) - W_+(X) \right] \big\}\\
& + \lambda_{X,2}\left[ 1 - \mathbb E_{Y\mid X,Z=1}\left\{ h(X,Y) \right\} \right].
\end{align*}
Setting the functional derivative of $\mathcal{L}$ w.r.t. $h$ to $0$, we obtain the Euler-Lagrangian equation (i.e. the stationarity condition in the KKT conditions),
\[
h(X,Y) +\lambda_{X} Y - \lambda_{X,2}  - \lambda_{Y,3} +\lambda_{Y,4} = 0.
\]
By complementary slackness, we have $ \lambda_{X}\left[ \mathbb E_{Y\mid X,Z=1}\left\{ h(X,Y) Y\right\} -\mathbb E_{Y\mid X,Z=1}\left\{ Y\right\} + \theta  \right]=0,$
\[
\lambda_{Y,3}\left[W_-(X) - h(X,Y) \right] =0 \quad \text{and} \quad  \lambda_{Y,4}\left[h(X,Y) - W_+(X) \right]=0.
\]
By dual feasibility, $\lambda_{X},\lambda_{Y,3}, \lambda_{Y,4}\geq 0$.
Since $ W_+(X)> W_-(X)$, it is impossible that $\lambda_{Y,3},\lambda_{Y,4}>0$.
When $\lambda_{Y,3}> 0 $ and $\lambda_{Y,4}= 0$, $h(X,Y) = W_-(X).$  Further if $\lambda_{X}=0$, by the stationarity condition,
\[
\lambda_{X,2} =  W_-(X)  - \lambda_{Y,3} +  \lambda_{Y,4} < W_-(X).
\]
When $\lambda_{Y,4}> 0$ and $\lambda_{Y,3}= 0$, $h(X,Y) = W_+(X).$ Further if $\lambda_{X}=0$,
\[
\lambda_{X,2} =  W_+(X)  - \lambda_{Y,3} +  \lambda_{Y,4} > W_+(X).
\]
Because $W_+(X)> W_-(X)$, $\lambda_{X,2} < W_-(X)$ contradicts with  $\lambda_{X,2} > W_+(X)$ in the last two equations, so
we cannot have $h(X,Y)= W_-(X)$ and $W_+(X)$ for two different values of $Y$ if $\lambda_{X}=0$. If
$\lambda_{Y,3}$ or $\lambda_{Y,4}$ is always positive,  i.e.,
$h(X,Y)$ is always equal to $W_-(X)$ or $W_+(X)$, $h(X,Y)$  does not satisfy the equality constraint $\mathbb E_{Y\mid X,Z=1}\left[ h(X,Y) \right]=1$. So we know that $\lambda_{X}\neq 0$ unless $\lambda_{Y,3} = \lambda_{Y,4}= 0.$

 If  $\lambda_{X} = \lambda_{Y,3} = \lambda_{Y,4}= 0$, $h(X,Y) = \lambda_{X,2}$ by the stationarity condition. By primal feasibility, $\mathbb E_{Y\mid X,Z=1}\left[ h(X,Y) \right] = \lambda_{X,2} =1\Rightarrow h(X,Y)=1,$ then
 \[
\mathbb E_{Y\mid X,Z=1}\left[ Y \right]  \leq  \mathbb E_{Y\mid X,Z=1}\left[ Y \right]  - \theta \Rightarrow \theta \leq 0,
 \]
and $W_-(X) \leq   1 \leq  W_+(X).$ This completes the proof for $h_*(X,Y)=1 $ if $\theta \leq 0.$

We now consider the case that $\lambda_{X}>0$. When $\lambda_{Y,3} = \lambda_{Y,4}= 0$, the stationarity condition implies that
\begin{equation}\label{equ:mid_step}
h(X,Y) = \lambda_{X,2} - \lambda_{X} Y  = \lambda_{X} (\xi_X-Y) \text{ with } \xi_X :=\lambda_{X,2}/\lambda_{X},
\end{equation}
By primal feasibility, we have
\[
W_-(X)  \leq   \lambda_{X} (\xi_X-Y)  \leq  W_+(X)  \ \Leftrightarrow   \       \xi_X - W_{+}(X)/\lambda_{X}  \leq  Y \leq  \xi_X - W_{-}(X)/\lambda_{X},
\]
If $\lambda_{X},\lambda_{Y,3}> 0 $ and $\lambda_{Y,4}= 0$, $h(X,Y) = W_-(X) $. It follows from the stationarity condition that
\[
 W_-(X) + \lambda_{X} Y - \lambda_{X,2}   >  0 \ \Leftrightarrow \ Y > [\lambda_{X,2}- W_-(X)]/\lambda_{X} = \xi_X - W_-(X)/\lambda_{X},
\]
If $\lambda_{X},\lambda_{Y,4}> 0 $ and $\lambda_{Y,3}= 0$, $h(X,Y) = W_+(X) $, then
\[
W_+(X) + \lambda_{X} Y - \lambda_{X,2}  <  0 \ \Leftrightarrow \ Y <  [\lambda_{X,2} - W_+(X) ]/\lambda_{X} = \xi_X - W_+(X)/\lambda_{X}.
\]
The last three equations complete the proof for \eqref{equ:h_star}.

We now verify the uniqueness of $\xi_X$ in the solution \eqref{equ:solution_new_1_1} when $\theta >0.$ In this case, we have proved that $\lambda_{X}> 0.$ Then by complementary slackness, $\E_{Y\mid X,Z=1}[h_*(X,Y)Y] = \E_{Y\mid X,Z=1}[Y] -\theta$.
Dividing it by $\lambda_X\E_{Y\mid X,Z=1}[g(X,Y)]=1-e(X)$ removes $\lambda_{X}$, then we can find $\xi_X$ by solving the equation
\[
\E_{Y\mid X,Z=1}\left[(\xi_X-Y)Y\one_{\{Y\leq \xi_X\}}\right]/\E_{Y\mid X,Z=1}\left[(\xi_X-Y)\one_{\{Y\leq \xi_X\}}\right] = \E_{Y\mid X,Z=1}[Y] - \theta/[1-e(X)].
\]
This leads to the definition of $ f_{\theta,X} (\xi ) $ in
\eqref{equ:ratio}. The function $ f_{\theta,X} ( \xi  ) $ has a positive derivative $ d f_{\theta,X} (\xi) /d \xi  $:
\begin{align*}
        &\  \frac{ \E_{Y \mid X,Z=1}\big[Y \one_{ \{Y\leq \xi\}} \big] \times \E_{Y\mid X,Z=1,Y\leq \xi}\big[ \xi-Y \big]- \PP_{X,Z=1}\{Y\leq  \xi    \}\times \E_{Y\mid X,Z=1,Y\leq \xi }\big[ (\xi-Y )Y    \big]}{\E_{Y\mid X,Z=1,Y\leq \xi }^2\big[\xi-Y \big]    \PP_{X,Z=1}\{Y\leq \xi  \}   }\\
        =&\ \frac{   \xi \E_{Y\mid X,Z=1,Y\leq \xi}\big[Y \big] -  \E_{Y\mid X,Z=1,Y\leq \xi}^2 \big[Y \big]         - \xi \E_{Y\mid X,Z=1,Y\leq \xi}\big[Y \big] +  \E_{Y\mid X,Z=1,Y\leq \xi}\big[Y^2 \big]     }{\E_{Y\mid X,Z=1,Y\leq \xi }^2\big[\xi-Y \big]}        \\
        =& \Var_{Y\mid X,Z=1,Y\leq \xi } \big[Y\big]/\mathbb E_{Y\mid X,Z=1,Y\leq \xi }^2 \big[ \xi -Y\big] >0.
\end{align*}
The function $f_{\theta,X} (\xi)$ can be written as
\[
f_{\theta,X} (\xi)      = \frac{\E_{Y\mid X,Z=1}\left[          (\xi    -Y )(Y - \E_{Y\mid X,Z=1}[Y]+ \theta/[1-e(X)]     )\one_{\{Y\leq \xi                              \}}     \right]}{\E_{Y\mid X,Z=1}\big[           (\xi    -Y )\one_{\{Y\leq \xi                           \}}\big]}  .
\]
If $\xi< \E_{Y\mid X,Z=1}[Y] - \theta/[1-e(X)]$,
\begin{align*}
&\ (\xi   -Y )(Y- \E_{Y\mid X,Z=1}[Y] + \theta/[1-e(X)]     )\one_{\{Y\leq \xi                              \}} \\
=  &\ \big\{-( \xi  -Y )^2 +  (\xi  -Y )(\xi - \E_{Y\mid X,Z=1}[Y] +\theta/[1-e(X)]   ) \big\}\one_{\{Y\leq \xi                               \}} <  0.
\end{align*}
Since $\E_{Y\mid X,Z=1}\big[              (\xi    -Y )\one_{\{Y\leq  \xi                          \}}\big]> 0, f_{\theta,X} (\xi )< 0$ if $\xi < \E_{Y\mid X,Z=1}[Y] - \theta/[1-e(X)]$. Together with the positive derivative above, we know that $f_{\theta,X} \big(\xi          \big)$ has a unique root $\xi_X$. Finally, the expression of $\lambda_{X}$ is derived from the equality constraint $\E_{Y\mid X,Z=1}[h_*(X,Y)]=1.$\end{proof}

\section{Proof of Proposition \ref{prop:quantile_balancing}}\label{appendix:quantile_balancing}
\begin{proof}
The Lagrangian function of \eqref{equ:msm_opt} is
\begin{align*}
\mathcal{L}=   &\ \mathbb E_{Y\mid X,Z=1}\big\{ -h(X,Y) Y + \lambda_{Y,1}\left[W_-(X) - h(X,Y) \right] + \lambda_{Y,2}\left[h(X,Y) - W_+(X) \right] \big\}     \\
&\ + \lambda_{X,3}\left\{ 1 - \mathbb E_{Y\mid X,Z=1}\left[  h(X,Y) \right]  \right\}.
\end{align*}
Setting the functional derivative of $\mathcal{L}$ w.r.t. $h$ to $0$, we obtain the Euler-Lagrangian equation, i.e., the stationarity condition in the KKT conditions,
\[
-Y -\lambda_{Y,1} + \lambda_{Y,2} - \lambda_{X,3} = 0 \quad \Leftrightarrow \quad  \lambda_{Y,1} - \lambda_{Y,2}=  -\lambda_{X,3} -Y.
\]
By complementary slackness,
\[
\lambda_{Y,1}\left[W_{-}(X)-h(X,Y)\right] = 0 \quad \text{and}\quad  \lambda_{Y,2}\left[ h(X,Y) - W_{+}(X)\right] = 0.
\]
Combined this with the dual feasibility ($\lambda_{Y,1},\lambda_{Y,2}\geq 0$), we have
\begin{equation*}
[\lambda_{Y,1},\lambda_{Y,2},h(X,Y) ]=
\begin{cases}
[-\lambda_{X,3}-Y, 0, W_{-}(X)], \text{ if }  Y< -\lambda_{X,3}, \\
[0, \lambda_{X,3}+Y, W_{+}(X)],\hspace{13pt} \text{if }  Y> -\lambda_{X,3}.\\
        \end{cases}
\end{equation*}
Let $\alpha_*  :=\PP \{Y<-\lambda_{X,3}   \mid X=x,Z=1\} $.
By the primal feasibility,
\begin{align*}
        \mathbb E_{Y\mid X,Z=1} \left\{ h(X,Y)\right\}  = \alpha_* W_{-}(X)+ (1-\alpha_*) W_{+}(X) = 1,
\end{align*}
which implies that
\[
\alpha_* = \frac{ 1- W_+(X)}{W_-(X)- W_+(X)}  = \frac{1- (1-\Gamma )e(X) - \Gamma }{ (1-\Gamma^{-1} )e(X) + \Gamma^{-1} -  (1-\Gamma )e(X) - \Gamma } = \frac{\Gamma}{1+\Gamma},
\]
then $-\lambda_{X,3}$ is the $\Gamma/(1+\Gamma)$-quantile $Q(X)$. The solution of \eqref{equ:msm_opt} with minimization can be derived in the same way after changing $- h(X, Y) Y$ to $h(X, Y) Y$ in the definition of $\mathcal{L}$.
\end{proof}

\section{Proof of Proposition \ref{prop:sharp} and a result for the sensitivity value formulation }\label{appendix:sharp_ate}

\subsection{Proof of Proposition \ref{prop:sharp} }

\begin{proof}
We first set up the notation of our proof. The lower and upper bounds of $\E[Y(1)]$ and  $\E[Y(0)]$
are  derived using the following four optimizers:
\begin{align*}
        & h_1^-(x,y) = e_1(x) + \lambda (\xi_{x,1}^{-}-y)\one_{\{y\leq  \xi_{x,1}^{-} \}}\equiv e_1(x)  +\lambda g_1^-(x,y), \\
        & h_1^+(x,y) = e_1(x) + \lambda (y-\xi_{x,1}^{+})\one_{\{y\geq \xi_{x,1}^{+} \}}\equiv e_1(x) +\lambda g_1^+(x,y),      \\
        & h_0^-(x,y) = e_0(x) + \lambda  (\xi_{x,0}^{-}-y)\one_{\{y\leq  \xi_{x,0}^{-} \}}\equiv e_0(x)  +\lambda g_0^-(x,y), \\
        & h_0^+(x,y) = e_0(x)+ \lambda (y-\xi_{x,0}^{+})\one_{\{y\geq \xi_{x,0}^{+}\}}\equiv e_0(x) +\lambda g_0^+(x,y).
\end{align*}
where the superscript \quotes{-} denote lower bound, \quotes{+} denote upper bound,
$e_1(x) \equiv e(x)$ and $e_0(x) \equiv 1-e(x).$
The function $h_1^-(x,y)$ is exactly the solution $h_*(x,y)$ in \Cref{prop:double_ipw}. It is the  lower bound of $\E[Y(1)\mid X].$
To upper bound  $\E[Y(1)\mid X],$ we derive  $h_1^+(x,y)$ by solving the program \eqref{equ:double_ipw} with $Y$ changed to $-Y.$ We find the root of the function
\[
\E_{Y\mid  X=x,Z=1}[(Y-[-\xi])\one_{\{Y\geq  -\xi\}}] - [1-e(x)]/ \lambda.
\]
If we define $\xi_{x,1}^{+}$ as the root multiplied by -1, we have the definition of $h_{1}^+(x,y)$ above. Similarly,  $h_0^-(x,y)$ and $h_0^+(x,y)$ are the minimizer and maximizer of $\E[Y(0)\mid X=x]$ respectively.

The identification regions of $\E[Y(1)]$  and $\E[Y(0)]$ are denoted by
$\mathcal{B}_1 = [b_1^-,b_1^+]$ and $\mathcal{B}_0 = [b_0^-,b_0^+]$,
where $b_1^- = \E_{X}[\mu_{1,h_{1}^-}(X)]$ and other $b$'s are defined similarly using the notation above.

To achieve the third condition in the proposition,
we need to find a distribution $\tilde{\PP}$ satisfying
\[
\tilde \E[Y(z)] = b_z, \forall  b_z\in  \mathcal{B}_z.
\]
We first find a distribution to achieve the two extreme points $b_z^-$ and $b_z^+$ of $\mathcal{B}_z$, respectively. Then we define a mixture $\tilde \PP_{z}$
of the two extremal distributions to attain any point $b_z \in  \mathcal{B}_z$. In what follows,
we use the notation \quotes{$\tilde{\hspace{6pt}  }$} to indicate all the quantities defined under $\tilde \PP_{z}$.

First, we define the potential outcomes under $\tilde \PP_{z}$ to satisfy the standard consistency assumption:
\begin{equation}\label{equ:chain}
\tilde  p_{Y(z)\mid X}(y\mid x) = \tilde  p_{Y(z)\mid X,Z}(y\mid x,z) = \tilde  p_{Y\mid X,Z}(y\mid x,z)  := p_{Y\mid X,Z}(y\mid x,z).
\end{equation}
The last equality is required in the first condition. Similarly, we keep the other distributions the same:
\[
\tilde  p_X(x) := p_X(x) \  \text{ and }\ \tilde  e_z(x)  := e_z(x).
\]
We now turn to the second condition. For $t\in \{+,-\}$, we define the counterfactual outcome distribution
\[
 \tilde  p_{Y(z)\mid X,Z}( y\mid x,1-z) := \frac{ g_z^{t}(x,y)p_{Y\mid X,Z}(y\mid x,z)}{\int g_z^{t}(x,y')p_{Y\mid X,Z}(y'\mid x,z) dy'}
 = \frac{\lambda g_z^{t}(x,y)}{e_{1-z}(x)}p_{Y\mid X,Z}(y\mid x,z),
\]
by the definition of $g_z^{t}(x,y)$ and $\xi_{x,z}^{t}$.
above. The last equation and \eqref{equ:chain} imply that
\[
\frac{p_{Y(z)\mid X,Z}( y\mid x,1-z)}{p_{Y(z)\mid X,Z}( y\mid x,z)
} = \frac{\lambda g_z^{t}(x,y)}{e_{1-z}(x)}.
\]
Using Bayes' rule, we can rewrite $ \tilde  e_z(x,y)$ as
\begin{align*}
 \tilde  e_z^{t}(x,y)  = \mathbb{\tilde  P}_{Z\mid X,Y(z)}(z\mid x,y)
= \frac{e_z(x)  }{              e_z(x)  + \lambda  g_z^{t}(x,y)}      = \frac{e_z(x)  }{      h_z^{t}(x,y)},
\end{align*}
Given $\tilde  e_z(x)  = e_z(x),$ this implies that
$   \tilde  h_z(x,y)  := \tilde  e_z(x) / \tilde  e_z(x,y) = h_z^{t}(x,y).$
As an optimizer,  $h_z^{t}(x,y)$ satisfies the constraint of \eqref{equ:double_ipw}, so does $\tilde  h_z(x,y)$, which is the requirement in the second condition.
To verify the third condition, we first note that
\[
b_z^{t} = \E_{X,Y\mid Z=z}[h_z^{t}(X,Y)Y] =   \tilde \E_{X,Y\mid Z=z}\left[ \frac{\tilde e_z(X)Y}{\tilde e_z^t(X,Y) } \right] = \tilde \E_z^t[Y(z)].
\]
The second equality is achieved by the first condition and the definition of $ \tilde  h_z(x,y)$.
The third equality is attained by the IPW formula in \eqref{equ:first_formula}.
Given any point $b_z\in  \mathcal{B}_z$, we can write it as a convex combination,
\[
b_z = \tilde w_z^{+}b_z^{+} + \tilde w_z^{-}b_z^{-} = \tilde w_z^{+}\tilde \E_z^+[Y(z)] + \tilde w_z^{-}\tilde \E_z^-[Y(z)]= \tilde \E_z[Y(z)],
\]
by  defining the mixture $\tilde  \PP_z =  w_z^{+}\tilde  \PP_z^+ + w_z^{-}\tilde  \PP_z^-$. By construction, the three conditions hold for $\tilde  \PP_z$ because they hold for the two extremal distributions $\PP_z^+ $ and $\PP_z^-$.
\end{proof}

\subsection{Proof for the sensitivity value formulation}\label{appendix:sharpcate}

\begin{proposition}\label{prop:sharp_cate}

Under Assumption \ref{assumption:first},
for $z\in \{0,1\}$ and any distribution $\PP$ with $h(X,Y(z))$ satisfying the constraints in \eqref{equ:new_ipw}, there exists a distribution $\tilde \PP$ of $(X,Z,Y,Y(z))$ satisfying that
        \begin{enumerate}
                \item its marginal distribution of
               $(X,Y,Z)$ matches the observed
                   data distribution under $\PP $.
                \item its propensity score is the solution of \eqref{equ:new_ipw}.
        \end{enumerate}
\end{proposition}

\begin{proof}
        The proof is similar to the one for \Cref{prop:sharp} above.
        In the program \eqref{equ:new_ipw} for $Z=0$ and $1$, we minimize the second moments to obtain two sensitivity values. We denote the two optimizers,
\begin{align*}
        & h_1^*(X,Y) = e(X) + \lambda_{X,1} (\xi_{X,1}-Y)\one_{\{Y\leq  \xi_{X,1} \}},   \\
        & h_{0}^*(X,Y) = 1-  e(X) + \lambda_{X,0}  (Y-\xi_{X,1})\one_{\{Y\geq \xi_{X,0}\}},
\end{align*}
where $h_1^*(X,Y)$ is exactly $h_{*}(X, Y)$ in \eqref{equ:solution_new_1_1}.
We denote the two optimizers above as $h_{z}^*(X,Y) = e_z(X) + \lambda_{X,z} g_z(X,Y)$ for $z=0,1.$
The sensitivity values are the lower bound of the second moment. We define the identification regions as
$\mathcal{B}_1 = [b_{1}^*,\infty]$ and $ \mathcal{B}_0 = [b_{0}^*,\infty],$ where
\[
b_z^* = \E_{Y\mid X,Z=z}\{ [h_z^*(X,Y)]^2 \}.
\]
We construct an extremal distribution $\tilde \PP_z$  that can attain the extreme point $b_{z}^*$ of the region $\mathcal{B}_z$.
First, we define the potential outcomes under $\tilde \PP_{z}$ to satisfy the standard consistency assumption:
\[
\tilde  p_{Y(z)\mid X}(y\mid x) = \tilde  p_{Y(z)\mid X,Z}(y\mid x,z) = \tilde  p_{Y\mid X,Z}(y\mid x,z)  := p_{Y\mid X,Z}(y\mid x,z).
\]
Similarly, we let  $\tilde  p_X(x) := p_X(x)$ and $\tilde  e_z(x)  := e_z(x)$.
Define the counterfactual outcome distribution
\[
 \tilde  p_{Y(z)\mid X,Z}( y\mid z, 1-z) := \frac{g_z(x,y)p_{Y\mid X,Z}(y\mid x,z)}{\E_{Y\mid X,Z=z}[g_z(X,Y)] }= \frac{\lambda_{x,z}  g_z(x,y)}{e_{1-z}(x)}p_{Y\mid X,Z}(y\mid x,z),
\]
by the definition of $\lambda_{x,z}$ in \Cref{prop:solutions_new_1}. Then,
\begin{align*}
 \tilde  e_z(x,y)     =  \frac{e_z(x)  }{              e_z(x)  + \lambda_x  g_z(x,y)}
 = \frac{e_z(X)  }{     h_z^*(x,y)},
\end{align*}
Given $\tilde  e_z(x)  = e_z(x),$ this implies that $\tilde h_z(x,y) := \tilde e_z(x) /\tilde  e_z(x,y)  =  h_z^*(x,y).$ Then,
\begin{align*}
\mathbb{\tilde E}_{Y \mid X,Z=z}\big[\tilde h_z(X,Y)Y\big] & = \mathbb  E_{Y \mid X,Z=z}\big[h_z^*(X,Y)Y \big]  = \mathbb  E_{Y \mid X,Z=z}[Y] -(2z-1) \theta, \\
\mathbb{\tilde E}_{Y \mid X,Z=z}\big[\tilde h_z^2(X,Y)\big] & = \mathbb  E_{Y \mid X,Z=z} \big\{ [h_z^*(X,Y)]^2 \big\} = b_z^*.
\end{align*}
The second equality in the first line holds because the first constraint in \eqref{equ:new_ipw} is met with equality.
\end{proof}

\section{Background: influence functions}\label{sect:background_if}

The theory of influence functions \citep{van2002lectures} is crucial to removing the first-order bias in semiparametric estimation.
Let $\R$ denote the set of real numbers.
We treat a one-dimensional parameter $\tau$ as a mapping $\tau(\PP)$  from $\mathcal{P}$ to  $\R$ where $\mathcal{P}$ is the set of all possible observed data distributions. For any $\PP\in \mathcal{P}$, we define a path through $\PP$ as a one-dimensional submodel that passes through
$\PP$ at $\epsilon=0$ in the direction of a zero-mean function $s$ satisfying that $\|s\|_{2}\leq C'$ and $\epsilon\leq 1/C'$ for some constant $C'>0$. The submodel $\PP_{\epsilon}$ takes a density
$p_{O,\epsilon}(o) := p_{O}(o)[1+\epsilon s(o)]$ for $o\in \O$.
The tangent space $\mathcal{S}$ is defined as the set of zero-mean functions $s$ for any paths through $\PP$; $\mathcal{S}$ is known to be the Hilbert space of zero-mean functions when we use a nonparametric model.
Suppose $\tau(\PP)$ is differentiable at $\PP$ relative to $\mathcal{S}$, i.e., there is a linear mapping $\dot{\tau}(\cdot;\PP): \mathcal{S}\rightarrow \R$ such that for any $S\in \mathcal{S}$ and submodel $\PP_{\epsilon}$, we have
\begin{equation}\label{equ:if_definition}
\dot{\tau}(S;\PP)=\frac{d\psi_{\epsilon}}{d\epsilon}  \Big|_{\epsilon=0} = \E\big[ EIF(\tau)(O)  S(O)  \big],
\end{equation}
where $\tau_{\epsilon} : = \tau(\PP_{\epsilon})$ and $S(o) := \frac{d}{d\epsilon}\log p_{O,\epsilon}(o)\big|_{\epsilon=0}$.
In this article,  we let the subscript \quotes{$\epsilon$} denote the distribution shift from  $\PP$ to $\PP_{\epsilon}$ and \quotes{$S$} denote the score function for a random variable.
The second equality above is established by the Riesz representation theorem for the Hilbert space $\mathcal{S}$ that expresses $\dot{\tau}(\cdot;\PP)$ as an inner product with the unique {\it efficient influence function} $EIF(\tau):\O\rightarrow \R$ that lies in the closed linear span of $\mathcal{S}$. The efficiency of $EIF(\tau)$ implies that it has lower variance than any other influence functions, i.e., any measurable function $IF(\tau):\O\rightarrow \R$ whose projection onto the closed linear span of $\mathcal{S}$ is $EIF(\tau)$.
Some basic EIFs of expectation, conditional expectation and truncated expectations are given below, and we will use them in the following sections.

\begin{lemma}\label{lemma:eif_mean}
        Given one random variable $A$,  $EIF(\mathbb E[A] ) = A - \mathbb {E}[A].$
\end{lemma}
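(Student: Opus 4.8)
The plan is to compute the pathwise derivative of the expectation functional along a generic one-dimensional submodel and then read off the Riesz representer. First I would fix the functional $\tau(\mathbb{P}) = \mathbb{E}_{\mathbb{P}}[A]$, where $A = A(O)$ is a square-integrable function of the observed data, and recall the parametric submodel introduced in \Cref{sect:background_if}, namely $p_{O,\epsilon}(o) = p_O(o)[1 + \epsilon s(o)]$ with $s$ an arbitrary zero-mean direction in the tangent space $\mathcal{S}$.

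Next I would write $\tau_\epsilon = \int A(o)\, p_O(o)[1 + \epsilon s(o)]\, do = \mathbb{E}[A] + \epsilon\, \mathbb{E}[A(O)s(O)]$, which is affine in $\epsilon$, so that differentiating and evaluating at $\epsilon = 0$ gives $\dot{\tau}(s;\mathbb{P}) = \mathbb{E}[A(O)s(O)]$. The key manipulation is to mean-center: since $\mathbb{E}[s(O)] = 0$ for every $s \in \mathcal{S}$, the term $\mathbb{E}[A]\,\mathbb{E}[s(O)]$ vanishes and I obtain $\mathbb{E}[A(O)s(O)] = \mathbb{E}\{(A(O) - \mathbb{E}[A])\,s(O)\}$. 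This exhibits the derivative as the inner product of the centered variable $A - \mathbb{E}[A]$ with the direction $s$, valid for all $s$.

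Finally I would invoke the characterization in \eqref{equ:if_definition}: the efficient influence function is the unique element of the closed linear span of $\mathcal{S}$ satisfying $\dot{\tau}(s;\mathbb{P}) = \mathbb{E}[EIF(\tau)(O)\,s(O)]$ for all $s$. Because the model is nonparametric, $\mathcal{S}$ is the entire Hilbert space of zero-mean square-integrable functions, and the candidate $A - \mathbb{E}[A]$ is itself zero-mean, hence already lies in $\mathcal{S}$. By uniqueness of the Riesz representer this forces $EIF(\tau) = A - \mathbb{E}[A]$.

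The only point requiring care --- rather than a genuine obstacle --- is the appeal to uniqueness: one must confirm that the proposed representer actually lies in the tangent space, so that no further projection step alters it, and that square-integrability of $A$ makes the inner products well-defined. Both are immediate here, so the argument reduces to a one-line differentiation followed by the mean-centering identity.
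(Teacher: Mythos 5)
Your proposal is correct and follows essentially the same route as the paper's proof, which simply verifies the defining identity $\frac{d}{d\epsilon}\mathbb{E}_{\epsilon}[A]\big|_{\epsilon=0}=\mathbb{E}\{(A-\mathbb{E}[A])s(O)\}$ via the zero-mean property of the score. You merely spell out the differentiation along the submodel and the mean-centering step that the paper leaves implicit, so there is nothing to correct.
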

\begin{proof}
By the zero-mean property of influence functions,       it is straightforward to verify the EIF definition in \eqref{equ:if_definition}, i.e.,  $\frac{d}{d\epsilon}\mathbb E_{\epsilon}[A]|_{\epsilon=0} = \E\left[(A - \mathbb {E}[A])S(A) \right]$.
\end{proof}

\begin{lemma}\label{lemma:eif}
        Given two random variables $A$ and $B$,
\[
                EIF    \left\{\E[A\mid B=b] \right\} = \frac{\one_{\{B=b\}}}{p_{B}(b)}\left(    A -     \E[A\mid B=b]   \right).
\]
\end{lemma}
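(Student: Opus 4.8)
The plan is to verify the defining property \eqref{equ:if_definition} directly, mirroring the one-line computation used for \Cref{lemma:eif_mean}. I would first write the target functional as a ratio,
\[
\tau(\mathbb P) = \mathbb{E}[A\mid B=b] = \frac{\int a\, p_{A,B}(a,b)\,da}{p_B(b)},
\]
where $p_{A,B}$ is the joint law of $(A,B)$ and $p_B$ its marginal, and observe the key structural fact that $\tau$ depends on $\mathbb P$ only through the law of $(A,B)$. Along a submodel $p_{O,\epsilon}(o) = p_O(o)[1+\epsilon s(o)]$, the induced tilt of the $(A,B)$-law is governed by the conditional score $\bar s(a,b) := \mathbb{E}[s(O)\mid A=a,B=b]$, so it suffices to differentiate $\tau_\epsilon$ in terms of $\bar s$ rather than the full $s$.

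Next I would form
\[
\tau_\epsilon = \frac{\int a\, p_{A,B}(a,b)[1+\epsilon \bar s(a,b)]\,da}{\int p_{A,B}(a,b)[1+\epsilon \bar s(a,b)]\,da}
\]
and apply the quotient rule at $\epsilon=0$. Writing $\tau_0 = \mathbb{E}[A\mid B=b]$ and noting the denominator equals $p_B(b)$ at $\epsilon=0$, the numerator and denominator derivatives combine to give
\[
\frac{d\tau_\epsilon}{d\epsilon}\Big|_{\epsilon=0} = \frac{1}{p_B(b)}\int \big(a-\tau_0\big)\,\bar s(a,b)\,p_{A,B}(a,b)\,da.
\]
I would then recognize this as $\mathbb{E}[\varphi(O)\,s(O)]$ for the candidate
$\varphi(O) = \frac{\one_{\{B=b\}}}{p_B(b)}\big(A-\mathbb{E}[A\mid B=b]\big)$: conditioning on $(A,B)$ replaces $s(O)$ by $\bar s$, and the indicator collapses the outer integration to the slice $B=b$, exactly reproducing the display above. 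This shows $\varphi$ satisfies \eqref{equ:if_definition} for every score $s\in\mathcal S$, i.e. it is a valid influence function.

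To upgrade ``influence function'' to ``\emph{efficient} influence function'', I would check that $\varphi$ is mean-zero, since $\mathbb{E}[\varphi(O)] = \big(\mathbb{E}[A\one_{\{B=b\}}]-\tau_0\,p_B(b)\big)/p_B(b) = 0$, and is itself a function of $O$ lying in the tangent space. Because the model is nonparametric, $\mathcal S$ is the full Hilbert space of mean-zero square-integrable functions, so the Riesz representer of $\dot\tau(\cdot;\mathbb P)$ is unique; $\varphi\in\mathcal S$ together with \eqref{equ:if_definition} therefore forces $\varphi = EIF(\tau)$.

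The main obstacle is the bookkeeping around the reduction to the $(A,B)$-marginal: one must justify replacing the full score $s(O)$ by its conditional expectation $\bar s(a,b)$ before differentiating (valid precisely because $\tau$ is a functional of the $(A,B)$-law alone), and keep track of whether $B$ is discrete, so that $p_B(b)$ is a genuine probability and $\one_{\{B=b\}}$ is nondegenerate, as the stated formula implicitly assumes. Beyond this, the argument is the elementary quotient-rule differentiation carried out above.
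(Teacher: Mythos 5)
Your proposal is correct and follows essentially the same route as the paper: both verify the defining property \eqref{equ:if_definition} by differentiating $\mathbb{E}_{\epsilon}[A\mid B=b]$ along the submodel and rewriting the result as an inner product with the score, the paper via the log-density decomposition $\log p_{A,B,\epsilon}-\log p_{B,\epsilon}$ and you via the quotient rule on the ratio $\int a\,p_{A,B}(a,b)\,da / p_B(b)$. Your explicit reduction to the conditional score $\bar s(a,b)$ and the closing remark on why the representer is the \emph{efficient} influence function in a nonparametric model are slightly more careful than the paper's treatment, but the argument is the same.
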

\begin{proof}
We  can directly verify the EIF definition in \eqref{equ:if_definition}:
\begin{align*}
&\hspace{12pt}  \frac{d}{d\epsilon}\E_{\epsilon}[A\mid B=b]\big|_{\epsilon=0}  = \int a \frac{d}{d\epsilon} \big[\log p_{A,B,\epsilon}(a,b) -        \log p_{B,\epsilon}(b)  \big]\big|_{\epsilon=0} p_{A\mid B}(a|b) da \\
                & =   \int a S(a,b) p_{A\mid B }(a\mid b) da - \int a  p_{A\mid B}(a|b) da \times \frac{d}{d\epsilon}  \log p_{B,\epsilon}(b) \big|_{\epsilon=0}        \\
                & = \int\int \frac{\one_{\{b'=b\}}}{p_{B}(b')} a S(a,b') p_{A, B }(a, b') da db' - \E[A\mid B=b]  \int \frac{\one_{\{b'=b\}}}{p_{B}(b')} \Bigg(  \frac{d}{d\epsilon}\log p_{B,\epsilon}(b') \big|_{\epsilon=0}          \\
                &\hspace{10pt} + \underbrace{\int  \frac{d}{d\epsilon}\log p_{A\mid B,\epsilon}(a\mid b') \big|_{\epsilon=0}p_{A\mid B}(a\mid b')     da  }_{=0}\Bigg)          p_{B}(b')d b'    \\
                                & = \int\int \frac{\one_{\{b'=b\}}}{p_{B}(b')} a S(a,b') p_{A, B }(a, b') da db' - \E[A\mid B=b]  \int \int \frac{\one_{\{b'=b\}}}{p_{B}(b')}  S(a,b')p_{A,B}(a, b')              da   d b'      \\
                & = \int\int \frac{\one_{\{b'=b\}}}{p_{B}(b')} \big(a - \E[A\mid B=b']  \big) S(a,b') p_{A, B }(a, b') da db'           = \E\big[             EIF\left\{\E[A\mid B=b]\right\}  S(A,B) \big],
\end{align*}
as required.
\end{proof}

\begin{lemma}\label{lemma:eif_2}
For $\psi$ defined in \eqref{equ:psi}, the EIF of $\E_{Y\mid X=x,Z=1}[Y\one_{\{Y\leq Q(X)\}}]$ is given by
\begin{align*}
          \frac{\one_{\{X=x,Z=1\}}}{p_{X,Z}(X,Z=1)}
        \big\{\left(\alpha_*-  \one_{\{Y\leq Q(X)\}} \right) Q(X)+Y  \one_{\{Y\leq Q(X)\}}  -\E_{Y\mid X=x,Z=1}[Y  \one_{\{Y\leq Q(X)\}}]       \big\}.
\end{align*}
\end{lemma}
\begin{proof}
By the definition of the quantile function,
\begin{align*}
&\ \alpha_* = \int^{Q_{\epsilon}(x)}  p_{Y\mid X,Z,\epsilon}(y|X,Z=1) dy  \\
&\ \hspace{6pt}0  = p_{Y\mid X,Z}(Q(x)\mid X,Z=1) \frac{d}{d\epsilon} Q_{\epsilon}(x)      \Big|_{\epsilon=0} + \frac{d}{d\epsilon} \int \one_{\{y\leq Q(x)\} }p_{Y\mid X,Z,\epsilon}(y|X,Z=1) dy   \big|_{\epsilon=0}\\
\Leftrightarrow  &\ \   p_{Y\mid X,Z}(Q(x)\mid X,Z=1) \frac{d}{d\epsilon} Q_{\epsilon}(x)          \Big|_{\epsilon=0} = - \frac{d}{d\epsilon} \int \one_{ \{y\leq Q(x)\}} p_{Y\mid X,Z,\epsilon}(y|X,Z=1) dy \big|_{\epsilon=0}.
\end{align*}
Using the Leibniz integral rule and the equation above,
\begin{align*}
& \frac{d}{d\epsilon} \E_{Y\mid X=x,Z=1,\epsilon}[Y\one_{\{Y\leq Q_{\epsilon}(X)\}}] \Big|_{\epsilon=0}
=  \frac{d}{d\epsilon}  \int^{Q_{\epsilon}(x)} y p_{Y\mid X,Z,\epsilon}(y|X,Z=1) dy   \Big|_{\epsilon=0} \\
=\ &   Q(x)p_{Y\mid X,Z}(Q(x)\mid X,Z=1) \frac{d}{d\epsilon} Q_{\epsilon}(x)       \Big|_{\epsilon=0} + \frac{d}{d\epsilon}  \int y \one_{\{y\leq Q(x)\}} p_{Y\mid X,Z,\epsilon}(y|X,Z=1) dy         \Big|_{\epsilon=0}     \\
        =\       &  \frac{d}{d\epsilon} \int \left[ - \one_{\{y\leq Q(x)\}} Q(x)+ y\one_{\{y\leq Q(x)\}} \right] p_{Y\mid X,Z,\epsilon}(y|X,Z=1) dy        \Big|_{\epsilon=0} \\
        =\       &  \int\int\int  \frac{\one_{\{x'=x,z'=1\}}}{p_{X,Z}(x',z')} \Big[ - \one_{\{y\leq Q(x')\}} Q(x') + y \one_{\{y\leq Q(x')\} } - \big(- \alpha_*        Q(x')    \\
                 &\hspace{38pt}  + \E_{Y\mid X=x',Z= 1}\big[Y \one_{ \{Y\leq Q(X)\} }\big] \big)        \Big] S(y,z',x') p_{y,z'x' }(y, z',x') dy dz'dx'                \\
=\               &  \E\big[             EIF\left\{\E_{Y\mid X=x,Z=1}[Y\one_{\{Y\leq Q(X)\}}]\right\} S(Y,Z,X)   \big],
\end{align*}
where the penultimate equality is obtained by the proof of \Cref{lemma:eif}  above.
\end{proof}

\section{Proof of theorem \ref{thm:if_psi_plus}}\label{appendix:if_psi_plus}

\begin{proof}
We first derive the expression of EIF via the calculus of IFs  \citep[Section 3.4.3]{kennedy2022semiparametric}, and then verify that the expression satisfies \eqref{equ:if_definition}. The IF of $\psi_+$ is given by
\begin{align*}
         &\  \sum_{x\in \mathcal{X}}IF\left\{p_{X}(x)\right\} W_+(x)\mu_+(x)   + \sum_{x\in \mathcal{X}}p_{X}(x)IF\left\{ W_+(x)\right\}\mu_+(x)   + \sum_{x\in \mathcal{X}}p_{X}(x) W_+(x) IF\left\{\mu_+(x)      \right\}                \\
        = &\    \sum_{x\in \mathcal{X}}\left[\one_{\{X=x\}} - p_{X}(x)\right] W_+(x) \mu_+(x)+  \sum_{x\in \mathcal{X}}p_{X}(x)\left(1-\Gamma\right) IF\left\{e(x)\right\}\mu_+(x)      \\
         &\ + \sum_{x\in \mathcal{X}}p_{X}(x) W_+(x)\Big( IF \left\{     \E_{Y\mid X,Z=1 }\left[Y  \right]\right\}   - IF \big\{  \E_{Y\mid X,Z=1} \big[Y \one_{\{Y\leq Q(X)\}}      \big]\big\}    \Big)   \\
        = &\ W_+(X) \mu_+(X) - \psi_+ + \left(1-\Gamma\right)\sum_{x\in \mathcal{X}}p_{X}(x) \mu_+(x)\frac{\one_{\{X=x\}}}{p_{X}(x)} \left[Z- e(X)      \right] \\
        &\ +\sum_{x\in \mathcal{X}}p_{X}(x) W_+(x)\frac{\one_{\{X=x,Z=1\}}}{p_{X,Z}(x,z)} \Big[Y- \E_{Y\mid X,Z=1}\left[Y\right]  -
         Y \one_{\{     Y\leq Q(X)\}} \\
        &\ \hspace{1cm}      + \E_{Y\mid X,Z=1}\left[Y \one_{\{  Y\leq Q(X)\}} \right]   - \left(\alpha_* - \one_{\{   Y\leq Q(X)\}} \right)   Q(X)    \Big] \\
        = &\ \left[(1-\Gamma)Z + \Gamma         \right]\mu_+(X)  -\psi_+  + \frac{W_+(X)Z}{e(X)}                \Big[   \left(1-  \alpha_* - \one_{\{   Y> Q(X)\}}  \right) Q(X) \\
        & \hspace{250pt} +      Y \one_{\{      Y> Q(X)\}} - \mu_+(X)   \Big].
\end{align*}
The first term in the third equality is attained by \Cref{lemma:eif_mean}. In the fourth equality, the second term is obtained by \Cref{lemma:eif}, and the third term is obtained by Lemmas \ref{lemma:eif} and \ref{lemma:eif_2}.
Next, we define
\begin{equation}\label{equ:f_+_}
f_+(X,Y) =   \left(1-  \alpha_* - \one_{\{      Y> Q(X)\}}  \right)Q(X)+ Y \one_{\{     Y> Q(X)\}}.
\end{equation}
Denote $\sum_{z=0}^{1}$ by  $\int dz$. We next verify the expression of $EIF(\psi_+)$:
\begin{align*}
       & \  \E\big[ EIF(\psi_+)(O)S(O)  \big] \\
        =     &         \int\int\int EIF(\psi_+)(y,z,x)\frac{d}{d\epsilon}\log p_{Y,Z,X,\epsilon}(y,z,x)\big|_{\epsilon=0} p_{Y,Z,X}(y,z,x)   dydzdx  \\
        &\hspace{-12pt}  =         \int\int\int EIF(\psi_+)(y,z,x)\frac{d}{d\epsilon} p_{Y\mid X,Z,\epsilon}(y\mid x,z)\big|_{\epsilon=0} p_{Z\mid X}(z\mid x)p_{X}(x) dydzdx   \\
                &  + \int\int\int EIF(\psi_+)(y,z,x)\frac{d}{d\epsilon} p_{Z\mid X,\epsilon}(z\mid x)\big|_{\epsilon=0} p_{Y\mid X,Z}(y\mid x,z)p_{X}(x) dydzdx \\
         & + \int\int\int EIF(\psi_+)(y,z,x)\frac{d}{d\epsilon} p_{X,\epsilon}(x)\big|_{\epsilon=0} p_{Y\mid X,Z}(y\mid x,z)p_{Z\mid X}(z\mid x) dydzdx         \\
        &\hspace{-12pt}  =
         \int \int   \left[W_+(x)   [f_+(x,y) -\mu_+(x)  ]              +      e(x)  \mu_+(x)        \right]  \frac{d}{d\epsilon} p_{Y\mid X,Z,\epsilon}(y\mid X,Z=1)\big|_{\epsilon=0}  p_{X}(x) dy dx    \\
        & + \int \int    \Gamma  \mu_+(x)    \frac{d}{d\epsilon} p_{Y\mid X,Z,\epsilon}(y\mid x,0)\big|_{\epsilon=0} \left[ 1-e(x)       \right] p_{X}(x)  dy dx         \\
        & +\underbrace{ \int  \mu_+(x)  \frac{d}{d\epsilon} e_{\epsilon}(x)\big|_{\epsilon=0} p_{X}(x) dx +  \int  \Gamma \mu_+(x)                               \frac{d}{d\epsilon}[1-  e_{\epsilon}(x)]\big|_{\epsilon=0} p_{X}(x) dx }_{(*)}\\
        & + \underbrace{\int \mu_+(x) e(x) \frac{d}{d\epsilon} p_{X,\epsilon}(x)\big|_{\epsilon=0} dx + \int  \Gamma \mu_+(x)\left[1-e(x)\right] \frac{d}{d\epsilon} p_{X,\epsilon}(x)\big|_{\epsilon=0} dx }_{(**)}    \\
        &\hspace{-12pt} = \int W_{+}(x) \frac{d}{d\epsilon} \mu_{+,\epsilon}(x)\big|_{\epsilon=0}       p_{X}(x)  dx + \underbrace{\frac{d}{d\epsilon}  \int    W_{+}(x) \mu_+(x)                                p_{X}(x)  dx   \big|_{\epsilon=0}      }_{=0}  \\
        & + \underbrace{ \int \mu_+(x)  \frac{d}{d\epsilon} W_{+,\epsilon}(x)\big|_{\epsilon=0} p_{X}(x) dx}_{(*)} + \underbrace{\int W_+(x)\mu_+(x) \frac{d}{d\epsilon} p_{X,\epsilon}(x)\big|_{\epsilon=0} dx}_{(**)}
 = \frac{d \psi_{+,\epsilon}}{d\epsilon}  \big|_{\epsilon=0},
\end{align*}
as required by \eqref{equ:if_definition}. The first term in the penultimate equality is attained by
\begin{align*}
&\ \int                  \left[f_+(x,y) - \mu_{+}(x)    \right]\frac{d}{d\epsilon} p_{Y\mid X,Z,\epsilon}(y\mid X,Z=1)\big|_{\epsilon=0} dy        \\
= &\ \int \int \int     \underbrace{\frac{\one_{\{x'=x,z=1\}} }{p_{X,Z}(x',z)} \left[f_+(x',y) - \mu_{+}(x')    \right] }_{ \text{  $=   EIF\left\{ \mu_{+}(x)\right\}$ by \Cref{lemma:eif}}} \log p_{Y\mid X,Z,\epsilon}(y\mid x',z)\big|_{\epsilon=0}                          p_{Y,Z,X}(y,z,x')              dy      dz              dx'     \\
 = &\   \int \int \int  EIF\left\{\mu_{+}(x)\right\} \underbrace{ \frac{d}{d\epsilon} \log p_{Y,Z,X,\epsilon}(y,z,x')\big|_{\epsilon=0}                 }_{=S(y,z,x')}   p_{Y,Z,X}(y,z,x')       dy      dz     dx'             \\
   &\ - \int \int \underbrace{ \int  EIF\left\{ \mu_{+}(x)\right\} p_{Y\mid X,Z} (y\mid x',z)dy}_{ = 0} \frac{d}{d\epsilon}  \log p_{X,Z,\epsilon}(x,z)\big|_{\epsilon=0}                p_{X,Z}(x',z)                   dz     dx'
 \\
=&\             \int \int \int  EIF\left\{\mu_{+}(x)\right\}S(y,z,x')
 p_{Y,Z,X}(y,z,x')       dy      dz     dx'                                                             = \frac{d  }{d\epsilon} \mu_{+,\epsilon}(x) \big|_{\epsilon=0}.
\end{align*}
The proof for $EIF(\psi_-) = \phi_-(O)-\psi_-$ follows the same steps so omitted here. The differences are replacing $\Gamma$ by $\Gamma^{-1}$ and truncating $Y$ below the quantile rather than above.
\end{proof}

\section{Proof of Proposition \ref{thm:bias}}\label{appendix:bias}

\subsection{Proof sketch}\label{equ:sketch}

Define the expected outcomes above and below the {\it estimated} quantile,
\[
\hat{\mu}_{+ } (X) = \E_{Y\mid X,Z=1}\big[Y \one_{\{Y>\hat Q(X)\}} \big]\ \text{ and }\ \hat{\mu}_{-} (X) =  \E_{Y\mid X,Z=1}\big[Y  \one_{\{Y<\hat Q(X)\}} \big].
\]
 In the subsection below, we will prove a bias decomposition
\[
\Bias(\hat \phi \mid \hat \eta)  = \Bias_1(\hat \phi\mid \hat \eta) +  \Bias_2(\hat \phi\mid \hat \eta),
\]
where $ \Bias_1(\hat \phi  \mid \hat \eta )$ is given by
\begin{align*}
& \Gamma   \E\left\{ \frac{\hat e(X)- e(X)}{\hat{e}(X)}   \left[ \Big(\PP_{Y\mid X,Z=1}\big\{Y>\hat Q(X)\big\} - [1-\alpha_* ] \Big) \hat Q(X)+  \hat{\hat{\mu}}_+(X) - \hat{\mu}_{+ } (X)   \right] \ \Big| \  \hat \eta  \right\}                \\
&\ + \Gamma^{-1}\E\left\{ \frac{\hat e(X)-\hat e(X)}{\hat{e}(X)}   \left[ \Big(\PP_{Y\mid X,Z=1}\big\{Y<\hat Q(X)\big\} -\alpha_* \Big) \hat Q(X) + \hat{\hat{\mu}}_-(X)  - \hat{\mu}_-(X)  \right]  \ \Big| \  \hat \eta  \right\},
\end{align*}
and
\begin{align*}
\Bias_2(\hat \phi \mid \hat \eta  )   =  &\ \mathbb E\Bigg\{  \bigg(W_+(X)-W_-(X)\bigg) \bigg( \big[\hat{Q}(X)- Q(X)\big] \big[ \P_{Y\mid X,Z=1} {\{Y<\hat Q(X)\}}-\alpha_*\big]             \\
&\hspace{15pt} - \E_{Y\mid X,Z=1}\left[|Y-Q(X)|\one_{\{ Q(X)\wedge \hat Q(X) <Y< Q(X)\vee\hat Q(X)   \}   }      \right] \bigg) \ \bigg| \  \hat \eta  \Bigg\} \\
\lesssim    & \             \mathbb E\big\{ [\hat Q(X) - Q(X)]^2  \mid \hat \eta   \big\}            =    o_{\PP}(n^{-1/2}),
\end{align*}
using the facts that
\[
W_+(X)-W_-(X) < \Gamma - \Gamma^{-1}, \
\PP_{Y\mid X,Z=1}\{Y<\hat Q(X)\}
-\alpha^* = O_{\PP}( |\hat Q(X) - Q(X)|  ),
\]
and $|Y-Q(X)|\leq |\hat Q(X) - Q(X)|$ under the event $Q(X)\wedge\hat Q(X)<Y< Q(X)\vee \hat Q(X).$ The probability of the event also decays as fast as  $|\hat Q(X) - Q(X)|$.
In $\Bias_1(\hat \phi  \mid \hat \eta )  $, we can rewrite
\[
\hat{\hat{\mu}}_{+}(X) - \hat{\mu}_{+} (X) =\hat{\hat{\mu}}_{+} (X) - \mu_{+}(X) + \mu_{+}(X) -  \hat{\mu}_{+} (X).
\]
By definition, $\hat{\mu}_{+}$ converges to $\mu_{+}$ as fast as $\hat Q$ converges to $Q$, i.e.,
 \[
\hat \mu_+(X) -  \mu_+(X) \lesssim  \E_{Y\mid X,Z=1}\big[Y \one_{\{ Q(X)\wedge \hat Q(X) <Y< Q(X)\vee \hat Q(X) \}}  \big]\lesssim |\hat Q(X) - Q(X)|.
 \]
The same argument applies to $\mu_{-} (X) -  \hat{\mu}_{-}(X)$. Using Cauchy-Schwarz and Assumption \ref{assumption:consistency}, we have $ \Bias_1(\hat \phi  \mid \hat \eta )    = o_{\PP}(n^{-1/2})$.

\subsection{Bias expressions}

\begin{proof}
By definition, $\phi(O;\hat \eta ) =  \phi_+(O;\hat \eta )  + \phi_-(O;\hat \eta ).$
Following \eqref{equ:f_+_}, we define
\[
\hat f_+(X,Y) =   \big(1-  \alpha_* - \one_{\{  Y> \hat Q(X)\}}   \big)\hat  Q(X)+ Y \one_{\{   Y> \hat Q(X)\}},
\]
We rewrite the uncentered EIF $\phi_+(O;\hat \eta )$ as
\begin{align*}
\phi_+(O;\hat{e},\hat{Q},\hat{\hat \mu}_+) := &\  \frac{Z \hat{W}_+(X)}{\hat{e}(X)}\left[\hat{f}_+(X,Y) - \hat{\hat{\mu}}_+(X) \right] + \left[(1-\Gamma)Z+\Gamma \right]\hat{\hat{\mu}}_+(X)  \\
        = &\  \frac{Z \hat{W}_{+}(X)}{\hat{e}(X)}\left[\hat{f}_+(X,Y) - \hat{\hat{\mu}}_+(X) \right] + \left[(1-\Gamma)Z+\Gamma \right]\left[   \hat{\hat{\mu}}_+(X) - \hat{f}_+(X,Y)   \right]\\
          &\ + \left[(1-\Gamma)Z+\Gamma \right] \hat{f}_+(X,Y).
\end{align*}
The difference $\phi_+(O;\hat{e},\hat{Q},\hat{\hat \mu}_+) - \phi_+(O;e,Q,\mu_+)$ can be written as
\begin{align*}
&\underbrace{ \phi_+(O;\hat{e},\hat{Q},\hat{\hat \mu}_+)
- \phi_+(O;e,\hat{Q},\hat{\hat \mu}_+) }_{(\text{a})} + \underbrace{ \phi_+(O;e,\hat{Q},\hat{\hat \mu}_+)
 -\phi_+(O;e,Q,\mu_+)                   }_{(\text{b})} \\
=\ & \underbrace{ \Gamma Z \left[1/\hat{e}(X) - 1/e(X) \right]\big[ \hat{f}_+(X,Y) - \hat{\hat{\mu}}_+(X) \big]}_{(\text{a}) \text{ by the first expression of $\phi_+$ above}} + \underbrace{\left[(1-\Gamma)Z + \Gamma\right]  \left[\hat{f}_+(X,Y) - f_+(X,Y) \right] }_{(\text{b-1})} \\
& +\underbrace{ \Gamma \left[Z/e(X) -1 \right]\big[\hat{f}_+(X,Y) - \hat{\hat{\mu}}_+(X)  -f_+(X,Y) + \mu_+(X)   \big]}_{(\text{b-2})}.
\end{align*}
Using $\mathbb E\{Z/e(X)-1 \mid X\}=0,$ we have
\begin{align*}
\text{Bias}(\hat \phi_+ \mid \hat \eta ) = &\   \E\bigg\{     \phi_+(O;\hat{e},\hat{Q},\hat{\hat \mu}_+)
- \phi_+(O;e,\hat{Q},\hat{\hat \mu}_+) + \phi_+(O;e,\hat{Q},\hat{\hat \mu}_+)  -\phi_+(O;e,Q,\mu_+)              \ \big| \   \hat \eta   \bigg\} \\
 = &\    \Gamma \E\left\{ \frac{ \hat e(X)-e(X)}{\hat e(X)}   \left[  \hat{\hat{\mu}}_+(X) -  \E_{Y\mid X,Z=1}[\hat f_+(X,Y) ]  \right]    \ \Big| \   \hat \eta  \right\}                \\
 &       + \E\left\{\frac{ZW_+(X)}{e(X)} [\hat{f}_+(X,Y) - f_+(X,Y) ]   \ \Big| \   \hat \eta  \right\}.
\end{align*}
Next, we define the functions
\begin{align*}
& f_-(X,Y) =   \big(\alpha_* - \one_{\{ Y< Q(X)\}} \big)Q(X)+ Y \one_{\{        Y< Q(X)\}},\\
&
 \hat f_-(X,Y) =  \big(\alpha_* - \one_{\{      Y< \hat Q(X)\}}  \big)\hat Q(X)+ Y \one_{\{     Y<  \hat Q(X)\}}.
\end{align*}
We can rewrite $\phi_-(O;\hat \eta )$ as
\[
\phi_-(O;\hat{e},\hat{Q},\hat{\hat \mu}_-) = \frac{Z \hat{W}_-(X)}{\hat{e}(X)}\left[\hat{f}_-(X,Y) - \hat{\hat{\mu}}_-(X) \right] + \left[(1-\Gamma^{-1})Z+\Gamma^{-1} \right]\hat{\hat{\mu}}_-(X),
\]
and show that
\begin{align*}
\Bias(\hat \phi_-\mid \hat \eta ) = &\  \Gamma^{-1} \E\left\{ \frac{\hat e(X)-e(X)}{\hat e(X)}   \left[ \hat{\hat{\mu}}_-(X) - \E_{Y\mid X,Z=1}[ \hat f_-(X,Y) ] \right]  \ \Big| \   \hat \eta  \right\}\\
& + \E\left\{ \frac{ZW_-(X)}{e(X)}                              \left[\hat{f}_{-}(X,Y) - f_{-}(X,Y) \right]  \ \Big| \   \hat \eta  \right\}.
\end{align*}
Summing up the first terms in $\Bias(\hat \phi_+\mid \hat \eta )$ and $\Bias(\hat \phi_-\mid \hat \eta )$ gives  $\Bias_1(\hat \phi\mid \hat \eta )$ in \Cref{equ:sketch}. Then,
$\Bias_2(\hat \phi\mid \hat \eta )$ in \Cref{equ:sketch} can be derived from the second terms,
\begin{equation}\label{equ:theorem_2_combine}
\E\left\{W_{+}(X)\left[\hat{f}_+(X,Y) - f_+(X,Y) \right]  + W_{-}(X)\left[\hat{f}_-(X,Y) - f_-(X,Y)\right] \ \Big| \  Z=1, \hat \eta \right\}.
\end{equation}
In this expectation, we first consider the terms of trimmed outcomes,
\begin{align*}
        &  W_+(X)\left[Y \one_{\{Y>\hat Q(X)\}} -       Y \one_{\{Y> Q(X)\}}     \right]                + W_-(X)\left[Y \one_{\{Y< \hat Q(X)\}} -       Y\one_{\{Y< Q(X)\}      } \right]                       \\
        = &\ \begin{cases}
                0,  \quad \quad \quad \quad \quad \quad \quad \quad  \hspace{11pt} \ \ \ \   \text{if }         Y>\hat Q(X), Y> Q(X); \\
                W_-(X)Y -W_+(X)Y,    \ \ \ \  \text{ if } Y<\hat Q(X), Y> Q(X);       \\
                W_+(X)Y -W_-(X)Y,     \ \ \ \   \text{ if } Y>\hat Q(X), Y< Q(X);    \\
                0, \quad \quad \quad \quad \quad \quad \quad \quad   \hspace{10.5pt}\ \ \ \   \text{if }
                Y<\hat Q(X), Y< Q(X). \end{cases}
\end{align*}
Then, we  consider the other terms,
\begin{align*}
          &\    \underbrace{ W_+(X)\Big[1-\alpha_* -\one_{\{Y>\hat{Q}(X)\} }
         \Big]\hat{Q}(X) + W_-(X)\Big[\alpha_* - \one_{ \{Y<\hat{Q}(X)\}}       \Big]\hat{Q}(X) }_{\text{(a)}}  \\
        &\      \underbrace{- W_{+}(X)\Big[1-\alpha_* -\one_{\{Y>\hat{Q}(X)\}} \Big] Q(X)       -       W_-(X)\Big[\alpha_* -\one_{\{Y<\hat{Q}(X)\} }\Big]Q(X)}_{\text{$-$(b)}} \\
        &\  \underbrace{ W_+(X)\Big[ 1-\alpha_* -\one_{\{Y>\hat{Q}(X)\}} \Big]Q(X)      + W_-(X)\Big[\alpha_* -\one_{\{Y<\hat{Q}(X)\} }\Big]Q(X) }_{\text{$+$(b)}}      \\
        &\ \underbrace{- W_+(X)\Big[ 1-\alpha_* -\one_{\{Y>Q(X)\}} \Big]Q(X)
        -       W_-(X)\Big[\alpha_* - \one_{ \{Y<Q(X)\}} \Big]Q(X)
        }_{\text{(c)}}  \\
        =&\ \underbrace{        W_+(X)\Big[1- \alpha_* -\one_{ \{Y>\hat{Q}(X)\} } \Big]\Big[\hat{Q}(X) - Q(X) \Big]     + W_-(X)\Big[\alpha_* -\one_{ \{Y<\hat{Q}(X)\}} \Big] \Big[\hat{Q}(X) - Q(X) \Big] }_{\text{(a)$-$(b)}}\\
        &\  \underbrace{+       W_+(X) \Big[-\one_{ \{Y>\hat Q(X)\}} + \one_{ \{Y>Q(X)\}        }       \Big]   Q(X)                                                             +      W_-(X) \Big[-\one_{\{Y<\hat Q(X)\} }+ \one_{ \{Y<Q(X)\}}                \Big]   Q(X)                    }_{\text{(b)$-$(c)}}                            \\
        =&\ \left[W_+(X)-W_-(X)\right]  \big[ \one_{ \{Y<\hat{Q}(X)\} } -\alpha_* \big]\big[\hat{Q}(X) - Q(X) \big]
                + W_+(X)        \one_{ \{Y=\hat{Q}(X)\} }\big[\hat{Q}(X) - Q(X) \big]\\
        &\      + \begin{cases}
       0,\hspace{132pt}    \text{if }         Y>\hat Q(X), Y> Q(X); \\
                W_{+}(X)Q(X)  -    W_{-}(X)Q(X)            ,    \hspace{7pt}     \text{ if } Y<\hat Q(X), Y> Q(X);      \\
                W_-(X)Q(X)              -W_{+}(X)Q(X)                   ,    \hspace{7pt}    \text{ if } Y>\hat Q(X), Y< Q(X);      \\
                0, \hspace{132pt}  \text{if }
                Y<\hat Q(X), Y< Q(X).
                \end{cases}
\end{align*}
Summing up the end of the last two equations shows that  \eqref{equ:theorem_2_combine} can be written as  the expectation of the product of $W_+(X)-W_-(X)$ and
\begin{align*}
 \big[\hat Q(X) - Q(X)\big]     \big[\one_{ \{Y<\hat{Q}(X)\} } -\alpha_* \big]   +      \begin{cases}
       0,\hspace{48pt}    \text{if }         Y>\hat{Q}(X), Y>Q(X); \\
                Q(X)    -Y      ,\hspace{10pt}    \text{if } Y<\hat{Q}(X), Y>Q(X);       \\
                Y- Q(X),        \hspace{9pt} \text{if } Y>\hat{Q}(X), Y<Q(X);   \\
                0, \hspace{49pt}  \text{if }
                Y<\hat{Q}(X), Y<Q(X), \end{cases}
\end{align*}
conditioning on $Z=1$ and $\hat \eta $. Further conditioning on $X$ gives $\Bias_2 (\hat \phi\mid \hat \eta )$ in \Cref{equ:sketch}.
\end{proof}

\section{Proof of Theorems \ref{prop:eif_small} and \ref{prop:eif_large}}\label{appendix:prop:eif_large}

The proofs presented below involve many calculations of functional derivatives. We first note that a brief introduction to influence functions is given in \Cref{sect:background_if}.
In the following proofs, we begin by calculating
the repeatedly used derivatives in \Cref{sect:H_1}. After that, we derive the efficient influence functions (EIFs) for $\psi_1$ and $\psi_2$ in \Cref{sect:psi_2_sect} and $\psi_3$ in \Cref{sect:psi_3_sect}, respectively.

\subsection{Two derivatives}\label{sect:H_1}

We start by deriving two derivatives.  To simplify the exposition,  we denote
\begin{align*}
g (X,Y) = (\xi_X-Y) \one_{\{Y \leq \xi_X \}} &  \  \text{ and } \
\bar g(X)  = \E_{Y\mid X,Z=1}\left[(\xi_X-Y) \one_{\{Y \leq \xi_X \}}\right],     \\
b  (X,Y) = (\xi_X-Y)Y \one_{\{Y \leq \xi_X \}} &  \  \text{ and } \
\bar b (X)  = \E_{Y\mid X,Z=1}\left[(\xi_X-Y)Y \one_{\{Y \leq \xi_X \}}\right].
\end{align*}
It holds that $b (X,Y) = g (X,Y)Y.$
First,
\begin{equation}\label{equ:derivative_kappa_0}
\begin{split}
\frac{d \bar g_{\epsilon} (X) }{d\epsilon}  \Big|_{\epsilon=0}  = &     \frac{d}{d\epsilon} \bigg\{      \int^{\xi_{X,\epsilon}}(\xi_{X,\epsilon} - y)  p_{Y\mid X,Z=1,\epsilon}(y)dy     \bigg\} \Big|_{\epsilon=0}                      \\
= &\ 0\times     \frac{d \xi_{X,\epsilon}}{d\epsilon} \Big|_{\epsilon=0} +  \frac{d \xi_{X,\epsilon}}{d\epsilon} \Big|_{\epsilon=0} \E_{Y\mid X,Z=1}\left[\one_{\{Y \leq  \xi_X \}}\right]  \\
&\ + \frac{d}{d\epsilon} \bigg\{         \int (\xi_X  - y)\one_{\{Y \leq \xi_X \}}      p_{Y\mid X,Z=1,\epsilon}(y)dy     \bigg\} \Big|_{\epsilon=0}  \\
= &\ \frac{d \xi_{X,\epsilon}}{d\epsilon} \Big|_{\epsilon=0}  \E_{Y\mid X,Z=1}\left[ Y \one_{\{Y \leq \xi_X \}}\right]  + \frac{d}{d\epsilon} \mathbb E_{Y\mid X,Z=1,\epsilon}\left[        g (X,Y)         \right]   |_{\epsilon=0}        \\
= &\  \frac{d \xi_{X,\epsilon}}{d\epsilon} \Big|_{\epsilon=0} \E_{Y\mid X,Z=1}\left[ \one_{\{Y \leq \xi_X \}}\right]  +  \E\left[ \frac{Z}{e(X)}\left[g  (X,Y)-  \bar g(X) \right]        S(O)\   \Big|\ X \right].
\end{split}
\end{equation}
The last line uses the proof of \Cref{lemma:eif}, which implies that
\begin{align*}
        \frac{d}{d\epsilon} \mathbb E_{Y\mid X=x,Z=1,\epsilon}\left[    g (X,Y)         \right]   |_{\epsilon=0} = \E\left[     \frac{Z}{e(X) }\left[g  (X,Y)-  \bar g  (X) \right]     S(O)    \Big|\ X=x               \right].
\end{align*}
Similarly, we can show that
\begin{equation}\label{equ:gamma_1_e}
\begin{split}
\frac{d \bar b_{\epsilon} (X) }{d\epsilon}  \Big|_{\epsilon=0}   =& \           \frac{d}{d\epsilon} \bigg\{      \int^{\xi_{X,\epsilon}}(\xi_{X,\epsilon} - y)y p_{Y\mid X,Z=1,\epsilon}(y)dy     \bigg\} \Big|_{\epsilon=0}\\
= &\ \frac{d \xi_{X,\epsilon}}{d\epsilon} \Big|_{\epsilon=0} \E_{Y\mid X,Z=1}\left[ Y \one_{\{Y \leq \xi_X \}}\right]  + \E\left[ \frac{Z}{e(X)}\left[b  (X,Y)-  \bar b (X) \right]       S(O)    \       \Big|\   X \right].
\end{split}
\end{equation}

\subsection{Proof of Theorem \ref{prop:eif_small}}\label{sect:psi_2_sect}

\begin{proof}
We first prove the EIF of $\psi_2$. We will use the notation in \Cref{sect:H_1}. First,
\begin{equation}\label{equ:g_g_bar}
\lambda g  (X,Y)-  \lambda  \bar g  (X)  = h_*(X,Y)-e(X) - [1-e(X)] = h_*(X,Y) - 1.
\end{equation}
Differentiating both sides of $\lambda \bar g  (X)  = 1-e(X) $  and plugging in, \eqref{equ:derivative_kappa_0}, we have
\begin{equation}\label{equ:lambda_xi}
\lambda  \frac{d \xi_{X,\epsilon}}{d\epsilon} \Big|_{\epsilon=0}
   =  \E\left\{\left[     \Pi_e(X,Y) +\frac{Z }{e(X)}\Pi_h(X,Y) \right]S(O)\    \Big|\ X \right\} .
\end{equation}
The term  $\Pi_h(X,Y)$ is attained using \eqref{equ:g_g_bar}. Then,
\begin{align*}
&\ \frac{d}{d\epsilon}\E_{Y\mid X,Z=1,\epsilon}\left[h_{*,\epsilon}(X,Y)Y\right] \big|_{\epsilon=0} \\
=&\ \E\left[    \frac{Z}{e(X)}\left( h_{*}(X,Y)Y -  \E_{Y\mid X, Z=1}[h_{*}(X,Y)Y]\right) S(O)    \       \Big|\   X \right]              \\
&\ +\frac{d  }{d\epsilon}\left\{ e_{\epsilon}(X)\E_{Y\mid X,Z=1}\left[ Y \right] + \lambda \xi_{X,\epsilon}  \E_{Y\mid X,Z=1}\left[ Y \one_{\{Y \leq \xi_X \}}\right] \right\} \big|_{\epsilon=0}                                                                                                                   \\
        =&\     \E\Bigg[        \bigg( \frac{Z}{e(X)}\Big[     \Pi_h(X,Y)\E_{Y\mid X,Z=1}\left[ Y \one_{\{Y \leq \xi_X \}}\right]     + h_{*}(X,Y)Y -  \E_{Y\mid X, Z=1}[h_{*}(X,Y)Y] \Big]  \\
&\ \hspace{15pt}        + \Pi_e(X,Y)\E_{Y\mid X,Z=1}\left[ Y \one_{\{Y \leq \xi_X \}}\right] +[Z-e(X)]    \E_{Y\mid X,Z=1}\left[ Y \right]          \bigg) S(O)     \       \bigg|\   X \Bigg].
\end{align*}
Using \Cref{lemma:eif_mean} for $\psi_2 =  \E_{X} \left[\mu_{1,h_*}(X)\right] = \E_{X} \left[\E_{Y\mid X, Z=1}[h_*(X,Y)Y]\right]$,
\begin{align*}
& \frac{d\psi_{2,\epsilon}}{d\epsilon}\Big|_{\epsilon=0}
 =  \E_{X}\left[  \frac{d}{d\epsilon} \E_{Y\mid X, Z=1,\epsilon}[h_{*,\epsilon}(X,Y)Y]     \big|_{\epsilon=0} \right]  + \frac{d}{d\epsilon} \E_{X,\epsilon}\left[ \mu_{1,h_*}(X) \right]     \big|_{\epsilon=0}              \\
& =     \E\Bigg[ \bigg( \frac{Z}{e(X)}\Big[     \Pi_h(X,Y)\E_{Y\mid X,Z=1}\left[ Y \one_{\{Y \leq \xi_X \}}\right]     + h_{*}(X,Y)Y -  \E_{Y\mid X, Z=1}[h_{*}(X,Y)Y] \Big]        \\
& \hspace{15pt}  \Pi_e(X,Y)\E_{Y\mid X,Z=1}\left[ Y \one_{\{Y \leq \xi_X \}}\right] +[Z-e(X)]\mathbb E_{Y\mid X,Z=1}[Y] +\E_{Y\mid X, Z=1}[h_*(X,Y)Y] -  \psi_2       \bigg)  S(O)     \Bigg],
\end{align*}
which shows that  $EIF(\psi_2)=\phi_2(O)-\psi_2$ satisfies the definition in \eqref{equ:if_definition}.

We now consider the EIF of  $\psi_1$. First, by the definition of $h_*(X,Y)$ in \Cref{prop:double_ipw},
\[
\E_{Y\mid X,Z=1}\left[h_{*}^2(X,Y)\right]  = e(X)[2-e(X)] + \lambda^2\E_{Y\mid X,Z=1}\left[g^2(X,Y)\right].
\]
Using $\lambda \bar g  (X)  = 1-e(X),$ we rewrite the derivative
\begin{align*}
& \lambda^2 \frac{d}{d\epsilon}\E_{Y\mid X,Z=1,\epsilon}\left[g_{\epsilon}^2(X,Y)\right]\big|_{\epsilon=0} \\
 = &\  2\lambda [1-e(X)] \frac{d \xi_{X,\epsilon}}{d\epsilon} \Big|_{\epsilon=0}   + \lambda^2 \E\left[ \frac{Z}{e(X)}\left(g^2  (X,Y)-  \E_{Y\mid X,Z=1}\left[g^2(X,Y)\right] \right)    S(O)\   \Big|\ X \right]        \\
= &\  \E\bigg\{ \Big[  \frac{Z}{e(X)}\Big( 2[1-e(X)]\Pi_h(X,Y)+ \lambda^2g^2  (X,Y)- \lambda^2 \E_{Y\mid X,Z=1}\left[g^2(X,Y)\right] \Big) \\
&\ + 2[1-e(X)] \Pi_e(X,Y)         \Big] S(O)\  \Big|\ X \bigg\}.
\end{align*}
Applying \Cref{lemma:eif_mean} to  $\psi_1 =      \E_{X}\left\{\nu_{1,h^*}(X)\right\}=             \E_{X} \left[\E_{Y\mid X, Z=1}[h_*^2(X,Y)]\right]$, we have
\begin{align*}
\frac{d\psi_{1,\epsilon}}{d\epsilon}\Big|_{\epsilon=0}  & =  \E_{X}\left[  \frac{d}{d\epsilon} \E_{Y\mid X, Z=1,\epsilon}[h_{*,\epsilon}^2(X,Y)]    \big|_{\epsilon=0} \right]  + \frac{d}{d\epsilon} \E_{X,\epsilon}\left[ \nu_{1,h^*}(X)	\right]    \big|_{\epsilon=0}              \\
& =     \E\Bigg[ \bigg( \frac{Z}{e(X)}\Big[    2[1-e(X)]\Pi_h(X,Y)+ \lambda^2g^2  (X,Y)- \lambda^2 \E_{Y\mid X,Z=1}\left[g^2(X,Y)\right]           \Big]    \\
& \hspace{51pt}                 + 2[1-e(X)][Z-e(X) + \Pi_e(X,Y)] +\E_{Y\mid X, Z=1}[h_*^2(X,Y)] -  \psi_1 \bigg)  S(O)     \Bigg],
\end{align*}
which shows that $EIF(\psi_1)=\phi_1(O)-\psi_1$ satisfies the definition in \eqref{equ:if_definition}.
\end{proof}

\subsection{Proof of Theorem \ref{prop:eif_large}}\label{sect:psi_3_sect}

\begin{proof}
By the definition of  $\xi_X$ in \eqref{equ:ratio},
$\bar b (X)    =\Delta_X \bar g (X)$ for
\[
\Delta_X  := \E_{Y\mid X,Z=1 }[Y]  -\theta/[1-e(X)].
\]
Applying this equality to connect the derivative of $\bar b (X) $ and $\bar g (X) $ in \Cref{sect:H_1}:
\[
\frac{d \bar b_{\epsilon} (X) }{d\epsilon} =\frac{d \Delta_{X,\epsilon} }{d\epsilon}\Big|_{\epsilon=0}\bar g (X)  + \Delta_X\frac{d \bar g_{\epsilon} (X) }{d\epsilon}  \Big|_{\epsilon=0}.
\]
Plugging in the derivative expressions, we arrive at
\begin{align*}
&\ \frac{d \xi_{X,\epsilon}}{d\epsilon} \Big|_{\epsilon=0} \E_{Y\mid X,Z=1}\left[ Y \one_{\{Y \leq \xi_X \}}\right]  + \E\left[     \frac{Z}{e(X)}\left[b  (X,Y)-  \bar b (X) \right]       S(O)    \       \Big|\   X \right]    - \frac{d \Delta_{X,\epsilon} }{d\epsilon}\Big|_{\epsilon=0}\bar g (X)   \\
=&\  \Delta_X \bigg\{ \frac{d \xi_{X,\epsilon}}{d\epsilon} \Big|_{\epsilon=0} \E_{Y\mid X,Z=1}\left[ \one_{\{Y \leq \xi_X \}}\right]  +  \E\left[     \frac{Z}{e(X)}\left[g   (X,Y)-  \bar g  (X) \right]     S(O)\   \Big|\ X \right] \bigg\}.
\end{align*}
Rearranging the terms and applying $\bar b (X)    =\Delta_X \bar g (X)$, we have
\begin{equation}\label{equ:rearrange}
        \begin{split}
\hspace{-7pt}           &\ \E_{Y\mid X,Z=1}\left[(Y-\Delta_X) \one_{\{Y \leq \xi_X \}}\right]\frac{d \xi_{X,\epsilon}}{d\epsilon} \Big|_{\epsilon=0}
+ \bar g (X)\frac{d \Delta_{X,\epsilon} }{d\epsilon}\Big|_{\epsilon=0}
\\
\hspace{-7pt}= &\               2\bar g (X) \frac{d \Delta_{X,\epsilon} }{d\epsilon}\Big|_{\epsilon=0} +  \E\left[      \frac{Z}{e(X)}\left(  \Delta_X-Y\right) g(X,Y)          S(O)    \       \Big|\   X \right]      \\
\hspace{-7pt}= &\  \E\bigg\{    \bigg(  \frac{Z}{e(X)} \left[   2\bar g (X)     \left(Y - \E_{Y\mid X,Z=1 }[Y]    \right)         + \left(  \Delta_X-Y\right) g(X,Y)      \right] \\
&\ \hspace{125pt} - 2\theta\bar g (X)  [Z-e(X)]/[1-e(X)]^2 \bigg)S(O)                     \       \bigg|\   X  \bigg\}.
        \end{split}
\end{equation}
Observe that
$\E_{Y\mid X,Z=1}\left[(Y  -  \Delta_X ) (\xi_X-Y) \one_{\{Y \leq \xi_X \}}\right]   = \bar b (X)- \Delta_X \bar g(X)
  = 0.$ Then,
\begin{equation}\label{equ:step_3}
( \xi_X  -  \Delta_X )\bar g(X)  =  \E_{Y\mid X,Z=1}\left[g^2(X,Y)\right] \ \text{ and }
\end{equation}
\begin{equation}\label{equ:reformulation}
\begin{split}
 \E_{Y\mid X,Z=1}\left[h_*^2(X,Y)\right] & =      e^2(X) + 2 e(X)\lambda_X\bar g(X) + \lambda_X^2  \E_{Y\mid X,Z=1}\left[g^2(X,Y)\right]            \\
 & =    2 e(X) -e^2(X)  + [1-e(X)]^2 ( \xi_X  -  \Delta_X )/\bar g(X).
 \end{split}
\end{equation}
Using \eqref{equ:derivative_kappa_0} and \eqref{equ:rearrange} we can write the derivative of \eqref{equ:step_3} as
\begin{align*}
         &\  [1-e(X)]^2 \frac{d}{d\epsilon}\left[  (\xi_{X,\epsilon} - \Delta_{X,\epsilon})/\bar g_{\epsilon} (X)                       \right]\big|_{\epsilon=0}       \\
= &\  \lambda_X^2 \left\{ \left( \frac{d \xi_{X,\epsilon}}{d\epsilon} \Big|_{\epsilon=0}   - \frac{d \Delta_{X,\epsilon}}{d\epsilon} \Big|_{\epsilon=0}  \right)\bar  g (X) -  (\xi_X-\Delta_X)\frac{d \bar g_{\epsilon} (X) }{d\epsilon}  \Big|_{\epsilon=0}           \right\}                                \\
= &\ - \lambda_X^2 \Bigg\{   \E_{Y\mid X,Z=1}\left[( Y-\Delta_X)\one_{\{Y\leq\xi_X\} }\right] \frac{d \xi_{X,\epsilon}}{d\epsilon} \Big|_{\epsilon=0} +  \bar g (X)  \frac{d \Delta_{X,\epsilon}}{d\epsilon} \Big|_{\epsilon=0} \\
&\ \hspace{58pt} + (\xi_X-\Delta_X)     \E\left[        \frac{Z}{e(X)}\left[g  (X,Y)-  \bar g (X) \right]       S(O)\   \Big|\  X       \right]         \Bigg\}           \\
= &\ \E\bigg\{  \bigg[ \frac{Z}{e(X)}           \Big( - 2\lambda_X [1-e(X)] \left(Y - \E_{Y\mid X,Z=1 }[Y]        \right)  -       \lambda_X^2 g^2(X,Y)  \\
&\ \hspace{78pt} + \lambda_X^2 \E_{Y\mid X,Z=1}\left[g^2(X,Y)\right] \Big)+ 2\theta[Z-e(X)]/\bar g(X)           \bigg]S(O)              \       \Big|\  X \bigg\}.
\end{align*}
The derivative of \eqref{equ:reformulation} is given by
\begin{align*}
&\ \frac{d}{d\epsilon}\left[  2e_{\epsilon}(X) - e_{\epsilon}^2 (X)             \right]\big|_{\epsilon=0}       + \frac{d}{d\epsilon}\left[  1-e_{\epsilon}(X)  \right]^2 \big|_{\epsilon=0}( \xi_X  -  \Delta_X )/\bar g(X)    \\
= &\ \E\left\{   2[Z-e(X)][1-e(X)-      \lambda_X( \xi_X  -  \Delta_X )] S(O)           \       \Big|\  X \right\}.
\end{align*}
Combing these derivatives and applying \Cref{lemma:eif_mean} to the parameter
$\psi_3 = \E_{X}\left[		\nu_{1,h^*}(X)				\right]=                 \E_{X}\left[\E_{Y\mid X, Z=1}[h_*^2(X,Y)]\right]$,
we can write the derivative $\frac{d\psi_{3,\epsilon}}{d\epsilon}\Big|_{\epsilon=0}$ as
\begin{align*}
 & \E_X\left[  \frac{d}{d\epsilon} \E_{Y\mid X, Z=1,\epsilon}[h_{*,\epsilon}^2(X,Y)]\big|_{\epsilon=0} \right]  + \frac{d}{d\epsilon} \E_{X,\epsilon}\left[ \nu_{1,h^*}(X) \right]\big|_{\epsilon=0}              \\
  = &\ \E\bigg\{  \bigg[   \frac{Z}{e(X)}         \Big( - 2\lambda_X [1-e(X)] \left(Y - \E_{Y\mid X,Z=1 }[Y]        \right)  -       \lambda_X^2 g^2(X,Y) + \lambda_X^2 \E_{Y\mid X,Z=1}\left[g^2(X,Y)\right] \Big) \\
 &\ \hspace{13pt} + 2\lambda_X[Z-e(X)] \E_{Y\mid X,Z=1}\left[(Y-\xi_X) \one_{\{Y > \xi_X \}}\right]                    + \E_{Y\mid X, Z=1}[h_*^2(X,Y)]-\psi_3   \bigg]S(O)      \bigg\},
\end{align*}
which shows that  $EIF(\psi_3)=\phi_3(O)-\psi_3$ satisfies the definition in \eqref{equ:if_definition}.
\end{proof}

\section{Proof of Proposition \ref{thm:double_bias}}\label{appendix:double_bias}
\subsection{Proof for \texorpdfstring{$\hat \phi_1$}{phi1}}
\begin{proof}
Evaluated at $\xi_X$ and $\hat  \xi_X$, $\hat f_{\lambda,X}(\hat  \xi_X )=f_{\lambda,X}(\xi_X )$. Under Assumption \ref{assumption:nuisance_error},
\begin{align*}
\hat {\E}_{Y\mid X,Z=1} \left[(\hat \xi_X-Y)      \one_{\{Y\leq  \hat  \xi_X \}}  \right] -       \E_{Y \mid X,Z=1} \left[(\xi_X-Y) \one_{\{Y\leq  \xi_X \}}        \right]   = &\frac{e(X)-\hat e (X)}{\lambda} \\
 (\hat \xi_X - \xi_X  )\E_{Y\mid X,Z=1} \big[     \one_{\{Y\leq  \hat   \xi_X \}} \mid \hat \eta \big]-  \E_{Y\mid X,Z=1}  \big[    (\xi_X-Y) (\one_{\{Y\leq   \xi_X \}} -  \one_{\{  Y\leq    \hat \xi_X \}})      \mid \hat \eta \big]            &= o_{\PP}(n^{-1/4})            \\
 O_{\PP} (\xi_X -\hat \xi_X) +     O_{\PP} \left( \P_{Y\mid X,Z=1}  \left\{  \xi_X  \wedge  \hat \xi_X \leq Y\leq  \xi_X  \vee  \hat \xi_X \mid \hat \eta     \right\}                        \right)         & =              o_{\PP}\big(n^{-1/4 }\big),
\end{align*}
which shows that  $\hat \xi_X-\xi_X =  o_{\PP}\big(n^{-1/4 }\big)$.
By the definition of $h_*(X,Y)$ in \Cref{prop:double_ipw},
\begin{equation}\label{equ:two_equality}
\begin{split}
&\E_{Y\mid X,Z=1}[h_*^2(X,Y)] = e(X)[2-e(X)] + \lambda^2\E_{Y\mid X,Z=1}[g^2(X,Y)], \\
&\mathbb{ \hat  E}_{Y\mid X,Z=1} \big[\hat h_*^2(X,Y)   \big]   =  \hat e(X) \left[       2- \hat e(X)            \right]+\lambda^2 \mathbb{\hat  E}_{Y\mid X,Z=1} [\hat g^2(X,Y)   ].
\end{split}
\end{equation}
The bias $\Bias_{Y,Z\mid X}(\hat \phi_1 \mid \hat \eta ) = \mathbb E_{Y,Z\mid X}\big[\hat \phi_1(X,Y,Z) - h_*^2(X,Y) \ \big| \ \hat \eta \big]$ can be written as
\begin{equation}\label{equ:thm_4_expression}
\begin{split}
& \left\{ e(X)/\hat e(X) -1 \right\} \big\{ 2\left[1-\hat e(X) \right]\E_{Y\mid X,Z=1}\left[\hat \Pi_h (X,Y)\right] +   \lambda^2\mathbb E_{Y\mid X,Z=1} \left[\hat g^2(X,Y) \mid \hat \eta \right] \\
&  -     \lambda^2 \hat{\mathbb E}_{Y\mid X,Z=1}\left[\hat g^2(X,Y)   \right]       \big\} + \mathbb  E_{Y,Z\mid X}\Big\{   2[1-\hat e(X)][Z-\hat e(X)]    + \lambda^2 \hat g^2(X,Y)  \\
& \hspace{5pt} -  \lambda^2 \hat{\mathbb E}_{Y\mid X,Z=1}\left[\hat g^2(X,Y)   \right] +  \hat{\mathbb E}_{Y\mid X,Z=1}\big[\hat h_*^2(X,Y)\big]  -h_*^2(X,Y)
 \  \Big|\ \hat \eta\Big\}      \\
&  +  2[1-\hat e(X)]\mathbb  E_{Y,Z\mid X}\big\{ \hat \Pi_h(X,Y)+ \hat \Pi_e(X,Y)
\mid \hat \eta\big\}.
\end{split}
\end{equation}
By $\mathbb {E}_{Y\mid X,Z=1}[\Pi_h(X,Y)] = 0$ and Assumption \ref{assumption:nuisance_error}, the first product term in
\eqref{equ:thm_4_expression},
\begin{align*}
  \left\{ e(X)/\hat e(X) -1 \right\} \big\{ & 2\left[1-\hat e(X) \right]\E_{Y\mid X,Z=1}\left[\hat \Pi_h (X,Y)\right]  \\
  &  +   \lambda^2\mathbb E_{Y\mid X,Z=1} \left[\hat g^2(X,Y) \mid \hat \eta \right] -     \lambda^2 \hat{\mathbb E}_{Y\mid X,Z=1}\left[\hat g^2(X,Y)   \right]       \big\}  = o_{\PP}(n^{-1/2}).
\end{align*}
Using \eqref{equ:two_equality}, we can write the rest of the second and third lines of \eqref{equ:thm_4_expression} as
\begin{align*}
 & 2e(X)[1-\hat e(X)]^2 -2[1-e(X)]\hat e(X)[1-\hat e(X)] + \lambda^2 \E_{Y\mid X,Z=1}[\hat g^2(X,Y) \  |\ \hat \eta] + \hat e(X) \left[   2- \hat e(X)            \right]  \\
 &  \hspace{216pt} - e(X)[2-e(X)] - \lambda^2\mathbb{  E}_{Y\mid X,Z=1} \left[ g^2(X,Y)   \right]  \\
= &\  [\hat e(X)-e(X)]^2 +  \lambda^2 \E_{Y\mid X,Z=1}[\hat g^2(X,Y) - g^2(X,Y) \  |\ \hat \eta] \\
= &\ \lambda^2 \E_{Y\mid X,Z=1}[\hat g^2(X,Y) - g^2(X,Y) \  |\ \hat \eta] + o_{\PP}\big(n^{-1/2 }\big).
\end{align*}
By the definition  of $h_*(X,Y)$ and $\hat h_*(X,Y)$, we can rewrite the last line of \eqref{equ:thm_4_expression} as\begin{align*}
&\  2[1-\hat e(X)]\mathbb  E_{Y,Z\mid X}\big\{ \hat \Pi_h(X,Y)+ \hat \Pi_e(X,Y)
\mid \hat \eta\big\}  \\
        =        &\ \frac{2\lambda \mathbb{\hat E}_{Y\mid  X,Z=1} \left[ \hat g(X,Y) \right] }{ \mathbb{\hat E}_{Y\mid  X,Z=1} \big[ \one_{ \{Y\leq \hat \xi_X\} } \big]}
                \Big\{ \mathbb{ E}_{Y\mid X,Z=1}\big[ h_*(X,Y)- \hat h_*(X,Y) \mid \hat \eta         \big] + e(X) [\hat e(X)-1]  \\
 &\ \hspace{270pt} + [1-e(X)]\hat e(X) \Big\} \\
= &\ 2\lambda^2 \mathbb{\hat E}_{Y\mid  X,Z=1, Y\leq \hat \xi_X} \left[ \hat g(X,Y) \right] \mathbb{ E}_{Y\mid X,Z=1}[ g(X,Y)- \hat g(X,Y) \mid \hat \eta   \big].
\end{align*}
Combining the remaining terms in the last two equations, we have
\begin{align*}
 &\ \lambda^2 \E_{Y\mid X,Z=1}[\hat g^2(X,Y) - g^2(X,Y) \  |\ \hat \eta] + 2\lambda^2 \mathbb{\hat E}_{Y\mid  X,Z=1, Y\leq \hat \xi_X} \left[ \hat g(X,Y) \right]
 \mathbb{ E}_{Y\mid X,Z=1}[g(X,Y)  \\
 & \hspace{340pt}  - \hat g(X,Y)  \mid \hat \eta    \big]  \\
=&\ \lambda^2 \E_{Y\mid X,Z=1}\bigg\{      \left[\hat g(X,Y) + g(X,Y) - 2 \mathbb{\hat E}_{Y\mid  X,Z=1 , Y\leq \hat \xi_X} \left[ \hat g(X,Y) \right] \right]\Big[         \hat g(X,Y) - g(X,Y)             \Big]   \  \Big|\ \hat \eta\bigg\}\\
\leq &\  \lambda^2 \E_{Y\mid X,Z=1}\bigg\{         \left[\hat g(X,Y) + g(X,Y) - 2 \mathbb{\hat E}_{Y\mid  X,Z=1 , Y\leq \hat \xi_X} \left[ \hat g(X,Y) \right] \right] \times 0 \times  \one_{\left\{ Y>  \xi_X\vee \hat \xi_X\right\} }     \  \Big|\ \hat \eta\bigg\}\\
& + \lambda^2 \E_{Y\mid X,Z=1}\bigg\{      \left[\hat \xi_X-Y +\xi_X -Y- 2\hat \xi_X + 2 \mathbb{\hat E}_{Y\mid  X,Z=1 , Y\leq \hat \xi_X} \left[ Y \right] \right]  \big[\hat \xi_X  - \xi_X\big]  \one_{\left\{Y\leq  \xi_X\wedge  \hat \xi_X\right\} }       \  \Big|\ \hat \eta\bigg\}     \\
& + \lambda^2 \E_{Y\mid X,Z=1}\bigg\{      \left[\hat g(X,Y) + g(X,Y) - 2 \mathbb{\hat E}_{Y\mid  X,Z=1 , Y\leq \hat \xi_X} \left[ \hat g(X,Y) \right] \right] \times \big|\hat \xi_X  - \xi_X\big| \\
&\hspace{290pt} \times  \one_{\left\{\xi_X\wedge\hat \xi_X< Y\leq  \xi_X\vee \hat \xi_X\right\} }        \  \Big|\ \hat \eta\bigg\}             \\
 \leq &\ 2\lambda^2 \big\{\hat \xi_X-\xi_X\big\} \big\{\mathbb{\hat E}_{Y\mid  X,Z=1 , Y\leq \hat \xi_X} \left[ Y \right] - \E_{Y\mid X,Z=1,Y\leq  \xi_X\wedge  \hat \xi_X}\left[ Y      \ \big| \  \hat \eta\right]    \big\}+O_{\PP}\big(\big[\hat \xi_X -\xi_X\big]^2\big)   \\
= &\     o_{\PP}(n^{-1/2}).
\end{align*}
The last inequality is attained by
$\PP_{Y\mid X,Z=1}\big\{\xi_X\wedge \hat \xi_X < Y\leq  \xi_X\vee \hat \xi_X  \mid \hat \eta \big\}    = O_{\PP}(|\hat \xi_X - \xi_X|).$
The last equality is obtained by $\hat \xi_X-\xi_X =  o_{\PP}\big(n^{-1/4 }\big)$ above and \eqref{equ:thm_4_proof} below.
Based on \eqref{equ:thm_4_expression} and the equations above, we have an upper bound for the bias of $\hat \phi_1:$
\begin{equation} \label{equ:bias_bound}
     \Bias(\hat \phi_1 \mid \hat \eta)
\lesssim        \big(  \| \hat e - e\| +  \| \hat \xi_X - \xi_X\| \big)  \| \hat p_{Y\mid X,Z=1}   - p_{Y\mid X,Z=1} \| + \| \hat e - e\|^2 + \|\hat \xi_X - \xi_X\|^2.
\end{equation}
\end{proof}

\subsection{Proof for \texorpdfstring{$\hat \phi_2$}{phi2}}
\begin{proof}
We write $\Bias_{Y,Z\mid X}(\hat \phi_2 \mid \hat \eta ) = \mathbb E_{Y,Z\mid X}\Big\{\hat \phi_2(X,Y,Z) - h_*(X,Y)Y \ \Big| \ \hat \eta \Big\}$ as
\begin{equation}\label{equ:thm_4_expression_2}
\begin{split}
& \left\{\frac{  e(X) }{\hat e(X)} -1 \right\} \Big\{ \hat \Pi_h (X,Y)\mathbb{\hat E}\big[Y\one_{\{Y\leq \hat \xi_X\}}\big] +  \mathbb E_{Y\mid X,Z=1} \left[\hat h_*(X,Y)Y \mid \hat \eta \right] -      \hat{\mathbb E}_{Y\mid X,Z=1}\left[\hat h_*(X,Y)Y   \right]        \Big\} \\
&  + \mathbb  E_{Y,Z\mid X}\Big\{   [Z-\hat e(X)] \mathbb{\hat E}_{Y\mid X,Z=1}\big[Y\big]
     +  \hat h_*(X,Y)Y  -h_*(X,Y)Y
 \  \Big|\ \hat \eta\Big\}      \\
&  + \mathbb  E_{Y,Z\mid X}\big\{ \hat \Pi_h(X,Y)+ \hat \Pi_e(X,Y)
\mid \hat \eta\big\}  \mathbb{\hat E}\big[Y\one_{\{Y\leq \hat \xi_X\}}\big].
\end{split}
\end{equation}
The first line of \eqref{equ:thm_4_expression_2} is $o_{\PP}(n^{-1/2}).$
The second line of \eqref{equ:thm_4_expression_2} can be written as
\begin{align*}
&\ \big\{\hat e(X) - e(X)\big\}\big\{\E_{Y\mid X,Z=1}[Y] -\mathbb{\hat E}_{Y\mid X,Z=1}[Y] \big\} + \lambda \E_{Y\mid X,Z=1} \left[\hat g(X,Y)Y- g(X,Y)Y\mid \hat \eta \right] \\
= &\  \lambda \E_{Y\mid X,Z=1} \left\{ [\hat g(X,Y)- g(X,Y)]Y\mid \hat \eta \right\}+ o_{\PP}(n^{-1/2}).
\end{align*}
We can rewrite the third line of \eqref{equ:thm_4_expression_2} as
$\lambda \mathbb{ E}_{Y\mid X,Z=1}[ g(X,Y)- \hat g(X,Y) \mid \hat \eta    \big]\mathbb{\hat E}_{Y\mid X,Z=1,Y\leq \hat \xi_X}\big[Y\big].$
Combine the remaining terms in the last two equations,
\begin{align*}
&\      \lambda \E_{Y\mid X,Z=1} \left\{   \Big[\hat g(X,Y)- g(X,Y)\Big] \left[Y - \mathbb{\hat E}_{Y\mid X,Z=1,Y\leq \hat \xi_X}\big[Y\big]  \right] \  \big|\ \hat \eta \right\}                                          \\
 \leq &\        \lambda  \E_{Y\mid X,Z=1}\left\{          0\times \left[Y  -  \hat{ \mathbb E}_{Y\mid X,Z=1,Y\leq \hat \xi_X  }  \big[Y  \big] \right]      \times \one_{\left\{Y> \xi_X \vee \hat \xi_X\right\} }\ \big|\ \hat \eta  \right\}                                                      \\
 & + \lambda  \E_{Y\mid X,Z=1}\left\{             \left[\hat \xi_X-\xi_X\right]\times \left[Y  -  \hat{ \mathbb E}_{Y\mid X,Z=1,Y\leq \hat \xi_X  }  \big[Y  \big] \right] \times   \one_{\left\{Y\leq  \xi_X\wedge \hat \xi_X\right\} } \ \big|\ \hat \eta  \right\}       \\
  & +  \lambda  \E_{Y\mid X,Z=1}\left\{           \big|\hat \xi_X -\xi_X  \big|\times \big|Y  -  \hat{ \mathbb E}_{Y\mid X,Z=1,Y\leq\hat \xi_X  }  \big[Y  \big] \big|      \times \one_{\left\{\xi_X\wedge \hat \xi_X< Y\leq  \xi_X\vee \hat \xi_X\right\} } \ \big|\ \hat \eta  \right\}  \\
\leq &\  \lambda  \big|\hat \xi_X -\xi_X\big| \big|                \mathbb E_{Y\mid X,Z=1,Y\leq \xi_X\wedge \hat \xi_X } \big[Y \mid  \hat \eta  \big] -    \hat{ \mathbb E}_{Y\mid X,Z=1,Y\leq\hat \xi_X  }  \big[Y   \big]                                  \big| + O_{\PP}\big(\big[\hat \xi_X -\xi_X\big]^2\big) = o_{\PP}\big(n^{-1/2}\big).
\end{align*}
The last line follows the same argument above \eqref{equ:bias_bound}.
Based on \eqref{equ:thm_4_expression_2} and the equations above, we arrive at an upper bound for the bias of $\hat \phi_2:$
\begin{equation} \label{equ:bias_bound_2}
         \Bias(\hat \phi_2 \mid \hat \eta)
\lesssim          \big(  \| \hat e - e\| +  \| \hat \xi_X - \xi_X\| \big)  \| \hat p_{Y\mid X,Z=1}   - p_{Y\mid X,Z=1} \| + \|\hat \xi_X - \xi_X\|^2.       \end{equation}
We now prove the remaining claim
\begin{equation}\label{equ:thm_4_proof}
\        \big| \mathbb E_{Y\mid X,Z=1,Y\leq \xi_X\wedge \hat\xi_X } \big[ Y  \mid  \hat \eta \big] - \hat{\E}_{Y\mid X,Z=1,Y\leq\hat \xi_X  } \big[Y  \big]      \big| = o_{\PP}(n^{-1/4}).
\end{equation}
We will show that the left-hand side of \eqref{equ:thm_4_proof} is upper bounded by
\begin{align*}
&\  \big| \mathbb E_{Y\mid X,Z=1,Y\leq\xi_X} \big[ Y  \big] - \hat{\E}_{Y\mid X,Z=1,Y\leq\hat \xi_X  } \big[Y  \big]      \big|       + \big| \mathbb E_{Y\mid X,Z=1,Y\leq\hat\xi_X } \big[ Y   \mid  \hat \eta \big] - \hat{\E}_{Y\mid X,Z=1,Y\leq\hat \xi_X  } \big[Y  \big]      \big|\\
\leq &\           O_{\PP}\big( \big|\hat \xi_X -\xi_X\big|\big) +   O_{\PP}\Big(\int  |\hat p_{Y\mid X,Z=1}(y) -p_{Y\mid X} (y)|dy\Big)         \equiv  E_{X,Z=1} + E_{X,2}.
\end{align*}
We rewrite the first term,
\begin{align*}
                &\      \mathbb E_{Y\mid X,Z=1,Y\leq\xi_X } \big[Y  \big] -       \hat{ \mathbb E}_{Y\mid X,Z=1,Y\leq\hat \xi_X  }  \big[Y  \big]   \\
        = &\      \frac{         \hat{\mathbb E}_{Y\mid X,Z=1} \big[\one_{\{Y\leq \hat \xi_X\}}  \big]                    \mathbb E_{Y\mid X,Z=1 } \big[Y \one_{\{Y\leq\xi_X\}} \big]                       -       \mathbb E_{Y\mid X,Z=1} \big[\one_{\{Y\leq\xi_X\}}  \big]                 \hat{\mathbb E}_{Y\mid X,Z=1 } \big[Y \one_{\{Y\leq\hat \xi_X\}} \big]                                                                                    }{                      \mathbb E_{Y\mid X,Z=1} \big[\one_{\{Y\leq\xi_X\}}  \big]                  \hat{\mathbb E}_{Y\mid X,Z=1} \big[\one_{\{Y\leq \hat \xi_X\}}  \big]            }       \\
                = &\      \frac{         \mathbb E_{Y\mid X,Z=1} \big[\one_{\{Y\leq \hat \xi_X\}}  \big]                  \mathbb E_{Y\mid X,Z=1 } \big[Y \one_{\{Y\leq\xi_X\}} \big]                       -       \mathbb E_{Y\mid X,Z=1} \big[\one_{\{Y\leq\xi_X\}}  \big]                 \mathbb E_{Y\mid X,Z=1 } \big[Y \one_{\{Y\leq \hat \xi_X \}} \big]                                        +       E_{X,2}                 }{                      \mathbb E_{Y\mid X,Z=1} \big[\one_{\{Y\leq\xi_X\}}  \big]                  \hat{\mathbb E}_{Y\mid X,Z=1} \big[\one_{\{Y\leq \hat \xi_X\}}  \big]            }       \\
                = &\    O_{\PP}\left(  \mathbb E_{Y\mid X,Z=1} \big[\one_{\{Y\leq \hat \xi_X\}  } -  \one_{\{Y\leq \xi_X\}  }   \big]               \right)         +       O_{\PP}\left(             \mathbb E_{Y\mid X,Z=1 } \big[Y \one_{\{Y\leq \xi_X \}} - Y \one_{\{Y\leq \hat \xi_X \}} \big]            \right)         + E_{X,2}       \\
                =&\ E_{X,Z=1} +E_{X,2}.
\end{align*}
Similarly, we can show that $\mathbb E_{Y\mid X,Z=1,Y\leq \hat \xi_X } \big[Y  \mid  \hat \eta \big]- \hat{\mathbb E}_{Y\mid X,Z=1,Y\leq \hat \xi_X } \big[Y  \big] =E_{X,Z=1}+E_{X,2}$. Finally, under Assumption \ref{assumption:nuisance_error}, $E_{X,Z=1}+E_{X,2} = o_{\PP }(n^{-1/4}),$ which completes the proof.
\end{proof}

\subsection{Proof for \texorpdfstring{$\hat \phi_3$}{phi3}}

\begin{proof}
We start by showing that $|\hat \xi_X - \xi_X|  = o_{\PP}(n^{-1/4}).$
First, we have
\begin{align*}
&\ \E_{Y\mid X,Z=1}\big[\hat g(X,Y)Y-     g(X,Y)Y         \mid \hat \eta  \big]  \\
= &\  \E_{Y\mid X,Z=1}\big[(\hat \xi_X-Y)Y \one_{\{Y \leq \hat \xi_X \}}  -       (\xi_X-Y)Y \one_{\{Y \leq \xi_X \}}             \mid \hat \eta  \big] \\
=       &\ \E_{Y\mid X,Z=1}\big[(\hat \xi_X-Y)Y \one_{\{Y \leq \hat \xi_X \}}- (\xi_X-Y)Y \one_{\{Y \leq \hat \xi_X \}} +(\xi_X-Y)Y \one_{\{Y \leq \hat \xi_X \}}     \\
 &\ -       (\xi_X-Y)Y \one_{\{Y \leq \xi_X \}}                     \mid \hat \eta\big]\\
=&\ O_{\PP}(\hat \xi_X - \xi_X) + O_{\PP}(\PP_{Y\mid X,Z=1}\{   \xi_X\wedge\hat \xi_X \leq Y\leq  \xi_X \vee\hat \xi_X  \mid \hat \eta \})      = O_{\PP}(|\hat \xi_X - \xi_X|).
\end{align*}
Similarly, we can show that
$\E_{Y\mid X,Z=1}\big[\hat g(X,Y)-        g(X,Y)  \mid \hat \eta  \big]  =  O_{\PP}(|\hat \xi_X - \xi_X|).$
Evaluated at the roots $\xi_X$ and $\hat \xi_X,$  $\hat f_{\theta,X}(\hat \xi_X) =  f_{\theta,X}(\xi_X)$. Under Assumption \ref{assumption:nuisance_error}, this implies that
 \begin{align*}
&\hspace{12.5pt} \E_{Y\mid X,Z=1}\left[g(X,Y)\right] \hat{\E}_{Y\mid X,Z=1}\big[\hat g(X,Y)Y\big] \\
& \hspace{145pt}  - \hat{\E}_{Y\mid X,Z=1}\big[\hat g(X,Y)\big]\E_{Y\mid X,Z=1}\left[g(X,Y)Y\right]   = o_{\PP}(n^{-1/4})        \\
& \Rightarrow \E_{Y\mid X,Z=1}\left[g(X,Y)\right]
 \E_{Y\mid X,Z=1}\big[\hat g(X,Y)Y \mid \hat \eta \big]
  \\
& \hspace{145pt} - \E_{Y\mid X,Z=1}\big[\hat g(X,Y) \mid \hat \eta \big]\E_{Y\mid X,Z=1}\left[g(X,Y)Y\right]  = o_{\PP}(n^{-1/4})      \\
&  \Rightarrow \E_{Y\mid X,Z=1}\left[g(X,Y)\right] \E_{Y\mid X,Z=1}\big[\hat g(X,Y)Y -g(X,Y)Y           \mid \hat \eta  \big]  \\
&  \hspace{110pt}   + \E_{Y\mid X,Z=1}\big[g(X,Y) - \hat g(X,Y)\mid \hat \eta\big]\E_{Y\mid X,Z=1}\left[g(X,Y)Y\right]    = o_{\PP}(n^{-1/4}).
\end{align*}
It follows from the last two equations that
\begin{align*}
         \hat \lambda_{X} -\lambda_{X} & = [1-\hat e(X)]/  \hat{ \E}_{Y\mid X,Z=1}\big[ ( \hat  \xi_X -Y ) \one_{\{Y \leq \hat \xi_X\}}  \big]-[1-e(X)]/  \E_{Y\mid X,Z=1}\big[ ( \xi_X -Y ) \one_{\{Y \leq  \xi_X\}}  \big] \\
                & =             O_{\PP}( |\hat \xi_X - \xi_X| + |\hat e(X) -e(X)|         )               = o_{\PP}(n^{-1/4}).
\end{align*}
Next, we rewrite $\Bias_{Y,Z\mid X}(\hat \phi_3 \mid \hat \eta ) = \mathbb E_{Y,Z\mid X}\Big\{\hat \phi_3(X,Y,Z) - h_*^2(X,Y) \ \Big| \ \hat \eta \Big\}$ as
\begin{equation}\label{equ:bias_6}
\begin{split}
        &       \left\{\frac{e(X)}{\hat e(X)}-1 \right\}\bigg\{ \underbrace{-2 \hat \lambda_X[1-\hat e(X)]\big(\E_{Y\mid X,Z=1}[Y] - \mathbb{\hat E}_{Y\mid X,Z=1}[Y]\big)}_{:=A} \\
        & \hspace{150pt} + \underbrace{- \hat \lambda_X^2\mathbb{ E}_{Y\mid X,Z=1}[ \hat g^2(X,Y)\mid \hat \eta ] + \hat \lambda_X^2  \mathbb{\hat E}_{Y\mid X,Z=1}[\hat g^2(X,Y)]}_{:=B} \bigg\}      \\
&
\underbrace{-2 \hat \lambda_X[1-\hat e(X)] \big(\mathbb E_{Y\mid X,Z=1}[Y] - \mathbb{\hat E}_{Y\mid X,Z=1}[Y] \big)}_{A}  \\
& \hspace{160pt} +  \underbrace{    -  \hat \lambda_X^2\mathbb{ E}_{Y\mid X,Z=1}[ \hat g^2(X,Y)\mid \hat \eta ] + \hat \lambda_X^2  \mathbb{\hat E}_{Y\mid X,Z=1}[\hat g^2(X,Y)]}_{B} \\\
& + \underbrace{2 \hat \lambda_X[e(X)-\hat e(X)] \mathbb{\hat E}_{Y\mid X,Z=1}[(Y-\hat \xi_X)\one_{\{Y>\hat \xi_X \}}]}_{:=C}     + \underbrace{\mathbb{\hat E}_{Y\mid X,Z=1}[\hat h_*^2(X,Y)] -  \mathbb E_{Y\mid X,Z=1}[ h_*^2(X,Y)] }_{:=D}.
\end{split}
\end{equation}
Under Assumption \ref{assumption:nuisance_error}, the first line of \eqref{equ:bias_6} with $A+B$ is $o_{\PP}(n^{-1/2})$.
By the definition of $h_*(X,Y)$ and $\lambda_X$ in \Cref{prop:solutions_new_1},  we have
\begin{align*}
&\E_{Y\mid X,Z=1}[h_*^2(X,Y)] = e(X)\left[        2- e(X)                 \right]  +  \lambda_X^2\E_{Y\mid X,Z=1}[g^2(X,Y)], \\
&\mathbb{ \hat  E}_{Y\mid X,Z=1} \big[\hat h_*^2(X,Y)   \big]   =  \hat e(X) \left[       2- \hat e(X)            \right]+\hat \lambda_X^2 \mathbb{\hat  E}_{Y\mid X,Z=1} [\hat g^2(X,Y)   ].
\end{align*}
Then we rewrite $D$ as
\begin{align*}
D = & \hat \lambda_X^2\mathbb{\hat E}_{Y\mid X,Z=1}[\hat g^2(X,Y)] - \lambda_X^2\ \mathbb{ E}_{Y\mid X,Z=1}[g^2(X,Y)] + [\hat e(X)-e(X)] \left[     2- \hat e(X) -e(X)              \right]         \\
 = & \underbrace{ \hat \lambda_X^2\mathbb{\hat E}_{Y\mid X,Z=1}[\hat g^2(X,Y)] - \lambda_X^2\ \mathbb{ E}_{Y\mid X,Z=1}[g^2(X,Y)]  }_{:= D_1}+ \underbrace{2[ \hat e(X) - e(X)][1- e(X)] }_{:= D_2} + o_{\PP}(n^{-1/2}).
\end{align*}
Next, we can decompose
\[
B+D_1 = 2 \hat \lambda_X^2  \mathbb{\hat E}_{Y\mid X,Z=1}[\hat g^2(X,Y)] -  \lambda_X^2 \E_{Y\mid X,Z=1}[g^2(X,Y)]         - \hat \lambda_{X}^2            \E_{Y\mid X,Z=1}[\hat g^2(X,Y)\mid \hat \eta ],
\]
into two terms, $(B+D_1)_1 $ defined as
\[
2 \hat \lambda_X^2  \mathbb{\hat E}_{Y\mid X,Z=1}[\hat g^2(X,Y)] - (\lambda_X^2+ \hat \lambda_X^2 ) \mathbb{ E}_{Y\mid X,Z=1}[ g^2(X,Y) ] - 2\hat \lambda_X [1-\hat e(X)] [\hat \xi_X-\xi_X], \\
\]
and
$(B+D_1)_2 = \hat \lambda_X^2  \mathbb{ E}_{Y\mid X,Z=1}[  g^2(X,Y) - \hat g^2(X,Y) \mid \hat \eta ] +2\hat \lambda_X [1-\hat e(X)] [\hat \xi_X-\xi_X].$

We first show that $(B+D_1)_2= o_{\PP}(n^{-1/2}).$ In the first term of $(B+D_1)_2$,
\begin{align*}
&\ \mathbb{ E}_{Y\mid X,Z=1}[   g^2(X,Y) - \hat g^2(X,Y) \mid \hat \eta ]  =       \mathbb E_{Y\mid X,Z=1}\big[      (\xi_X-Y)^2 \one_{\{Y \leq \xi_X \}}-(\hat \xi_X-Y)^2 \one_{\{Y \leq \hat \xi_X \}}    \mid \hat \eta \big]    \\
  \leq  &\      \mathbb E_{Y\mid X,Z=1}\big[      0\times \one_{\{Y > \xi_X\vee \hat \xi_X \}} \mid \hat \eta \big]       + \mathbb E_{Y\mid X,Z=1}\big[    (\hat \xi_X -\xi_X )^2 \times \one_{\{\xi_X\wedge\hat \xi_X     < Y \leq   \xi_X\vee \hat \xi_X \}} \mid \hat \eta \big]                \\
                & +\mathbb E_{Y\mid X,Z=1}\big[(\xi_X + \hat \xi_X -2Y ) (\xi_X  -\hat \xi_X ) \times \one_{\{Y \leq \xi_X\wedge \hat \xi_X \}} \mid \hat \eta \big] \\
\leq  & - (\hat \xi_X - \xi_X ) \mathbb E_{Y\mid X,Z=1}\big[(\xi_X -Y ) \one_{\{Y \leq   \xi_X\wedge \hat \xi_X \}} + ( \hat  \xi_X -Y ) \one_{\{Y \leq   \xi_X \wedge \hat \xi_X \}} \mid \hat \eta \big]  +O_{\PP}\big(\big[\hat \xi_X -\xi_X\big]^2\big).
\end{align*}
Then,  $(B+D_1)_2$ can be upper bounded by
\begin{align*}
&\ \hat \lambda_X^2  (\hat \xi_X-\xi_X)\Big(            -       \mathbb E_{Y\mid X,Z=1}\big[(\xi_X -Y ) \one_{\{Y \leq   \xi_X\wedge \hat \xi_X \}} + ( \hat  \xi_X -Y ) \one_{\{Y \leq   \xi_X \wedge \hat \xi_X \}} \mid \hat \eta \big]    \\
&\ \hspace{270pt}          2[1-\hat e(X)]/\hat \lambda_X   \Big) +  o_{\PP}(n^{-1/2})
\\
= &\   2 \hat \lambda_X^2 (\hat \xi_X-\xi_X)\big(       \mathbb{\hat E}\big[( \hat  \xi_X -Y ) \one_{\{Y \leq    \hat \xi_X \}} \big]   -       \mathbb E_{Y\mid X,Z=1}\big[(\xi_X -Y ) \one_{\{Y \leq   \xi_X \}}         \big]  \big)  = o_{\PP}(n^{-1/2}). 
\end{align*}
We next show the remaining terms $A+(B+D_1)_1 + C + D_2= o_{\PP}(n^{-1/2}).$
Using \eqref{equ:step_3}, we first rewrite $(B+D_1)_1$ above as
\begin{align*}
 &\ 2 \hat \lambda_X^2  \mathbb{\hat E}_{Y\mid X,Z=1}[\hat g^2(X,Y)] - 2\lambda_X^2  \mathbb{ E}_{Y\mid X,Z=1}[ g^2(X,Y) ] +(\lambda_X^2 -\hat\lambda_X^2) \mathbb{ E}_{Y\mid X,Z=1}[ g^2(X,Y) ] \\
 &\ - 2\hat \lambda_X [1-\hat e(X)] [\hat \xi_X-\xi_X] \\
=&\ 2\hat \lambda_X [1-\hat e (X)][\xi_X-\hat \Delta_X]  - 2\lambda_X [1-e (X)][\xi_X-\Delta_X]+(\lambda_X^2 -\hat\lambda_X^2) \mathbb{ E}_{Y\mid X,Z=1}[ g^2(X,Y) ]       \\
=&\ 2\hat \lambda_X [1-\hat e (X)][\Delta_X-\hat \Delta_X ] + 2[\xi_X-\Delta_X] \big(\hat \lambda_X [1-\hat e (X)]-\lambda_X [1-e (X)]\big)   \\
&\ +[\lambda_X^2 -\hat\lambda_X^2]\mathbb{ E}_{Y\mid X,Z=1}[ g^2(X,Y) ] \\
=&\ 2\hat \lambda_X [1-\hat e (X)][\Delta_X-\hat \Delta_X ] + 2[\xi_X-\Delta_X]  \big([1-\hat e (X)][\hat \lambda_X - \lambda_X ] + \lambda_X [e (X)-\hat e(X)]\big)    \\
&\ + [\lambda_X+\hat \lambda_X][\lambda_X - \hat \lambda_X][\xi_X-\Delta_X]\mathbb{ E}_{Y\mid X,Z=1}[ g(X,Y) ]    \\
=&\ 2\hat \lambda_X [1-\hat e (X)][\Delta_X-\hat \Delta_X ]  + 2\lambda_X [\xi_X-\Delta_X] [e (X)-\hat e(X)]    \\
&\ + \big[(\lambda_X+\hat \lambda_X)\mathbb{ E}_{Y\mid X,Z=1}[ g(X,Y) ]+ 2\hat e(X)-2 \big]\big[\lambda_X - \hat \lambda_X\big]\big[\xi_X-\Delta_X\big]   \\
=&\ 2\hat \lambda_X [1-\hat e (X)][\Delta_X-\hat \Delta_X ]  + 2\lambda_X [\xi_X-\Delta_X] [e (X)-\hat e(X)]     \\
& + 2 \big[\mathbb{ E}_{Y\mid X,Z=1}[ h_*(X,Y) ]-1 \big]\big[\lambda_X - \hat \lambda_X\big]\big[\xi_X-\Delta_X\big] +o_{\PP}(n^{-1/2})     \\
 = &\ 2\hat \lambda_X [1-\hat e (X)][\Delta_X-\hat \Delta_X ]  + 2\lambda_X [\xi_X-\Delta_X] [e (X)-\hat e(X)]+o_{\PP}(n^{-1/2}) \\
\equiv &\   (B+D_1)_{1,1} + (B+D_1)_{1,2} + o_{\PP}(n^{-1/2}).
\end{align*}
Recall that $\Delta_X  = \E_{Y\mid X,Z=1 }[Y]  -\theta/[1-e(X)]$
and  $\hat \Delta_X  = \mathbb{\hat E}_{Y\mid X,Z=1 }[Y]  - \theta/[1-\hat e(X)].$
Then,
\begin{align*}
A + (B+D_1)_{1,1} = &\ 2\hat \lambda_X \big[1-\hat e (X)\big]\big[ \mathbb{\hat E}_{Y\mid X,Z=1}[Y]-\hat \Delta_X -\E_{Y\mid X,Z=1}[Y] +\Delta_X \big] \\
=&\ 2\theta\lambda_X \big[\hat e(X) - e(X)\big]/\big[1-e(X)\big] +  o_{\PP}(n^{-1/2}).
\end{align*}
Next, we write $C+ D_2$ as
\begin{align*}
&\ 2\big[\hat e(X)- e(X)\big]\big\{\hat \lambda_X       \mathbb{\hat E}_{Y\mid X,Z=1}\big[(\hat \xi_X - Y)\one_{\{Y>\hat \xi_X \}}\big]+1-e(X)            \big\}\\
= &\ 2\lambda_X \big[\hat e(X)- e(X)\big]\big\{ \E_{Y\mid X,Z=1}\big[(\xi_X - Y)\one_{\{Y> \xi_X \}}\big]+\E_{Y\mid X,Z=1}\big[g(X,Y)\big]          \big\} +        o_{\PP}(n^{-1/2})       \\
= &\ 2 \lambda_X\big[\hat e(X)- e(X)\big]       (\xi_X-\E_{Y\mid X,Z=1}[Y])       + o_{\PP}(n^{-1/2}).
\end{align*}
Combining this expression with $A + (B+D_1)_{1,1}$ above, we have
\begin{align*}
A + (B+D_1)_{1,1} + C+ D_2 & = 2 \lambda_X     \big[\hat e(X)- e(X)\big] \big[ \xi_X - \Delta_X                \big] + o_{\PP}(n^{-1/2}) \\
& = -(B+D_1)_{1,2} + o_{\PP}(n^{-1/2}).
\end{align*}
which implies that  $A+B+C+D =  o_{\PP}(n^{-1/2}).$
\end{proof}

\section{Theory of multiplier bootstrap}\label{sect:theory.mb}

In this section, we present the theory of multiplier bootstrap, which verifies the uniform validity of the confidence bands introduced in \Cref{sect:uni}.
Following the notation introduced in \Cref{sect:notation_if,sect:uni}, we denote the uncentered EIF of $\psi_*$  by $\phi_*$
and its variance $\sigma_{*}^2.$ In one-step estimation, we estimate the EIF by $\hat \phi_*(\cdot) = \hat \phi_*(\cdot; \hat \eta)$, where $\hat \eta$ is
the nuisance estimator, e.g., $\hat
\eta = (\hat e,\hat p_{Y\mid X,Z=1}),$ in the average-case sensitivity model. We implement cross-fitting and denote the cross-fitted estimators similarly. We state the regularity conditions on $\hat \phi_*(\cdot)$ instead of every $\hat \phi_*^{(k)}(\cdot) = \hat \phi_*(\cdot; \hat \eta_{-k} )$ for all $k\in [K]$ in cross-fitting.
The following results apply to all the target parameters (bounds and sensitivity value) introduced in this paper.

\begin{theorem}\label{thm:gaussian_2}
Under Assumptions \ref{assumption:first}, \ref{assumption:nuisance_error} and the conditions
\begin{enumerate}
        \item $\sup_{\beta \in \mathcal{D}_*}|\hat \sigma_{*,\text{cf}}^2(\beta  )/\sigma_*^2(\beta ) - 1 | = o_{\PP}(1),$
        \item $\E\big\{ \big[ \sup_{\beta  \in \mathcal{D}_*}| \hat \phi_*(O;\beta )-\phi_*(O;\beta )\big| \big]^2  \  \big| \   \hat \eta              \big\}= o_{\PP}(1),$
        \item $\sup_{\beta \in \mathcal{D}_*}\Bias(\hat \phi_* \mid \hat \eta)  = o_{\PP}(n^{-1/2}),$
         \item $\hat \phi(o;\beta)$ is a Lipschitz continuous function of
        $\beta$ for any $o\in \O$,
\end{enumerate}
it holds that
$\sqrt{n}\Big[\hat\psi_{*,\text{cf}}(\cdot) - \psi_*(\cdot)\Big]/\hat \sigma_{0,\text{cf}}^{2} (\cdot)\xrightarrow[]{d}\GG_*(\cdot)=\left[\phi_*(O;\cdot) - \psi_*(\cdot)\right]/ \sigma_*(\cdot),$ which is a mean 0 Gaussian process defined in the space of bounded functions on $\mathcal{D}_*.$
\end{theorem}
The first and second conditions require our variance and influence function estimators to be consistent uniformly. The third condition is a stronger version of Propositions \ref{thm:double_bias} and \ref{thm:bias}. It requires the $\beta$-dependent nuisance parameters to be estimated with errors $o_{\PP}(n^{-1/4}), $ e.g., $\sup_{\Gamma \in \mathcal{D}}\|\hat Q(X) - Q(X) \|=o_{\PP}(n^{-1/4})$ and $\sup_{\lambda\in \mathcal{D}_{12}}\|\hat \xi_X-\xi_X\|=o_{\PP}(n^{-1/4}).$  These conditions can be derived from
bias bounds in \Cref{equ:sketch}, \eqref{equ:bias_bound} and \eqref{equ:bias_bound_2}
in \Cref{appendix:double_bias}. We will verify that all our EIFs are Lipschitz in \Cref{proof:thm_4}, which implies that the last condition holds if Assumption \ref{assumption:first}
holds and the estimators of $\beta$-dependent nuisance parameters are
Lipschitz. For example, we need $\hat p_{Y\mid X,Z=1}$ to be continuous so  that $\hat
\xi_X$ behaves smoothly w.r.t. $\lambda\in \mathcal{D}_{12}$ in the Lagrangian formulation of the average-case sensitivity model.

As explained in \Cref{sect:uni}, MB adjusts the critical value in the point-wise CIs to achieve uniform validity. In \Cref{thm:band_2} below, the first equation defines the critical value $\hat q_{*,\alpha}$
as a quantile.  The second equation means the union of the CIs $\hat C_{*,\alpha}(\beta)$ using $\hat q_{*,\alpha}$ gives a simultaneous confidence band.  We refer to \citet[Theorem 4]{kennedy2019nonparametric}
for a proof of \Cref{thm:band_2}.

\begin{theorem}\label{thm:band_2}
        In the setup of \Cref{thm:gaussian_2}, let $\hat q_{*,\alpha}$ denote the critical value of the supremum of the multiplier bootstrap process such that
\[
\P\left\{\sup_{\beta\in \mathcal{D}_*} \left|\sqrt{n}\PP_n \left[ A \left(\hat \phi_{*,\text{cf}}(O;\beta) -  \hat  \psi_{*,\text{cf}}(\beta)          \right)/\hat \sigma_{*,\text{cf}}(\beta) \right]                   \right|         \geq \hat q_{*,\alpha} \        \bigg| \         O_{[n]}  \right\} = \alpha,
\]
where  $A_{[n]}$ are i.i.d Rademacher variables drawn independently of $O_{[n]}$. Then,
\[
\PP \Big\{                \psi_{*} (\beta) \not\in \hat C_{*,\alpha}(\beta)  :=  \Big[\hat\psi_{*,\text{cf}}(\beta)\pm \hat q_{*,\alpha}\hat \sigma_{*,\text{cf}}(\beta)/\sqrt n \Big]    , \forall \hspace{1pt} \beta\in \mathcal{D}_*\Big\}= \alpha+o(1).
\]
\end{theorem}

\subsection{Proof of Theorem \ref{thm:gaussian_2}}\label{proof:thm_4}

The proof of \Cref{thm:gaussian_2} comes from
\citet[Section 8.4]{kennedy2019nonparametric}. We will restate it with the conditions proposed in our theorem. After that, we will verify the Lipschitz continuity of our influence functions.

\begin{proof}
Denote the full-sample empirical process by $\GG_n = \sqrt{n} (\PP_n - \PP).$ Define
\[
\hat    \Omega_n(\beta )  = \sqrt{n}\left[    \hat \psi_{*,\text{cf}}  (\beta ) - \psi_* (\beta )                 \right]/\hat \sigma_{*,\text{cf}}(\beta )
\ \text{ and }  \
        \Omega_n(\beta )  = \GG_n \left\{        \left[\phi_*(O;\beta) -\psi_* (\beta)                     \right]/\sigma_*(\beta)  \right\},
\]
The proof is completed by verifying two statements:
\begin{equation}\label{equ:statements}
\Omega_n(\cdot)\xrightarrow[]{d} \GG(\cdot) \in L^{\infty}(\mathcal{D}) \ \text{ and } \   \sup_{\theta\in \mathcal{D}_*} | \hat \Omega_n(\theta) - \Omega_n(\theta)        | = o_{\PP}(1).
\end{equation}
In \Cref{sect:lc}, we will verify that the function class $\mathcal F = \{\phi_*(O;\beta):\beta \in \mathcal{D}_* \}$ is Lipschitz under Assumptions \ref{assumption:first} and \ref{assumption:consistency}, which proves the first statement. The main argument for the proof is that $\mathcal F$  has a finite bracketing integral so Donsker; see \citet[Chapter 2.5.6]{vaart1996weak} and  \citet[Section 4.3]{kennedy2016semiparametric} for more details.
For the second statement, \citet{kennedy2019nonparametric} shows that for any $\beta \in \mathcal{D}_*,$
\begin{equation}\label{equ:sigma_step}
\begin{split}
\sup_{\beta\in \mathcal{D}_*} |\hat \Omega_n  (\beta ) - \Omega_n (\beta )|
& \lesssim \sup_{\beta\in \mathcal{D}_*}|\tilde \Omega_n  (\beta ) - \Omega_n (\beta )| + \sup_{\beta\in \mathcal{D}_*}|\sigma_*(\beta)/\hat \sigma_{*,\text{cf}}(\beta)-1 | \\
& =  \sup_{\beta\in \mathcal{D}_*}|\tilde \Omega_n  (\beta ) - \Omega_n (\beta )| + o_{\PP}(1),
\end{split}
\end{equation}
where
$\tilde         \Omega_n(\beta )  = \sqrt{n}\left[ \hat        \psi_{*,\text{cf}}  (\beta) - \psi_*  (\beta )                     \right]/\sigma_*(\beta) $, and the equality follows from the first condition in the theorem. Let $\GG_n^{(k)} =\sqrt{m}(\PP_m^{(k)}  - \PP) $. \citet{kennedy2019nonparametric} shows that
\begin{equation}\label{equ:sigma_step_2}
\begin{split}
&\ \tilde \Omega_n  (\beta ) - \Omega_n (\beta ) \\
= &\  \frac{\sqrt{n}
}{K\sigma (\beta)} \sum_{k=1}^{K}\left(\frac{1}{\sqrt{m}} \GG_n^{(k)}\left\{\hat \phi_*^{(k)}(O;\beta) -  \phi_*(O;\beta) \right\}  + \Bias \left\{\hat \phi_*^{(k)}(O;\beta) \ \big| \ \hat \eta_{-k}   \right\} \right) \\
\equiv &\  B_{n,1}(\beta ) + B_{n,2}(\beta ),
\end{split}
\end{equation}
By the proof in \Cref{sect:lc} and the fourth condition, we know that the function class
 $\mathcal{F}_n^{(k)} = \big\{ \hat \phi_*^{(k)}(\cdot;\beta) - \phi_*(\cdot;\beta):\beta \in \mathcal{D}_* \big\}$ is Lipschitz. Then
\begin{equation}\label{equ:proof_b_n_1}
\sup_{\beta\in \mathcal{D}_*}|B_{n,1}(\beta )| \lesssim \max_{k\in [K]}\sup_{k\in \mathcal{F}^{(k)}}|\GG_n(f)| = o_{\PP}(1),
\end{equation}
by the second condition in the theorem. Under the third condition,
\[
        \sup_{\beta \in \mathcal{D}_*}   \Bias\left\{\hat \phi_*(O;\beta  ) \mid \hat \eta \right\}               = o_{\PP}(n^{-1/2}) \Leftrightarrow  \sup_{\beta \in \mathcal{D}_*} \Bias\left\{\hat \phi_*^{(k)}(O;\beta  )\mid \hat \eta_{-k}  \right\}         = o_{\PP}(n^{-1/2}),
\]
which implies that $\sup_{\beta \in \mathcal{D}_*}B_{n,2}(\beta ) = o_{\PP}(1)$.
Taken together with
\eqref{equ:proof_b_n_1}, we verify that \eqref{equ:sigma_step_2} = $o_{\PP}(1). $ Then the second statement in \eqref{equ:statements} holds through \eqref{equ:sigma_step}.
\end{proof}

\subsection{Lipschitz continuity}\label{sect:lc}

Let $\lesssim_{\PP }$ denote smaller than up to some $\PP$-integrable function of $O$. The following definition of
Lipschitz continuity follows from \citet[Chapter 2.7.4]{vaart1996weak} and  \citet[Section 4.3]{kennedy2016semiparametric}.

\begin{lemma}\label{lemma:psi_lip}
Under Assumption \ref{assumption:first}, $\mathcal{F} = \left\{\phi(O;\Gamma): \Gamma\in
\mathcal{D} = [\Gamma_{\min }, \Gamma_{\max }]  \right\}$ is Lipschitz such that
\[
|\phi(O;\check \Gamma) - \phi(O;\Gamma)|\lesssim_{\PP }| \check \Gamma- \Gamma|,\ \forall \Gamma,\check\Gamma \in \mathcal{D}.
\]
\end{lemma}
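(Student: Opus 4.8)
The plan is to exploit the explicit form $\phi = \phi_+ + \phi_-$ from \Cref{thm:if_psi_plus} and reduce the Lipschitz property of the whole class to that of a handful of building blocks, combined through the elementary rule that a finite sum of products of functions, each Lipschitz in $\Gamma$ with a $\mathbb P$-integrable envelope, is again Lipschitz. Reading off the $\Gamma$-dependence, $\phi_+$ is assembled from: the weight $W_+(X)=(1-\Gamma)e(X)+\Gamma$ and the coefficient $(1-\Gamma)Z+\Gamma$; the level $\alpha_* = \Gamma/(1+\Gamma)$; the quantile $Q(X)=Q(X;\Gamma)$; the truncated mean $\mu_+(X;\Gamma)$; and the composite $g_+(X,Y;\Gamma):=(1-\alpha_*-\one_{\{Y>Q(X)\}})Q(X)+Y\one_{\{Y>Q(X)\}}$ of \eqref{equ:g_+} (symmetrically for $\phi_-$ with $\Gamma^{-1}$). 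The factor $Z/e(X)$ carries no $\Gamma$-dependence and is bounded by $1/\rho$ under overlap. First I would dispatch the elementary pieces: on the compact interval $\mathcal D=[\Gamma_{\min},\Gamma_{\max}]\subset(1,\infty)$ both $\Gamma\mapsto\Gamma$ and $\Gamma\mapsto\Gamma^{-1}$ are Lipschitz, the latter because $\Gamma\geq\Gamma_{\min}>1$ bounds $|{-}\Gamma^{-2}|$; since $e(X)\in[\rho,1-\rho]$, the weights $W_\pm$, the coefficients, and $\alpha_*$ are all bounded and Lipschitz in $\Gamma$ with constants depending only on $\rho,\Gamma_{\min},\Gamma_{\max}$.

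The crux is the quantile $Q(X;\Gamma)=\inf\{y:F_{Y\mid X,1}(y)\geq\alpha_*\}$. Here I would argue that, because $\alpha_*$ ranges over a compact subinterval of $(1/2,1)$ as $\Gamma$ ranges over $\mathcal D$, and $Q$ is bounded by \Cref{assumption:consistency}, the quantiles stay in a fixed compact set $\mathcal K\subset\mathbb R$ for all $(X,\Gamma)$. On $\mathcal K$ the density $p_{Y\mid X,1}$ is positive and continuous by \Cref{assumption:first}; using joint continuity in $(x,y)$ and compactness of $\mathcal X\times\mathcal K$, it is bounded below by some $c>0$ uniformly in $X$. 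The inverse-function theorem then gives $\partial_{\alpha_*}Q = 1/p_{Y\mid X,1}(Q)\leq 1/c$, whence $|Q(X;\check\Gamma)-Q(X;\Gamma)|\leq c^{-1}|\alpha_*(\check\Gamma)-\alpha_*(\Gamma)|\lesssim|\check\Gamma-\Gamma|$ uniformly in $X$. Establishing this uniform-in-$X$ density lower bound is the main obstacle, since mere positivity of the density does not preclude the quantile from sitting in an arbitrarily thin tail; it is the boundedness of $Q$ together with compactness that rescues the argument.

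Next I would handle $g_\pm$ and $\mu_\pm$. The key observation for $g_+$ is that, although $\one_{\{Y>Q\}}$ jumps as $\Gamma$ varies, $g_+$ is continuous across the jump: at $Y=Q$ both branches equal $(1-\alpha_*)Q$, so the discontinuity of $-\one_{\{Y>Q\}}Q$ is exactly cancelled by that of $Y\one_{\{Y>Q\}}$. Thus $g_+$ is piecewise-linear in $(Q,\alpha_*)$ with slopes $-\alpha_*$ and $1-\alpha_*$ in $Q$ and slope $-Q$ in $\alpha_*$, all bounded independently of $Y$, giving $|g_+(X,Y;\check\Gamma)-g_+(X,Y;\Gamma)|\lesssim|\check Q-Q|+|Q|\,|\check\alpha_*-\alpha_*|\lesssim|\check\Gamma-\Gamma|$, with $|Q|$ bounded. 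For $\mu_+(X;\Gamma)=\mathbb E_{Y\mid X,1}[Y\one_{\{Y>Q\}}]$ the difference reduces to $\mathbb E_{Y\mid X,1}[|Y|\one_{\{Q\wedge\check Q<Y<Q\vee\check Q\}}]$; on this thin band $|Y|$ is comparable to the bounded $|Q|$ while the band carries $\mathbb P$-mass $\lesssim|\check Q-Q|$ by the density upper bound, so $|\mu_+(X;\check\Gamma)-\mu_+(X;\Gamma)|\lesssim|\check Q-Q|\lesssim|\check\Gamma-\Gamma|$.

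Finally I would assemble the factors. Each factor entering $\phi$ is Lipschitz in $\Gamma$, and each is bounded ($Z/e\leq1/\rho$, $W_\pm$, the coefficients, $\mu_\pm$, via overlap and \Cref{assumption:consistency}) except for the value of $g_\pm$, which contains the $\mathbb P$-integrable term $Y\one$. Applying $|fg-\check f\check g|\leq|f|\,|g-\check g|+|\check g|\,|f-\check f|$ term by term across the finitely many products in $\phi_+ + \phi_-$, the only place where an unbounded value multiplies a $\Gamma$-increment is of the form $|g_+(O;\Gamma)|\,|W_+(\check\Gamma)-W_+(\Gamma)|\lesssim|Y|\,|\check\Gamma-\Gamma|$; since $Y$ is square-integrable by \Cref{assumption:first}, the resulting envelope is dominated by a constant multiple of $1+|Y|$, hence $\mathbb P$-integrable and independent of $\Gamma$. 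This yields exactly the asserted bound $|\phi(O;\check\Gamma)-\phi(O;\Gamma)|\lesssim_{\mathbb P}|\check\Gamma-\Gamma|$, and the same reasoning (with $\Gamma^{-1}$) covers $\phi_-$.
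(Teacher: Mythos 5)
Your proposal is correct and follows essentially the same route as the paper: decompose $\phi=\phi_++\phi_-$, show each building block ($W_\pm$, the coefficients, $Q$, the truncated means, and the kinked term $[Y-Q]\one_{\{Y>Q\}}$) is Lipschitz in $\Gamma$, and combine via the product rule for bounded Lipschitz functions; your observation that $g_+$ is continuous across $Y=Q$ is the same cancellation the paper exploits through the inequality $|[Y-\check Q]\one_{\{Y>\check Q\}}-[Y-Q]\one_{\{Y>Q\}}|\leq|\check Q-Q|$. The only divergence is that you work to establish a uniform-in-$X$ lower bound on the density, which is more than is required: the paper's $\lesssim_{\mathbb P}$ convention permits the Lipschitz constant to be an $O$-dependent $\mathbb P$-integrable function, so the pointwise bound $dQ/d\Gamma=(1+\Gamma)^{-2}/p_{Y\mid X,1}(Q(X))<\infty$ suffices.
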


\begin{proof}
From \Cref{thm:if_psi_plus}, the uncentered EIF $\phi_+(O;\Gamma)$ can be rewritten as
\begin{align*}
 &\  \frac{Z W_+(X)}{e(X)}\left\{      \left[Y-Q(X) \right]\one_{\{Y>Q(X)\}} - \E_{Y\mid X,Z=1}\left[\left[Y-Q(X) \right]        \one_{\{Y>Q(X)\}}       \right] \right\} \\
&\ + \left[(1-\Gamma)Z+\Gamma \right] \left( \E_{Y\mid X,Z=1}\left[\left[Y-Q(X) \right]   \one_{\{Y>Q(X)\}}       \right] \ + Q(X) \P_{Y\mid X,Z=1}\{Y>Q(X)\}\right).
\end{align*}
Recall from \Cref{prop:quantile_balancing}
 that $Q(X)$ is the $\Gamma/(1+\Gamma)$-quantile. Let $\check Q(X)$
denote the $\check \Gamma/(1+\check \Gamma)$-quantile.
Under Assumption \ref{assumption:first},  $p_{Y\mid X,Z=1}(Q(X))\neq 0$ for any $\Gamma\in \mathcal{D}$. Then,
\begin{equation}\label{equ:lemma_4_derivative}
\frac{d Q(X) }{d \Gamma} = (1+\Gamma)^{-2}/p_{Y\mid X,Z=1}(Q(X))<\infty,
\end{equation}
which shows that $Q(X)$ is a Lipschitz function of $\Gamma$. So is $\PP_{Y\mid X,Z=1}\{Y> Q(X)\}$ because \[
|\PP_{Y\mid X,Z=1}\{Y>\check Q(X)\} - \PP_{Y\mid X,Z=1}\{Y>Q(X)\} |\lesssim  |\check Q(X) - Q(X) |.
\]
Similarly, it is straightforward to show that
\begin{align*}
 \big |\big[Y-\check Q(X) \big]\one_{\{Y-\check Q(X)> 0 )\}} - \left[Y-Q(X) \right]\one_{\{Y-Q(X)>0\}} \big |
\leq &\  \big|\big[Y-\check Q(X) \big] - \left[Y-Q(X) \right]\big|  \\
= &\ |\check Q(X)  - Q(X)  |.
\end{align*}
Then, $\left[Y-Q(X) \right]\one_{\{Y>Q(X)\}}$ and
$\E_{Y\mid X,Z=1}\left[\left[Y-Q(X) \right]       \one_{\{Y>Q(X)\}}       \right]$ are Lipschitz functions of $\Gamma.$
So are $(1-\Gamma)Z + \Gamma $ and $W_+(X) = (1-\Gamma)e(X) + \Gamma $ because their derivatives with respect to  $\Gamma$ are upper bounded by 1. We have proven that all the components in $\phi_+(O;\Gamma)$ are  Lipschitz.  Under Assumption \ref{assumption:nuisance_error}, given $O$, all of them are also bounded functions of $\Gamma\in \mathcal{D}.$
As the products and sums of
these bounded Lipschitz functions, $\phi_+(O;\Gamma)$ is a Lipschitz function of $\Gamma$.
Next, we rewrite the uncentered EIF  $\phi_-$ as
\begin{align*}
\phi_-(O;\Gamma)
=   \frac{Z W_-(X)}{e(X)}\left\{      \left[Y-Q(X) \right]\one_{\{Y<Q(X)\}} - \E_{Y\mid X,Z=1}\left[\left[Y-Q(X) \right]        \one_{\{Y<Q(X)\}}       \right] \right\} \\
+ \left[(1-\Gamma)Z+\Gamma \right] \big( \E_{Y\mid X,Z=1}\left[\left[Y-Q(X) \right]   \one_{\{Y<Q(X)\}}       \right] + Q(X) \PP_{Y\mid X,Z=1}\{Y<Q(X)\} \big).
\end{align*}
Using the same proof for $\phi_+(O;\Gamma)$ above, we can show that $\phi_-(O;\Gamma)  $ is a Lipschitz function of $\Gamma$. Then,
$\phi(O;\Gamma) = \phi_+(O;\Gamma) + \phi_-(O;\Gamma) $ is also a Lipschitz function of $\Gamma$.
\end{proof}

\begin{lemma}\label{lemma:psi_12_lip}
Under Assumption \ref{assumption:first},
$\mathcal{F}_j = \left\{\phi_j(O;\lambda ): \lambda \in \mathcal{D}_{12} =  [\lambda_{\min },\lambda_{\max}]  \right\}$ is Lipschitz such that
\[
|\phi_j(O;\check \lambda ) - \phi_j(O; \lambda )|\lesssim_{\PP}|\check \lambda - \lambda  |, \ \forall  \lambda,\check\lambda \in \mathcal{D}_{12} \ \text{ and }  j=1,2.
\]
\end{lemma}

\begin{proof}
The uncentered EIFs $\phi_1(O;\lambda )$ and $ \phi_2(O;\lambda )$ are given in \Cref{prop:eif_small}.
By the definition of $\xi_X$ in \Cref{prop:double_ipw}, we can view $\lambda$ as a strictly decreasing function of $\xi_X$:
\[
\zeta_1(\xi_X ) := [1-e(X)]/ \mathbb E_{Y\mid X,Z=1}\left[(\xi_X - Y) \one_{\{Y\leq \xi_X\}}\right].
\]
Its inverse $\xi_X = \zeta_1^{-1}(\lambda) $ is a strictly decreasing function of $\lambda$, with a negative and bounded derivative
\begin{equation}\label{equ:lemma_7_step}
\left(\frac{d \zeta_1(\xi_X )}{d \xi_X } \right)^{-1} = -  \frac{1-e(X)}{\mathbb E_{Y\mid X,Z=1} \big[ \one_{\{Y\leq \zeta_1^{-1}(\lambda)\}} \big]} \mathbb E_{Y\mid X,Z=1}^2 \big[(\zeta_1^{-1}(\lambda)- Y) \one_{\{Y\leq \zeta_1^{-1}(\lambda)\}}\big],
\end{equation}
for any $\lambda \in [\lambda_{\min},\lambda_{\max}].$ This implies that $\xi_X = \zeta_1^{-1}(\lambda) $ is a Lipschitz function of $\lambda$, and that
\[
|(\check \xi_{X}-Y)\one_{\{Y\leq \check \xi_{X}\}} - (\xi_{X}-Y)\one_{\{Y\leq \xi_{X}\}} | \leq |       (\check \xi_{X}-Y) -    (\xi_{X}-Y)             |  = |\check \xi_{X} - \xi_{X}|
\lesssim_{\PP} |  \check \lambda          - \lambda       |.
\]
Since the product of bounded Lipschitz functions is Lipschitz,
$h_*(X,Y),$ $h_*^2(X,Y), h_*(X,Y)Y$ and their expectations are also Lipschitz functions of $\lambda.$
For any $\xi_X \in [ \zeta_1^{-1}(\lambda_{\max}), \zeta_1^{-1}(\lambda_{\min}) ],$
$1/\mathbb E_{Y\mid X,Z=1}  \big[\one_{\{Y\leq \xi_X \}}  \big]$ and $\mathbb E_{Y\mid X,Z=1}  \big[Y \one_{\{Y\leq \xi_X \}}  \big]$ have bounded derivatives with respect to $\xi_X$. Then their derivatives with respect to $\lambda$ are also bounded by the chain rule and the bounded derivative in \eqref{equ:lemma_7_step}.  Since all the components of the uncentered EIFs
$ \phi_1(O;\lambda ) $ and $ \phi_2(O;\lambda ) $ are bounded Lipschitz functions of  $\lambda\in \mathcal{D}_{12}$, we know that they are also Lipschitz functions of $\lambda\in \mathcal{D}_{12}$.
\end{proof}

\begin{lemma}\label{lemma:psi_3_lip}
Under Assumption \ref{assumption:first},
 $\mathcal{F}_3 = \left\{\phi_3(O;\theta ): \theta \in \mathcal{D}_{3} =  [\theta_{\min },\theta_{\max}]  \right\}$ is Lipschitz such that
 \[
 |\phi_j(O;\check \theta ) - \phi_j(O; \theta)|\lesssim_{\PP}|\check \theta - \theta  |,
 \ \forall \theta,\check\theta \in \mathcal{D}_{3}.
 \]
\end{lemma}

\begin{proof}
    At the end of \Cref{appendix:solutions_new_1}, we show that the function $f_{\theta,X}(\xi)$ in \Cref{prop:solutions_new_1} has a positive derivative w.r.t $\xi.$ Thus, the root $\xi_X$ is a strictly decreasing function of $\theta$.
    Similarly,
$\lambda_X = [1-e(X)]/\mathbb E\left[(\xi_X - Y) \one_{\{Y\leq \xi_X\}}\right]$
is a strictly decreasing function of $\xi_X$, i.e., a strictly increasing function of $\theta$.
Under the assumptions, their derivative w.r.t $\theta\in \mathcal{D}_3$ is bounded. Hence, $\xi_X$ and $\lambda_X$ are Lipschitz functions of $\theta.$
The rest of the proof follows the same steps in \Cref{lemma:psi_12_lip}, verifying each component of the uncentered EIF $\phi_3(O;\theta )$ in \Cref{prop:eif_large} is Lipschitz; details are omitted here.
\end{proof}

\end{document}